  \providecommand\BibTeX{{%
    \normalfont B\kern-0.5em{\scshape i\kern-0.25em b}\kern-0.8em\TeX}}}
\begin{document}
\newcommand{\preset}[2]{\overleftarrow{N}_{#1}(#2)}
\newcommand{\poset}[2]{\overrightarrow{N}_{#1}(#2)}
\newcommand{\fullset}[2]{N_{#1}(#2)}
\newcommand{\inset}[2]{I_{#1}(#2)}
\newcommand{\outset}[2]{O_{#1}(#2)}

\newcommand{\graph}{G}
\newcommand{\vSet}{V}
\newcommand{\eSet}{E}

\newcommand{\gnncomp}{\times}
\newcommand{\relpar}{\frown}
\newcommand{\relcomp}{\ast}
\newcommand{\parcomp}{\otimes}
\newcommand{\seqcomp}{\odot}
\newcommand{\starop}[1]{#1^{\star}}
\newcommand{\choiceop}{\oplus}
\newcommand{\id}{\iota}
\newcommand{\simop}{\simeq_\epsilon}
\newcommand{\mG}{\mu\mathcal{G}}
\newcommand{\len}[1]{|#1|}

%%% types
\newcommand{\lblfunv}{\eta} % prints the symbol for node-labeling function
\newcommand{\lblfunvtype}[1]{H[#1]} % prints the type of a node-labeling function. Argument is the type e.g. T, T_1, T_2, etc.
\newcommand{\lblfunvtypews}{H}
\newcommand{\lblfune}{\xi} %  prints the symbol for edge-labeling function
\newcommand{\lblfunetype}[1]{\Xi[#1]} % % prints the type of an edge-labeling function. Argument is the type e.g. T, T_1, T_2, etc.

\newcommand{\gnn}{\phi_{\graph,\xi}} % prints the gnn symbol
\newcommand{\gnnws}{\phi} % prints the gnn symbol without the subscript
\newcommand{\gnntype}[2]{\Phi_{\graph,\xi}[#1, #2]} % prints the type of a gnn
\newcommand{\gnntypews}{\Phi_{\graph,\xi}}
\newcommand{\gnnorder}{\sqsubseteq_{\gnntypews}}
\newcommand{\typedgnnorder}[2]{\sqsubseteq_{\gnntype{#1}{#2}}}

% denotational semantics
\newcommand{\ds}[1]{\mathcal{S}_{ds}|[#1|]^{\graph,\lblfune}}

% structural operational semantics
\newcommand{\sosf}[1]{\mathcal{S}_{sos}|[#1|]^{\graph,\lblfune}}
\newcommand{\soscomp}[2]{\langle #1 , #2 \rangle}

\newcommand{\sos}[5][]{\soscomp{#2}{#3} \rightarrow_{\graph, \lblfune}^{#1} \soscomp{#4}{#5}}

\newcommand{\sosv}[4][]{\soscomp{#2}{#3} \rightarrow_{\graph, \lblfune}^{#1}  #4} 

\newcommand{\vsos}[4][]{#2 \rightarrow_{\graph, \lblfune}^{#1} \soscomp{#3}{#4}}

\newcommand{\vsosv}[3][]{#2 \rightarrow_{\graph, \lblfune}^{#1} #3}

%%
%% The "title" command has an optional parameter,
%% allowing the author to define a "short title" to be used in page headers.
\title{The \texorpdfstring{$\mG$}{mG} Language for Programming Graph Neural Networks}

%%
%% The "author" command and its associated commands are used to define
%% the authors and their affiliations.
%% Of note is the shared affiliation of the first two authors, and the
%% "authornote" and "authornotemark" commands
%% used to denote shared contribution to the research.
\author{Matteo Belenchia}
\orcid{0000-0003-4988-0566}
\affiliation{
  \institution{University of Camerino}
  %\streetaddress{Via Andrea D'Accorso 16}
  \city{Camerino}
  \state{Macerata}
  \country{Italy}
  %\postcode{62032}
}
\email{matteo.belenchia@unicam.it}

\author{Flavio Corradini}
\orcid{0000-0001-6767-2184}
\affiliation{
  \institution{University of Camerino}
  %\streetaddress{Via Andrea D'Accorso 16}
  \city{Camerino}
  \state{Macerata}
  \country{Italy}
  %\postcode{62032}
}
\email{flavio.corradini@unicam.it}

\author{Michela Quadrini}
\orcid{0000-0003-0539-0290}
\affiliation{
  \institution{University of Camerino}
  %\streetaddress{Via Andrea D'Accorso 16}
  \city{Camerino}
  \state{Macerata}
  \country{Italy}
  %\postcode{62032}
}
\email{michela.quadrini@unicam.it}

\author{Michele Loreti}
\orcid{0000-0003-3061-863X}
\affiliation{
  \institution{University of Camerino}
  %\streetaddress{Via Andrea D'Accorso 16}
  \city{Camerino}
  \state{Macerata}
  \country{Italy}
  %\postcode{62032}
}
\email{michele.loreti@unicam.it}

%%
%% By default, the full list of authors will be used in the page
%% headers. Often, this list is too long, and will overlap
%% other information printed in the page headers. This command allows
%% the author to define a more concise list
%% of authors' names for this purpose.
\renewcommand{\shortauthors}{Belenchia et al.}

%%
%% The abstract is a short summary of the work to be presented in the
%% article.
\begin{abstract}
Graph neural networks form a class of deep learning architectures specifically designed to work with graph-structured data. As such, they share the inherent limitations and problems of deep learning, especially regarding the issues of explainability and trustworthiness. We propose $\mG$, an original domain-specific language for the specification of graph neural networks that aims to overcome these issues. The language's syntax is introduced, and its meaning is rigorously defined by a denotational semantics. An equivalent characterization in the form of an operational semantics is also provided and, together with a type system, is used to prove the type soundness of $\mG$. We show how $\mG$ programs can be represented in a more user-friendly graphical visualization, and provide examples of its generality by showing how it can be used to define some of the most popular graph neural network models. The formal language we developed is well-suited for applying static analysis techniques to formally verify the behavior of graph neural networks.%The formal language we defined is amenable to static analysis techniques with the aim of formally verifying graph neural networks.
\end{abstract}

%%
%% The code below is generated by the tool at http://dl.acm.org/ccs.cfm.
%% Please copy and paste the code instead of the example below.
%%
\begin{CCSXML}
<ccs2012>
   <concept>
       <concept_id>10010147.10010257.10010293.10010294</concept_id>
       <concept_desc>Computing methodologies~Neural networks</concept_desc>
       <concept_significance>500</concept_significance>
       </concept>
   <concept>
       <concept_id>10010147.10010169.10010175</concept_id>
       <concept_desc>Computing methodologies~Parallel programming languages</concept_desc>
       <concept_significance>300</concept_significance>
       </concept>
   <concept>
       <concept_id>10002950.10003624.10003633</concept_id>
       <concept_desc>Mathematics of computing~Graph theory</concept_desc>
       <concept_significance>100</concept_significance>
       </concept>
   <concept>
       <concept_id>10003752.10010124.10010131</concept_id>
       <concept_desc>Theory of computation~Program semantics</concept_desc>
       <concept_significance>300</concept_significance>
       </concept>
   <concept>
       <concept_id>10003752.10003790.10011740</concept_id>
       <concept_desc>Theory of computation~Type theory</concept_desc>
       <concept_significance>300</concept_significance>
       </concept>
   <concept>
       <concept_id>10011007.10010940.10010992.10010998</concept_id>
       <concept_desc>Software and its engineering~Formal methods</concept_desc>
       <concept_significance>300</concept_significance>
       </concept>
   <concept>
       <concept_id>10011007.10011006.10011039</concept_id>
       <concept_desc>Software and its engineering~Formal language definitions</concept_desc>
       <concept_significance>500</concept_significance>
       </concept>
 </ccs2012>
\end{CCSXML}

\ccsdesc[500]{Computing methodologies~Neural networks}
\ccsdesc[300]{Computing methodologies~Parallel programming languages}
\ccsdesc[300]{Theory of computation~Program semantics}
\ccsdesc[300]{Theory of computation~Type theory}
\ccsdesc[300]{Software and its engineering~Formal methods}
\ccsdesc[500]{Software and its engineering~Formal language definitions}

%%
%% Keywords. The author(s) should pick words that accurately describe
%% the work being presented. Separate the keywords with commas.
\keywords{Graph Neural Networks, Domain-Specific Language, Graph Deep Learning}

\received{1 October 2024}
\received[revised]{???}
\received[accepted]{???}

%%
%% This command processes the author and affiliation and title
%% information and builds the first part of the formatted document.
\maketitle

\section{Introduction}

Deep learning models are at the forefront of artificial intelligence research today. Among them, artificial neural networks are the most commonly used class of models for a wide range of different tasks, including natural language processing, computer vision, software engineering, and many more~\cite{sarker_deep_2021}. For applications where the inputs can be represented as graphs, the graph neural network~\cite{corso_graph_2024} (GNN) model has been the architecture of choice, and has achieved state-of-the-art performance on many tasks.

Despite these promising advancements, deep learning models, including graph neural networks, face a number of issues. These systems are difficult to engineer with, and are notoriously harder to debug and interpret compared to traditional software systems~\cite{marcus_deep_2019}. Another issue is the lack of guarantees that these systems offer regarding their outputs, which have been shown to be easily foolable, not only using so-called ``adversarial examples''~\cite{sicong_interpreting_2023}, but also more generally in unpredictable and surprising ways~\cite{eykholt_robust_2018,athalye_synthesizing_2018}. Furthermore, these systems act like a black-box and are opaque, in the sense that human users are unable to understand how they have reached their conclusions~\cite{barbierato_challenges_2024}. Finally, there is a very strong bias in deep learning research against the usage of prior knowledge even when such usage is warranted, and even then, it is difficult to figure out how to integrate such knowledge in these systems~\cite{marcus_innateness_2018}.

In this paper, we tackle all these issues by proposing a new graph neural network specification language, called $\mG$ (pronounced as ``mee-gee''). A graph neural network in $\mG$ is built up as a composition of simpler GNNs following the formal rules of the language. These GNNs are built up from the base terms, specifying functions to be applied to the single nodes in terms of themselves and/or their neighbors, which can then be composed sequentially, in parallel, selected according to a Boolean condition, or iterated. 

The language's syntax is introduced using a context-free grammar, and the meaning of each term is formally and rigorously defined by a denotational semantics. Later on, we also introduce a structural operational semantics, and prove that both semantics are equivalent, i.e., they define the same graph neural network operations. Using the operational semantics and after having defined a type system for $\mG$, we prove the type soundness of the language. The type soundness of $\mG$ guarantees that every well-typed $\mG$ program properly defines a GNN with the correct output type.

The language can be used both in textual form and in graphical form. The graphical representation of $\mG$ programs is introduced as a more user-friendly approach to use the language, as it keeps the flow of information and the type of labels easier to follow and reason about.

Our language is framework and hardware-agnostic, as $\mG$ could be implemented in different ways. We opted to implement $\mG$ in TensorFlow so that it can use its automatic differentiation capabilities and allow the GNNs we program to be executable on CPUs, GPUs, or TPUs without changing their implementation. Furthermore, these GNNs can interoperate with any other TensorFlow feature as if they were TensorFlow models themselves.

We claim that $\mG$ helps with the aforementioned problems of deep learning. For the matter of explainability and interpretability, using $\mG$ helps by making the computations performed more explicit, by taking the form of a $\mG$ expression which has clearly definite semantics and types. Indeed, the user might even define functions and computations using the same terminology of the specific domain in which the GNN is going to be used in, making the purpose of each term easier to understand. Some, or all, parts of the GNN can be defined to be inherently interpretable, by virtue of being defined explicitly based on the available domain knowledge. Furthermore, determining the nodes, edges, or labels that contributed to a specific prediction becomes amenable to static analysis techniques~\cite{belenchia_libmg_2024}.

As for the issues of trustworthiness, the use of a formal language like $\mG$ allows the formal verification of GNNs similarly to that of other programming languages, e.g., by abstract interpretation, symbolic execution, data-flow analysis, and so on. In particular, we are interested in the problem of formally verifying safety properties of GNNs, e.g. output reachability properties~\cite{huang_survey_2020}. The verification of an output reachability property for a function $f$ given an input region $X$ and an output region $Y$ consists in checking whether for all inputs $x \in X, f(x) \in Y$. As far as we know, very little work has been done on this problem for graph neural networks specifically~\cite{salzer_fundamental_2022,ladner2024formalverificationgraphconvolutional}. The verification of properties of this kind can be used to prove the robustness of a GNN against adversarial examples, which is critical when such models are deployed in safety-critical systems, where guarantees of correct functioning are paramount.

Finally, it is easy to include prior knowledge when defining a GNN in $\mG$. The basic terms of the language need not make use of neural network layers as is typically the case for the more popular graph neural network models, but can in general use any kind of function, not necessarily depending on trainable parameters. This way is also possible to define a GNN which does not require training at all, as we did in a previous work where we used $\mG$ for model checking~\cite{belenchia_implementing_2023}. Prior, or innate, knowledge can be freely mixed with optimizable function in order to build hybrid, neural-symbolic systems~\cite{marcus_next_2020,lamb_graph_2020,susskind_neuro-symbolic_2021}. 

\paragraph{Contributions} Our main contributions are:
\begin{itemize}
    \item We define the syntax $\mG$ language, and provide its denotational and operational semantics.
    \item We prove the equivalence of the denotational and operational semantics of $\mG$.
    \item We define the typing rules of $\mG$ and prove its type soundness.
    \item We show how $\mG$ programs can be represented graphically.
    \item We demonstrate how to use $\mG$ to program some of the most popular graph neural network models.
\end{itemize}
\paragraph{Structure of the paper} We survey the related work in Section~\ref{sec:literature}, while we introduce the necessary notation and preliminary information in Section~\ref{sec:background} and~\ref{sec:domain}. Then the $\mG$ language is described in detail in Section~\ref{sec:mg} by showing its syntax and semantics. A proof of equivalence of the denotational and operational semantics of $\mG$ is in Section~\ref{sec:equivalence}, followed by a proof of type soundness in Section~\ref{sec:types}. The graphical representation of $\mG$ programs is introduced in Section~\ref{sec:graphics}. A brief discussion on the implementation of $\mG$ is in Section~\ref{sec:implementation}. Finally, we evaluate $\mG$ in Section~\ref{sec:evaluation} by recalling its previous application to CTL model checking and by showing how to define some of the most well-known graph neural network models.

\section{Related Work}\label{sec:literature}
Many domain specific languages (DSL) have been developed over the years to ease the development of machine learning applications~\cite{portugal_preliminary_2016}. Far from being a complete survey, we will discuss here only the most recent ones or those that are most relevant for our work.

DeepDSL~\cite{zhao_deepdsl_2017} is a language based on Scala that allows the definition of neural networks through variable assignments. Variables can be assigned computational modules such as convolutional layers, pooling layers, activation functions and so on. Neural networks are defined by variable assignments where the right-hand side is the function composition of other computational modules. The code is compiled to Java and uses JCuda to communicate with CUDA and cuDNN. AiDSL~\cite{garcia_diaz_towards_2015} is a language for the specification of supervised feed-forward neural networks using a model driven engineering approach. The code is compiled into Encog, a machine learning framework in Java. SEMKIS-DSL~\cite{jahic_semkis-dsl_2023} is another language developed using a model driven engineering approach, where the main focus is shifted from the specification of the neural network architecture to the requirements that the input dataset and the neural network's outputs must satisfy. For the time being there are no compilation targets for the language, so it is not possible to automatically produce a neural network from specification. On the other hand, Diesel~\cite{elango_diesel_2018} acts at a finer level of detail, allowing the specification of common linear algebra and neural network computations that are typically implemented by libraries such as cuBLAS and cuDNN. It uses a polyhedral model to schedule operations and perform optimizations, and the source code is compiled directly into CUDA code. So far, none of these languages take into consideration the definition of graph-structured inputs or the typical operations that characterize graph neural networks.

A DSL where graphs are first-class objects is OptiML~\cite{sujeeth_optiml_2011}, which supports the specification of machine learning models to be run on heterogenous hardware (i.e. the CPU or GPU). The language can be used to implement any machine learning model that can be expressed using the Statistical Query Model~\cite{kearns_efficient_1998}, and the code can be compiled to Scala, C++, or CUDA code. Graph objects in OptiML support operations expressed in terms of nodes and edges, and graph  operations are specified in terms of nodes and their neighbours in the graph. Furthermore, OptiML has a fixed point operator with a customizable threshold value just like $\mG$. Another DSL that supports graphs is StreamBrain~\cite{podobas_streambrain_2021}, which is a language for the specification of Bayesian Confidence Propagation Neural Networks. This model takes in input a graph where every node is a random variable  and edges represent the dependencies between variables. StreamBrain has a Python interface and its syntax is similar to that of Keras, where the model is seen as a stack of layers, and it can compile to OpenMP, OpenCL, and CUDA. Streambrain is based on NumPy and supports CPUs, GPUs and FPGAs. 

Despite the native support for graphs, neither OptiML nor StreamBrain fully support the graph neural network model, with the first being limited to statistical inference problems, and the latter to a specific kind of neural architecture and graph labeling. Therefore, we can conclude that, at the time of writing, no domain specific language has been developed specifically for graph neural network models and tasks.

\section{Background and Notation} \label{sec:background}

\paragraph{Graphs and their labelings.} A directed \textit{graph} is a pair $\graph = (\vSet, \eSet)$ where $\vSet$ is a collection of \textit{vertices} (which we also refer to as \textit{nodes}) and $\eSet$ is a collection of ordered pairs of vertices, called \textit{edges}. 
For any $u,v \in \vSet$, whenever $(u, v)\in \eSet$ we say that $u$ is a \emph{predecessor} of $v$ and that $v$ is a \emph{successor} of $u$. We also say that $(u, v)$ is an \emph{incoming edge} for $v$ and an \emph{outgoing edge} for $u$.
Moreover, let $\preset{\graph}{v}$ denote the set of \emph{predecessors} of $v$, formally $\preset{\graph}{v}=\{ u \mid (u, v)\in \eSet\}$, while $\poset{\graph}{u}$ denotes the set of \emph{successors} of $u$, namely $\poset{\graph}{u}=\{ v \mid (u, v)\in \eSet \}$. Similarly, let $\inset{\graph}{v}$ denote the set of \emph{incoming edges} for $v$, formally $\inset{\graph}{v} = \{ (u, v) \mid u,v \in \vSet \}$, and let $\outset{\graph}{u}$ denote the set of \emph{outgoing edges} for $u$, formally $\outset{\graph}{u} = \{ (u, v) \mid u,v \in \vSet \}$.

Sometimes it is useful to associate nodes and edges with values. Given a graph $\graph =  (\vSet, \eSet)$, we can consider its \emph{node-labeling} and \emph{edge-labeling}. The former is a function $\lblfunv: \vSet \rightarrow T_{\vSet}$ associating each node $v\in \vSet$ with a value in the set $T_{\vSet}$. Similarly, an \emph{edge-labeling} is a function $\lblfune: \eSet \rightarrow T_{\eSet}$ that maps an edge to its label in the set $T_{\eSet}$. 
Let $V\subseteq \vSet$ (resp. $E\subseteq \eSet$), we let $\lblfunv(V)$ (resp. $\lblfune(E)$) denote the \emph{multi-set} of labels associated to the elements of $V$ (resp. $E$) by $\lblfunv$ (resp. $\lblfune$). %Likewise, we let $(\lblfunv, \lblfune)(E)$ denote the \emph{multi-set} of tuples $(\lblfunv(u), \lblfune((u, v)), \lblfunv(v))$ for each $(u, v) \in E$.

%We denote the set of graphs with node labels of type $T_v$ and edge labels of type $T_e$ graph as parameterized by types $T_v$ and $T_e$ which stand for the type of node labels and edge labels, and we denote its type by $G[T_v, T_e]$. 

\paragraph{Encoding of graphs and labels} One way to represent graphs and their labeling functions on a computer is in the form of a matrix of node features $\textbf{X}$, an adjacency matrix $\textbf{A}$ and a matrix of edge features $\textbf{E}$. 
The node features matrix $\textbf{X}$ stores the labels associated with each node in the graph. The $i$-th row of $\textbf{X}$ is a value in $T_{\vSet}$ that represents the information associated with the $i$-th node of the graph in a given ordering. The adjacency matrix $\textbf{A}$ encodes the architecture of the graph, with each non-zero element $a_{ij} \in \mathbb{R}$ denoting the presence of an edge from a node $i$ to a node $j$. The edge features matrix $\textbf{E}$ stores the labels associated with each edge, like the node features matrix. The $i$-th row of $\textbf{E}$ is a value in $T_{\eSet}$ that represents the information associated with the $i$-th edge in the row-major (or any other) ordering of the edges in $\textbf{A}$.

\paragraph{Graph Neural Networks} A graph neural network is a deep learning model that operates on graph-structured data. Introduced by~\citet{scarselli_graph_2009}, it emerged to overcome the limitations of graph kernel methods~\cite{morris_power_2021}. GNNs generalize many other classes of deep learning architectures, as other deep learning models can be seen as a particular case of graph neural networks~\cite{bronstein_geometric_2021}. Convolutional Neural Networks (CNNs), for example, can be seen as a graph neural networks where inputs can only be 1-dimensional sequences (such as texts) or 2-dimensional grids (such as images), which are both particular instances of graphs. Graph neural networks, more generally, can learn on any non-Euclidean structure representable as graphs, which contrary to grids such as images, can have differing shapes, number of nodes and connectivity properties.

% what can they do (can still mention how to train)
Graph neural networks can be used for many tasks, which can be roughly categorized according to the subject of prediction: nodes, edges, or entire graphs. As with other machine learning models, there are two main kinds of tasks, namely classification and regression, where the goal is to predict the class or some numerical quantity associated with the nodes in the graph, edges, or the entire graph. 
%
%These tasks can be learned in a supervised setting and closely mirror the corresponding tasks in standard machine learning, but there are some caveats. According to \citet{hamilton_graph_2020}, the principal difference is that in a graph setting, the assumption that the data is independent and identically distributed doesn't hold because nodes and their connections affect the distribution of data. 
%
%What else is different from other machine learning models is that the success of graph neural networks stems not from the assumption of independent and identically distributed data, but rather from the information that can be obtained from the dependencies of the data. GNNs may exploit concepts such as homophily, heterophily, the structural equivalence of nodes, or the small-world phenomenon~\cite{hamilton_graph_2020}.  

%In the unsupervised setting, clustering tasks can be performed at the node-level and graph-level: in the former case it is called community detection and aims to group together nodes that share some latent community structure, while in graph-level clustering the objective is to learn a measure of graph similarity~\cite{hamilton_graph_2020}. Finally, there are also edge prediction tasks, where the goal is to predict the presence or absence of edges between the nodes of a graph.

% mention pooling?
%
%types of gnn and their applications
Over the years, many types of graph neural network variants have been developed~\cite{zhou_graph_2020}. Among these, we mention Graph Convolutional Networks~\cite{kipf_semi-supervised_2017}, Graph Isomorphism Networks~\cite{xu_how_2019}, and Graph Attention Networks~\cite{velickovic_graph_2018}. Graph neural networks have been applied to the most disparate domains, spanning computer vision, natural language processing, particle physics, chemistry, combinatorial optimization, recommender systems, traffic forecasting, graph mining and many more~\cite{zhou_graph_2020}. %In particular, in the field of software engineering graph neural networks have been found to be particularly suitable due to the ease of conversion of source code into program graphs~\cite{lamb_graph_2020}. Gated Graph Neural Networks have been used for predicting a separation logic formula that describes the data structures residing on the heap portion of a program~\cite{li_gated_2017}, learn a distributed representation of source code edits that can be automatically applied to new inputs~\cite{yin_learning_2019},  generate code for the completion of partial programs~\cite{brockschmidt_generative_2019}, and suggesting variable names and detecting variable misuse~\cite{allamanis_learning_2018}.

%limitations
%Despite their success, graph neural networks also come with their own limitations. It has been proven by \citet{morris_weisfeiler_2019} and \citet{xu_how_2019} that the expressive power of graph neural networks is at most equivalent to that of the $1$-dimensional Weisfeiler-Leman ($1$-WL)~\cite{weisfeiler_reduction_1968} algorithm. This means that a GNN can distinguish two nodes in a graph if and only if the $1$-WL algorithm assigns distinct labels to those nodes~\cite{grohe_logic_2021}, and that a GNN can distinguish two non-isomorphic graphs if and only if the $1$-WL algorithm can too. Since not all non-isomorphic graphs can be successfully distinguished by the $1$-WL algorithm~\cite{cai_optimal_1992}, all graph neural networks also share this crucial weakness. It was shown by \citet{faber_should_2021} that in practice, graph neural networks are much better at node classification tasks than graph classification tasks. 
%While for node classification they provide greater predictive power than the combination of a model that only focuses on the node features and a model that only trains on the topology of the graph, the same was shown not to be true for tasks of graph classification.
%oversmoothing
%small world networks

The most general form of graph neural network, which subsumes most of the GNN variants that have been developed over the years~\cite{bronstein_geometric_2021}, is the message passing neural network~\cite{gilmer_neural_2017,battaglia_relational_2018} (MPNN). A typical MPNN comprises a fixed number of convolutional layers stacked in sequence, each of them updating the labels of the nodes. A convolution operation computes the new node labels $x_i'$ for some node $i$ from the current node labels $x_i$ according to the following equation:
    \begin{equation}\label{message-passing-node-level}
        x_i' = \psi\left(x_i, \sigma_{j \in N_{\graph}(i)}\varphi(x_j, e_{ji}, x_i)\right) 
    \end{equation}
where $N_{\graph}(i)$ is the set of neighbors of node $i$ (for any given definition of node neighborhood), $\sigma$ is a permutation-invariant function and $\psi, \varphi$ are (usually) trainable functions such as e.g. neural networks. The function $\varphi$ generates the message sent from node $j$ to node $i$ using the corresponding node labels $x_j, x_i$ and the label $e_{ji}$ of the edge $(j, i)$ connecting them. The function $\sigma$ aggregates the multiset of messages obtained by $\varphi$ in a way that is independent of the order they are evaluated and is usually a non-learnable function such as the sum, product, or mean. The function $\psi$ updates the label of node $i$ by considering both the current label $x_i$ and the value computed by $\sigma$. 

% For other purposes, it might be clearer to describe graph neural networks using graph-level equations in terms of the $\textbf{X}$, $\textbf{A}$ and $\textbf{E}$ matrices instead of node-level equations such as Equation~\ref{message-passing-node-level}. A possible graph-level equation is the following:

%     \begin{align}\label{message-passing-graph-level}
%         \textbf{X}'  &= U\left(\textbf{X}, \bigoplus(\Phi(\textbf{X}, \textbf{A}, \textbf{E}))\right)
%     \end{align}
    
% \noindent where $\Phi, \bigoplus$ and $U$ are the matrix-level message, aggregate and update functions corresponding to $\phi, \oplus$ and $u$ in Equation~\ref{message-passing-graph-level}.

Graph neural networks composed by layers described by Equation~\ref{message-passing-node-level} are permutation-equivariant functions that take in input a graph $\graph$, an edge-labeling $\lblfune$, and a node-labeling $\lblfunv$ to return a new node-labeling $\lblfunv'$. 
%a tuple  $(\textbf{X}, \textbf{A}, \textbf{E})$ and return a new node features matrix $\textbf{X}'$.
%
This characterization leaves out some types of graph neural networks, namely, the graph neural networks that use pooling layers (which therefore change the architecture of the graph by removing elements from $\vSet$ and $\eSet$) and the graph neural networks that learn edge features rather than node features (which therefore produce a new edge-labeling $\lblfune'$ instead). Nonetheless, the standard graph neural network model we have just described is general enough to encompass these two special cases as well. A graph neural network that uses pooling layers can be represented by allowing the node-labeling function to label each node with an additional Boolean value that specifies whether the node has been deleted or not and by ruling that an edge is valid if and only if both the source and destination nodes have not been deleted. A graph neural network that learns edge features can instead be modeled by reifying edges into nodes and by having the graph's node-labeling function label each node with an additional Boolean value that specifies whether the node represents an edge or a node of the original graph.

\section{The Domain of Graph Neural Networks}\label{sec:domain}
In this section, we introduce the necessary definitions and theorems that characterize graph neural networks in our work. We elaborate further on the notion of node-labeling function and formalize the graph neural network as a transformation between node-labeling functions, then move on to show that the set of graph neural networks defined this way forms a chain complete partially ordered set (ccpo). This will be later useful when proving the equivalence of the denotational and operational semantics we define in Section~\ref{sec:mg}.

\subsection{Node-labeling functions}
Let $T$ be a non-empty set. We denote the set of node-labeling functions with co-domain $T$ on a graph $\graph$ as $\lblfunvtype{T}$, and we say that this set specifies their \emph{type}. As examples, node-labeling functions of type $\lblfunvtype{\mathbb{B}}$ map nodes to Boolean values, functions of type $\lblfunvtype{\mathbb{N}}$ map nodes to natural numbers, while functions of type $\lblfunvtype{\mathbb{Q}^k}$ with $k \in \mathbb{N}$ map nodes to $k$-tuples of rational numbers. 
%Each node-labeling function type also includes a node-labeling function $\bot_{\lblfunvtype{T}} = \lambda v . \mathtt{undef}$ that is undefined for every node of $\graph$. 
% Then $(\lblfunvtype{T}, \sqsubseteq_{\lblfunvtype{T}})$ is a partially ordered set, %
%where the labeling function $\bot_{\lblfunvtype{T}} = \lambda v . \bot_T$ is the bottom element and
% where $\sqsubseteq_{\lblfunvtype{T}}$ is the pointwise ordering relation such that, for all $v \in \vSet$
% %
% \begin{align*}
%     \lblfunv_1 \sqsubseteq_{\lblfunvtype{T}} \lblfunv_2  \iff \lblfunv_1(v) \sqsubseteq_T \lblfunv_2(v) 
% \end{align*}
%
% which on a flat domain $T$ is equivalent to say
% %
% \begin{align*}
%     \lblfunv_1 \sqsubseteq_{\lblfunvtype{T}} \lblfunv_2  \iff \lblfunv_1(v) = t \implies \lblfunv_2(v) = t 
% \end{align*}
%

We also want to consider the parallel composition of node-labeling functions. \begin{definition}[Parallel composition of node-labeling functions]
    Given two node-labeling functions $\lblfunv_1: \lblfunvtype{T_1}$ and $\lblfunv_2:\lblfunvtype{T_2}$, their \textit{parallel composition} $\lblfunv_1 | \lblfunv_2$ is defined as the node-labeling function
\[\lambda v . \langle\lblfunv_1(v), \lblfunv_2(v)\rangle\]
with type $\lblfunvtype{T_1 \times T_2}$ that maps nodes to (possibly nested) pairs of node labels.
\end{definition}
\noindent The inverse operation is given by the projection functions $\pi_L, \pi_R: \lblfunvtype{T \times T} \rightarrow \lblfunvtype{T}$ such that
\[
\pi_L(\lblfunv_1 | \lblfunv_2) = \lblfunv_1
\]
\[
\pi_R(\lblfunv_1 | \lblfunv_2) = \lblfunv_2
\]
We call $\pi_L$ the \textit{left projection} and $\pi_R$ the \textit{right projection}. Adequate composition of these two projection functions can be used to obtain any nested node-labeling function obtained through repeated parallel composition. 
%, but for simplicity we define a more straightforward notation. Consider a node-labeling function $\lblfunv$ resulting from the parallel composition of $k$ node-labeling functions. We can represent it as a tree where the $k$ leaf nodes correspond to the constituent node-labeling functions and the inner nodes correspond to parallel compositions of their child nodes. Then, we can define projection functions $\pi_i$ for $i \in 1, 2, \ldots, k$ which return the $i$-th leaf node encountered in the traversal of this tree, corresponding to the node-labeling $\lblfunv_i$. In the case of a simple, non-nested pair, then $\pi_1 = \pi_L$ and $\pi_2 = \pi_R$.

\subsection{Graph neural networks}

Given a graph $\graph$ together with its edge-labeling function $\lblfune$, we define a graph neural network to be a partial function $\gnn: \lblfunvtype{T_1} \rightarrow \lblfunvtype{T_2}$ that maps an input node-labeling function to an output node-labeling function. We denote the set of such graph neural networks as $\gnntype{T_1}{T_2}$ or, more succinctly, as $\gnntypews$. The subscript $\graph, \lblfune$ indicates that each graph neural network is parametrized by a graph and an edge-labeling function. Then, the set of graph neural networks $\gnntype{T_1}{T_2}$ is a partially ordered set, where
% the graph neural network $\bot_{\gnntype{T_1}{T_2}} = \lambda \lblfunv . \bot_{\lblfunvtype{T_2}}$ is the bottom element and 
$\typedgnnorder{T_1}{T_2}$ is the point-wise ordering relation such that, for all $\lblfunv \in \lblfunvtype{T_1}$
\[\gnnws_1 \typedgnnorder{T_1}{T_2} \gnnws_2 \iff \gnnws_1(\lblfunv) = \lblfunv' \implies \gnnws_2(\lblfunv) = \lblfunv'\]
When the types are clear from the context or not relevant, we drop the subscript and simply write $\gnnorder$.

Next, we define the underlying \textit{relation} of a GNN as the set of input-output tuples which characterize the GNN. Then, building on this concept, we specify the \textit{sequential} and \textit{parallel composition} of relations.
\begin{definition}
    Given a graph neural network $\gnnws: \gnntype{T_1}{T_2}$ we define its underlying \emph{relation}, denoted by $rel(\gnnws)$, as 
    \[rel(\gnnws) = \{ (\lblfunv_1, \lblfunv_2) \in \lblfunvtype{T_1} \times \lblfunvtype{T_2} \mid \gnnws(\lblfunv_1) = \lblfunv_2 \}\]
\end{definition}
\begin{definition}
    The \emph{sequential composition}, or simply \emph{composition}, of relations $A: \lblfunvtype{T_1} \times \lblfunvtype{T_2}$ and $B: \lblfunvtype{T_2} \times \lblfunvtype{T_3}$, denoted by $A \relcomp B$, is defined as
    \[A \relcomp B = \{ (\lblfunv_1, \lblfunv_3) \mid \exists \lblfunv_2 \in \lblfunvtype{T_2}: (\lblfunv_1, \lblfunv_2) \in A \land (\lblfunv_2, \lblfunv_3) \in B\}\]
\end{definition}
\begin{definition}
    The \emph{parallel composition}, or \emph{concatenation}, of relations $A: \lblfunvtype{T_1} \times \lblfunvtype{T_2}$ and $B: \lblfunvtype{T_1} \times \lblfunvtype{T_3}$, denoted by $A \relpar B$, is defined as
    \[A \relpar B = \{ (\lblfunv_1, (\lblfunv_2, \lblfunv_3)) \mid (\lblfunv_1, \lblfunv_2) \in A \land (\lblfunv_1, \lblfunv_3) \in B\}\]
\end{definition}
The following lemma defines the relationship between the relations and the ordering of GNNs.
\begin{lemma}    
    Given two GNNs $\gnnws_1, \gnnws_2$ we have \[\gnnws_1  \gnnorder \gnnws_2 \iff rel(\gnnws_1) \subseteq rel(\gnnws_2)\]
\end{lemma}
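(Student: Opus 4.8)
The plan is to prove the biconditional by a direct unfolding of the two definitions involved, since the ordering $\gnnorder$ and the operator $rel$ are two ways of describing the same data: a partial function from $\lblfunvtype{T_1}$ to $\lblfunvtype{T_2}$ is completely determined by its graph, i.e.\ by the set of input--output pairs on which it is defined. So I would first record this observation explicitly: $rel(\gnnws)$ is precisely the graph of the partial function $\gnnws$, and because $\gnnws$ is a (partial) function this graph is single-valued, which is what makes the correspondence between the functional and relational viewpoints faithful.

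For the forward direction, I would assume $\gnnws_1 \gnnorder \gnnws_2$ and take an arbitrary pair $(\lblfunv_1, \lblfunv_2) \in rel(\gnnws_1)$. By definition of $rel$ this means $\gnnws_1(\lblfunv_1) = \lblfunv_2$; applying the defining implication of the point-wise ordering with $\lblfunv := \lblfunv_1$ and $\lblfunv' := \lblfunv_2$ yields $\gnnws_2(\lblfunv_1) = \lblfunv_2$, hence $(\lblfunv_1, \lblfunv_2) \in rel(\gnnws_2)$. Since the pair was arbitrary, $rel(\gnnws_1) \subseteq rel(\gnnws_2)$.

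For the converse, I would assume $rel(\gnnws_1) \subseteq rel(\gnnws_2)$ and check the defining condition of $\gnnorder$: fix $\lblfunv \in \lblfunvtype{T_1}$ and suppose $\gnnws_1(\lblfunv) = \lblfunv'$. Then $(\lblfunv, \lblfunv') \in rel(\gnnws_1)$, so by the assumed inclusion $(\lblfunv, \lblfunv') \in rel(\gnnws_2)$, i.e.\ $\gnnws_2(\lblfunv) = \lblfunv'$. This establishes $\gnnws_1 \gnnorder \gnnws_2$, completing the proof.

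I do not expect any real obstacle here: the statement is essentially a restatement of the fact that refinement of partial functions coincides with graph inclusion, and the only point worth stating carefully is that both $\gnnws_1$ and $\gnnws_2$ being functional guarantees no pair $(\lblfunv, \lblfunv')$ in $rel(\gnnws_2)$ can conflict with a value of $\gnnws_2$ — which is automatic and never actually needed in either direction. The argument is short enough that it can be written in full without any routine computation.
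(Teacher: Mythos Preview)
Your proposal is correct and follows essentially the same approach as the paper: both directions are proved by directly unfolding the definitions of $\gnnorder$ and $rel$, translating the functional statement $\gnnws_i(\lblfunv)=\lblfunv'$ into membership $(\lblfunv,\lblfunv')\in rel(\gnnws_i)$ and back. The paper's proof is slightly terser but the logical content is identical.
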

\begin{proof}
    First we show that $\gnnws_1 \gnnorder \gnnws_2 \implies rel(\gnnws_1) \subseteq rel(\gnnws_2)$. Since $\gnnws_1(\lblfunv) = \lblfunv' \implies \gnnws_2(\lblfunv) = \lblfunv'$,  any $(\lblfunv, \lblfunv') \in rel(\gnnws_1)$ is also a member of $rel(\gnnws_2)$.

    From the other direction, suppose that $rel(\gnnws_1) \subseteq rel(\gnnws_2)$. Since $(\lblfunv, \lblfunv') \in rel(\gnnws_1) \implies (\lblfunv, \lblfunv') \in rel(\gnnws_2)$, we have that $\forall(\lblfunv, \lblfunv') \in rel(\gnnws_1), \gnnws_1(\lblfunv) = \lblfunv' \implies \gnnws_2(\lblfunv) = \lblfunv'$.
\end{proof}
As was the case for the node-labeling functions, we consider the parallel composition of graph neural networks. 
\begin{definition}[Parallel composition of graph neural networks]\label{def:par-gnn}
    Given two graph neural networks $\gnnws_1: \gnntype{T}{T_1}$ and $\gnnws_2: \gnntype{T}{T_2}$, their \textit{parallel composition} $\gnnws_1 \gnncomp \gnnws_2$ is defined as the graph neural network 
\[\lambda e . \lambda v . \langle\gnnws_1(e)(v), \gnnws_2(e)(v)\rangle\]
with type $\gnntype{T}{T_1 \times T_2}$ that concatenates the node-labeling functions generated by $\gnnws_1$ and $\gnnws_2$.
\end{definition} 
The relationship between the parallel composition of labeling functions and the parallel composition of graph neural networks is highlighted by the following equation:
\[\gnnws_1 \gnncomp \gnnws_2 (\lblfunv) = \gnnws_1(\lblfunv) | \gnnws_2(\lblfunv)\]

Finally, we prove that the partially ordered set of graph neural networks of a given type $\gnntypews$ with the ordering relation $\gnnorder$ is a chain complete partially ordered set.

\begin{theorem}
The set of graph neural networks $\gnntypews$ is a chain complete partially ordered set $(\gnntypews, \gnnorder)$ with bottom element $\bot_{\gnntypews} = \lambda \lblfunv . \mathtt{undef}$ the GNN that is undefined for every node-labeling function.
\end{theorem}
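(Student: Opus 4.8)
The plan is to verify the two defining properties of a chain complete partially ordered set: that $(\gnntypews, \gnnorder)$ has a least element, and that every chain in it admits a least upper bound. Throughout, I would work through the underlying relations $rel(\cdot)$ and use the characterization $\gnnws_1 \gnnorder \gnnws_2 \iff rel(\gnnws_1) \subseteq rel(\gnnws_2)$ from the previous lemma, which transports the whole question to the lattice of subsets of $\lblfunvtype{T_1} \times \lblfunvtype{T_2}$ ordered by inclusion. (This also reconfirms antisymmetry of $\gnnorder$, since if each of $\gnnws_1, \gnnws_2$ is below the other then $rel(\gnnws_1) = rel(\gnnws_2)$, and a partial function is determined by its graph.)

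For the bottom element, note that $rel(\bot_{\gnntypews}) = \emptyset$ because $\lambda \lblfunv . \mathtt{undef}$ is defined for no input, and $\emptyset$ is a valid graph of a partial function of type $\lblfunvtype{T_1} \to \lblfunvtype{T_2}$, so $\bot_{\gnntypews} \in \gnntypews$. Since $\emptyset \subseteq rel(\gnnws)$ for every $\gnnws \in \gnntypews$, the lemma gives $\bot_{\gnntypews} \gnnorder \gnnws$. In particular this also handles the empty chain, whose least upper bound is precisely $\bot_{\gnntypews}$.

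For a non-empty chain $C = \{\gnnws_i\}_{i \in I} \subseteq \gnntypews$, the candidate supremum is the partial function $\gnnws^{\star}$ whose graph is $R = \bigcup_{i \in I} rel(\gnnws_i)$. The one point that genuinely uses the chain hypothesis — and which I expect to be the crux of the argument — is showing that $R$ is functional, i.e. really is the graph of a well-defined partial function: if $(\lblfunv, \lblfunv') \in rel(\gnnws_i)$ and $(\lblfunv, \lblfunv'') \in rel(\gnnws_j)$, then since $C$ is totally ordered we may assume $\gnnws_i \gnnorder \gnnws_j$, hence $rel(\gnnws_i) \subseteq rel(\gnnws_j)$, so both pairs lie in $rel(\gnnws_j)$, and since $\gnnws_j$ is a partial function $\lblfunv' = \lblfunv''$. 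For an arbitrary non-chain family this step fails, which is exactly why only chains are required to have suprema. As each $rel(\gnnws_i) \subseteq \lblfunvtype{T_1} \times \lblfunvtype{T_2}$, so is $R$, and therefore $\gnnws^{\star} \in \gnntype{T_1}{T_2} = \gnntypews$.

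It then remains to check that $\gnnws^{\star}$ is the least upper bound of $C$. It is an upper bound since $rel(\gnnws_i) \subseteq R = rel(\gnnws^{\star})$ for every $i$, so $\gnnws_i \gnnorder \gnnws^{\star}$ by the lemma. It is the least one since for any upper bound $\gnnws'$ we have $rel(\gnnws_i) \subseteq rel(\gnnws')$ for all $i$, whence $R = \bigcup_{i \in I} rel(\gnnws_i) \subseteq rel(\gnnws')$, and the lemma yields $\gnnws^{\star} \gnnorder \gnnws'$. This establishes chain completeness and, together with the bottom element, completes the proof.
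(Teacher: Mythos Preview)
Your proof is correct and follows essentially the same approach as the paper: construct the supremum of a chain as the union of the underlying relations, use the chain property to show this union is functional, and verify it is the least upper bound via inclusion. The only cosmetic difference is that the paper first verifies reflexivity, transitivity, and antisymmetry of $\gnnorder$ directly from the pointwise definition before invoking relations, whereas you transport everything through the lemma $\gnnws_1 \gnnorder \gnnws_2 \iff rel(\gnnws_1) \subseteq rel(\gnnws_2)$ from the outset; this is slightly more streamlined but not a different argument.
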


\begin{proof}
    First, we show that $\gnnorder$ fulfills the requirements to a partial order. We have that $\gnnws \gnnorder \gnnws$ because $\gnnws(\lblfunv) = \lblfunv'$ implies $\gnnws(\lblfunv) = \lblfunv'$. Then to prove transitivity, we assume that $\gnnws_1 \gnnorder \gnnws_2$ and $\gnnws_2 \gnnorder \gnnws_3$ and show that $\gnnws_1 \gnnorder \gnnws_3$. If $\gnnws_1(\lblfunv) = \lblfunv'$, we get that $\gnnws_2(\lblfunv) = \lblfunv'$ from $\gnnws_1 \gnnorder \gnnws_2$. Then we also have $\gnnws_3(\lblfunv) = \lblfunv'$ from $\gnnws_2 \gnnorder \gnnws_3$. Lastly, to prove that the ordering is anti-symmetric, we assume that $\gnnws_1 \gnnorder \gnnws_2$ and $\gnnws_2 \gnnorder \gnnws_1$ and we need to show that $\gnnws_1 = \gnnws_2$. If $\gnnws_1(\lblfunv) = \lblfunv'$, then by $\gnnws_1 \gnnorder \gnnws_2$ we have that $\gnnws_2(\lblfunv) = \lblfunv'$. Likewise if $\gnnws_1(\lblfunv)$ is undefined, then $\gnnws_2(\lblfunv)$ must be undefined as well, otherwise $\gnnws_2(\lblfunv) = \lblfunv'$ and  $\gnnws_2 \gnnorder \gnnws_1$ are in contradiction. Therefore $\gnnws_1$ and $\gnnws_2$ are equal on any $\lblfunv$.

    To see that $(\gnntypews, \gnnorder)$ is a ccpo, we need to show that for all chains $Y$, the least upper bound $\bigsqcup Y$ exists. Let $rel(\bigsqcup Y) = \bigcup \{rel(\gnnws) \mid \gnnws \in Y\}$. We first show that $ \bigcup \{rel(\gnnws) \mid \gnnws \in Y\}$ indeed specifies a graph neural network. That is, whenever $(\lblfunv, \lblfunv')$ and $(\lblfunv, \lblfunv'')$ are members of $X = \bigcup \{rel(\gnnws) \mid \gnnws \in Y\}$, then $\lblfunv' = \lblfunv''$. If $(\lblfunv, \lblfunv') \in X$, then there must be a $\gnnws \in Y$ such that $\gnnws(\lblfunv) = \lblfunv'$, and similarly for $(\lblfunv, \lblfunv'') \in X$, there must be a $\gnnws' \in Y$ such that $\gnnws'(\lblfunv) = \lblfunv''$. Since $Y$ is a chain, then either $\gnnws \gnnorder \gnnws'$ or $\gnnws' \gnnorder \gnnws$. In any case, this means that $\gnnws(\lblfunv) = \gnnws'(\lblfunv)$ and $\lblfunv' = \lblfunv''$. Next we show that $\bigsqcup Y$ as we defined it is an upper bound of $Y$. Let $\gnnws$ be a member of $Y$. Clearly, $\gnnws \gnnorder \bigsqcup Y$, because $rel(\gnnws) \subseteq rel(\bigsqcup Y)$. Lastly, we prove that $\bigsqcup Y$ is the least upper bound of $Y$. Let $\gnnws_1$ be an upper bound of $Y$. By definition, $\gnnws \gnnorder \gnnws_1$ for all $\gnnws \in Y$, and $rel(\gnnws) \subseteq rel(\gnnws_1)$. Then it must also be the case that $\bigcup \{rel(\gnnws) \mid \gnnws \in Y\} \subseteq rel(\gnnws_1)$, and therefore $rel(\bigsqcup Y) \subseteq rel(\gnnws_1)$. Then $\bigsqcup Y \gnnorder \gnnws_1$ and it is the least upper bound of $Y$.

    The last step of the proof is to show that $\bot_{\gnntypews}$ is the least element of $\gnntypews$. It is indeed a member of $\gnntypews$ and $\bot_{\gnntypews} \gnnorder \gnnws$ for all $\gnnws$, since $\bot_{\gnntypews}(\lblfunv) = \lblfunv'$ implies (vacuously) that $\gnnws(\lblfunv) = \lblfunv'$.
\end{proof}

% \begin{theorem}
%     Let $(D, \sqsubseteq)$, $(D', \sqsubseteq')$, and $(D'', \sqsubseteq'')$ satisfy the ccpo property, and let the functions $f: D \rightarrow D'$ and $f': D' \rightarrow D''$ be continuous. Then $f' \circ f: D \rightarrow D''$ is a continuous function.
% \end{theorem}

% \begin{proof}
    
% \end{proof}

% \begin{theorem}
%     Let $f: D \rightarrow D$ be a continuous function on the ccpo $(D, \sqsubseteq)$ with least element $\bot$. Then
%     \[FIX(f) = \bigsqcup \{f^n(\bot) \mid n \geq 0 \}\]
%     where $f^0 = id$ and $f^{n+1} = f \circ f^n$ defines an element of $D$, and this element is the least fixed point of $f$.
% \end{theorem}

% \begin{proof}
    
% \end{proof}

\section{The \texorpdfstring{$\mG$}{mG} Language for Graph Neural Networks}\label{sec:mg}

In this section, we introduce the syntax of $\mG$ as a programming language for the definition of graph neural networks. After specifying its syntax (Definition~\ref{def:syntax}), we show its denotational semantics (Definition~\ref{def:denotational_semantics}), and structural operational semantics (Definition \ref{def:sos}). 

\begin{definition}[Syntax of $\mG$]\label{def:syntax}
Given a set $\mathcal{S}$ of function symbols, we define an algebra of graph neural networks with the following abstract syntax:
\begin{normalfont}
\begin{align*} 
\mathcal{N} ::=& \; \id \mid \psi \mid \lhd_{\sigma}^{\varphi} \mid \rhd_{\sigma}^{\varphi} \mid \mathcal{N}_1 ; \mathcal{N}_2 \mid \mathcal{N}_1 || \mathcal{N}_2 \mid 
\mathcal{N}_1 \choiceop \mathcal{N}_2 \mid  \starop{\mathcal{N}} 
\end{align*}
\end{normalfont}
\noindent with $\varphi, \sigma, \psi \in \mathcal{S}$. The operator precedence rules given by $\starop{} > \, ; \, > || > \choiceop$ and round parentheses are introduced to the syntax to make the meaning of expressions unambiguous.
\end{definition}

Given a graph $\graph$ and an edge-labeling $\lblfune$, the meaning of a $\mG$ expression is a graph neural network, a partial function between node-labeling functions.
The term $\id$ represents the application of the \textit{identity} GNN that leaves the node labels unaltered.
Another of the basic $\mG$ terms is the \emph{function application} $\psi$. This represents the GNN that applies the function associated with $\psi$.
Moreover, the \emph{pre-image} term $\lhd_{\sigma}^{\varphi}$ and the \emph{post-image} term $\rhd_{\sigma}^{\varphi}$ define a GNN that computes the labeling of a node in terms of the labels of its predecessors and successors, respectively, using functions associated with symbols $\sigma$ and $\varphi$.
Two GNNs can be composed by \emph{sequential composition} $\mathcal{N}_1 ; \mathcal{N}_2$ and \emph{parallel composition} $\mathcal{N}_1 || \mathcal{N}_2$. 
%
% Local \emph{variables} $X$ and \emph{functions} $f(\Tilde{X})$ can be defined using \texttt{let} and \texttt{def}.
%
The \textit{choice} operator $\mathcal{N}_1 \choiceop \mathcal{N}_2$ allows to run different GNNs according to the values of a node-labeling function.
Finally, the \emph{star} operator $\starop{\mathcal{N}}$ is used to program recursive behavior. 
%
%We say that a variable $X$ is \emph{bound} in $\mathcal{N}$ if it only occurs under the scope of a fixpoint operator or a function definition. We say that $X$ is \emph{free} in $\mathcal{N}$ if it is not \emph{bound}.

\subsection{Denotational semantics}
Having defined the syntax, we are now ready to introduce the denotational semantics of $\mG$.

\begin{definition}{(Denotational semantics)} \label{def:denotational_semantics}
Given a graph $\graph$ and an edge-labeling function $\lblfune$, we define the semantic interpretation function $\ds{\cdot}: \mathcal{N} \rightarrow \gnntypews$ on $\mu\mathcal{G}$ formulas $\mathcal{N}$ by induction in the following way:
\begin{normalfont}
\begin{align*}
&\ds{\id} = id \\
&\ds{\psi}(\lblfunv)  = \lambda v . f_{\psi}(\lblfunv(\vSet), \lblfunv(v)) \\
&\ds{\lhd_{\sigma}^{\varphi}}(\lblfunv) =  \lambda v . f_\sigma([f_\varphi(\lblfunv(u), \lblfune((u, v)), \lblfunv(v)) \mid (u, v) \in \inset{\graph}{v}], \lblfunv(v)) \\
&\ds{\rhd_{\sigma}^{\varphi}}(\lblfunv) =  \lambda v . f_\sigma([f_\varphi(\lblfunv(u), \lblfune((v, u)), \lblfunv(v)) \mid (v, u) \in \outset{\graph}{v}], 
\lblfunv(v)) \\
&\ds{\mathcal{N}_1 ; \mathcal{N}_2}  = \ds{\mathcal{N}_2} \circ \ds{\mathcal{N}_1} \\
&\ds{\mathcal{N}_1 || \mathcal{N}_2} = \ds{\mathcal{N}_1} \gnncomp \ds{\mathcal{N}_2}\\
&\ds{\mathcal{N}_1 \choiceop \mathcal{N}_2} = cond(\pi_L, \ds{\mathcal{N}_1} \circ \pi_R, \ds{\mathcal{N}_2} \circ \pi_R)\\
&\ds{\starop{\mathcal{N}}} = FIX(\lambda \gnnws .  cond(\lambda e . \lambda v . \ds{\mathcal{N}}(e)(v) \simop e(v), id, \gnnws \circ \ds{\mathcal{N}}))
\end{align*}
\end{normalfont}
for any $\psi, \varphi, \sigma \in \mathcal{S}$ and any $f_\psi:T_1^\star \times T_1 \rightarrow T_2, f_\varphi: T_1 \times T_e \times T_1 \rightarrow T_2, f_\sigma: T_2^\star \times T_1 \rightarrow T_3$. The functions $cond$ and $FIX$ are defined as:

\[
cond(t, f_{1}, f_{2})(\lblfunv) = \begin{cases} 
f_{1}(\lblfunv) & \text{if } t(\lblfunv) = \lambda v . \mathtt{True} \\
f_{2}(\lblfunv) & \text{otherwise}
\end{cases}
\]
%where $\lblfunv_{\mathtt{True}}$ denotes the constant node-labeling function that maps every node to the Boolean $\mathtt{True}$ value.
% \[
%   g(\gnnws)(\lblfunv) = \begin{cases}
% \gnnws(\lblfunv) & \text{if } \gnnws(\lblfunv) = \ds[{[X \leftarrow \gnnws]}]{\mathcal{N}_2}(\lblfunv)\\
% g(\lambda e . \ds[{[X \leftarrow \gnnws]}]{\mathcal{N}_2}(\lblfunv))(\lblfunv) & \text{otherwise}
% \end{cases}  
% \]

\[FIX(f) = \bigsqcup \{f^n(\bot_{\gnntypews}) \mid n \geq 0 \}\]
where $f^0 = id$ and $f^{n+1} = f \circ f^n$.
For any label type $T$ and any rational value $\epsilon \in \mathbb{Q}$, we define a binary predicate $\simop$ such that $\forall x, y \in T$
\[x \simop y \iff (x = y) \lor (\exists k \in \mathbb{N}. T \subseteq \mathbb{Q}^k \land \forall i \in \{1,\ldots, k\} . |x_i - y_i| \leq \epsilon) \]
This predicate specifies that for two labels to be considered the ``same'' they must either be exactly equal, or they must consist of numerical values and each of those values must not differ by more than $\epsilon$ at most.

%
% \[
%   h(\lblfunv) =  \begin{array}{@{}c@{}}cond(\ds{\psi_{=}} \circ \lambda e . \lambda v. \langle \lblfunv(v)_1, \ds[{[X \leftarrow \lambda e' . \lambda v' . \lblfunv(v')_1]}]{\mathcal{N}_2}(e)(v) \rangle,\\
%                \lambda e . \lambda v . \lblfunv(v)_1,\\
%                g(\ds[{[X \leftarrow \lambda e . \lambda v . \lblfunv(v)_1]}]{\mathcal{N}_2})
%                )(\lambda v . \lblfunv(v)_2)\end{array}
% \]
%
Finally, we provide the semantics for a term of the form $\mathcal{N}_1 \parcomp \mathcal{N}_2$ which will be useful later in the proof of the equivalence with the structural operational semantics
\begin{normalfont}
\begin{align*}
&\ds{\mathcal{N}_1 \parcomp \mathcal{N}_2} = (\ds{\mathcal{N}_1} \circ \pi_L) \gnncomp (\ds{\mathcal{N}_2} \circ \pi_R)
\end{align*}
\end{normalfont}
\end{definition}

\noindent In the following paragraphs, we clarify the meaning of $\mG$ expressions.

\paragraph{Identity application} The term $\id$ is evaluated as the identity graph neural network that returns the input node-labeling function as is.

\paragraph{Function application} A function symbol $\psi \in \mathcal{S}$ is evaluated as the graph neural network that maps a node-labeling to a new node-labeling by applying the corresponding function $f_\psi: T_1^\star \times T_1 \rightarrow T_2$ on both local and global node information. The local information is the label of each individual node, while the global information is the multiset of the labels of all the nodes in the graph. The graph neural network we obtain applies a (possibly trainable) function $f_\psi$ to these two pieces of information. Two particular cases arise if the function ignores either of the two inputs. If $f_\psi$ ignores the global information, the GNN returns a node-labeling function that is a purely local transformation of the node labels. On the other hand, if $f_\psi$ ignores the local information, the GNN returns a node-labeling function that assigns to each node a label that summarizes the entire graph, emulating what in the GNN literature is known as a \textit{global pooling} operator~\cite{grattarola_understanding_2022}.

\paragraph{Pre-image and Post-Image} The pre-image $\lhd$ and the post-image $\rhd$, together with function symbols $\varphi, \sigma \in \mathcal{S}$ are evaluated as the graph neural networks $\ds{\lhd_{\sigma}^{\varphi}}$ and $\ds{\rhd_{\sigma}^{\varphi}}$. In the case of the pre-image, for any symbol $\varphi \in \mathcal{S}$ the corresponding function $f_\varphi: T_1 \times T_e \times T_1 \rightarrow T_2$ generates a \emph{message} from tuples $(\lblfunv(u), \lblfune((u, v)), \lblfunv(v))$ for each $(u, v) \in \inset{\graph}{v}$. Then for any symbol $\sigma \in \mathcal{S}$ the corresponding function $f_\sigma: T_2^\star \times T_1 \rightarrow T_3$ generates a new label for a node $v$ from the multiset of incoming messages for $v$ obtained from $f_\varphi$ and the current label $\lblfunv(v)$. The functions $f_\varphi$ and $f_\sigma$ may be trainable. The case of the post-image is analogous, with the difference that $f_\varphi$ is applied to tuples $(\lblfunv(v), \lblfune((v, u)), \lblfunv(u))$ for each $(v, u) \in \outset{\graph}{v}$ instead.

\paragraph{Sequential composition} An expression of the form $\mathcal{N}_1 ; \mathcal{N}_2$ is evaluated as the graph neural network resulting from the function composition of $\ds{\mathcal{N}_2}$ and $\ds{\mathcal{N}_1}$.

\paragraph{Parallel composition} An expression of the form $\mathcal{N}_1 || \mathcal{N}_2$ is evaluated as the graph neural network resulting from the parallel composition of $\ds{\mathcal{N}_1}$ and $\ds{\mathcal{N}_2}$ (see Definition~\ref{def:par-gnn}). %$\langle \ds{\mathcal{N}_1}(\lblfunv)(v), \ds{\mathcal{N}_2}(\lblfunv)(v)\rangle$. %Moreover, combining this operator with the basic function symbols $\psi_1, \psi_2, \ldots \in \mathcal{S}$ allows the execution of $k$-ary operations on the node-labeling functions: a $k$-ary function application $f_{op_k}(\mathcal{N}_1, \ldots, \mathcal{N}_k)$ is expressible as the $\mu\mathcal{G}$ expression $((\ldots(\mathcal{N}_1 || \mathcal{N}_2) || \ldots) || \mathcal{N}_k);op_k$.

\paragraph{Choice}
The \emph{choice} operator $\mathcal{N}_1 \choiceop \mathcal{N}_2 $ applied to $\mG$ formulas $\mathcal{N}_1, \mathcal{N}_2$ is evaluated as the graph neural network $\ds{\mathcal{N}_1}$ if the left projection of the input node-labeling $\lblfunv' = \pi_L(\lblfunv)$ is a node-labeling function such that $\forall v \in \graph, \lblfunv'(v) = \mathtt{True}$. Otherwise, it is evaluated as the graph neural network $\ds{\mathcal{N}_2}$. The GNN selected to run receives in input the right projection of the input node-labeling function.

\paragraph{Fixed points} The \textit{star} operator $\starop{\mathcal{N}}$, or the \emph{fixed point} operator, applied to a $\mG$ formula $\mathcal{N}$ is evaluated as the graph neural network that that maps a node-labeling function $\lblfunv$ to a new node-labeling function $\lblfunv'$ that is the fixed point of $\mathcal{N}$ computed starting by $\lblfunv$. In other words, the sequence
\begin{align*}
    \lblfunv_0 &= \lblfunv\\
    \lblfunv_1 &= \ds{\mathcal{N}}(\lblfunv_0)\\
    \lblfunv_2 &= \ds{\mathcal{N}}(\lblfunv_1)\\
    & \qquad \vdots\\
    \lblfunv_i &= \ds{\mathcal{N}}(\lblfunv_{i-1})
\end{align*}
is computed until we obtain a labeling $\lblfunv'$ such that $\ds{\mathcal{N}}(\lblfunv') = \lblfunv'$. The $FIX$ function used in the definition requires that the input functional is a continuous function. In order to prove that, we first have to show that function composition $\circ$ and the function $cond$ are continuous. Later, in Section~\ref{sec:equivalence}, we will also require the continuity of parallel composition $\gnncomp$ to prove the equivalence of the denotational and operational semantics. The following lemmas prove all these results.

\begin{lemma}\label{lemma:cont-circ}
    Function composition $\circ$ is continuous in both its arguments.
\end{lemma}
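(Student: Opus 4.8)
The plan is to work in the ccpo $(\gnntypews, \gnnorder)$ established in the earlier theorem, and prove continuity of composition argument by argument. Recall that a map is continuous if it is monotone and preserves least upper bounds of chains. Since $\circ$ is a binary operation, I would first establish monotonicity in each argument separately, then lub-preservation in each argument separately, and finally combine these to get joint continuity. Throughout I will freely switch between a GNN $\gnnws$ and its underlying relation $rel(\gnnws)$, using the earlier lemma that $\gnnws_1 \gnnorder \gnnws_2 \iff rel(\gnnws_1) \subseteq rel(\gnnws_2)$, so that order-theoretic statements become statements about set inclusion of relations. A key bookkeeping fact is that $rel(\gnnws_2 \circ \gnnws_1) = rel(\gnnws_1) \relcomp rel(\gnnws_2)$, i.e. function composition corresponds to relational composition (in the appropriate order); I would note this at the outset since it reduces everything to elementary reasoning about relational composition.

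First I would prove monotonicity: if $\gnnws_1 \gnnorder \gnnws_1'$ then $\gnnws_2 \circ \gnnws_1 \gnnorder \gnnws_2 \circ \gnnws_1'$, and if $\gnnws_2 \gnnorder \gnnws_2'$ then $\gnnws_2 \circ \gnnws_1 \gnnorder \gnnws_2' \circ \gnnws_1$. Both follow immediately from the fact that $A \subseteq A'$ implies $A \relcomp B \subseteq A' \relcomp B$ and $B \relcomp A \subseteq B \relcomp A'$, which is immediate from the definition of $\relcomp$ (if a witness $\lblfunv_2$ exists for the smaller relation it exists for the larger one). Next, for lub-preservation, let $Y$ be a chain in $\gnntypews$. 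I would show $(\bigsqcup Y) \circ \gnnws = \bigsqcup \{ \gnnws' \circ \gnnws \mid \gnnws' \in Y\}$ and symmetrically $\gnnws \circ (\bigsqcup Y) = \bigsqcup \{ \gnnws \circ \gnnws' \mid \gnnws' \in Y \}$. Using that $rel(\bigsqcup Y) = \bigcup_{\gnnws' \in Y} rel(\gnnws')$ (from the ccpo theorem's proof) and that relational composition distributes over arbitrary unions on each side — $(\bigcup_i A_i) \relcomp B = \bigcup_i (A_i \relcomp B)$ and $B \relcomp (\bigcup_i A_i) = \bigcup_i (B \relcomp A_i)$ — these equalities fall out directly. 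The $\supseteq$ direction of each also needs that the right-hand side is itself the lub of the relevant chain, which holds because $\{\gnnws' \circ \gnnws \mid \gnnws' \in Y\}$ is a chain (by monotonicity) and its lub's relation is the union of the individual relations.

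The only genuinely delicate point is joint continuity from separate continuity: continuity in each argument does not in general imply joint continuity for maps on arbitrary posets, but it does when we only need continuity along chains and the two arguments range over the same directed structure — the standard trick is that for a chain $Y$ in the product, $\bigsqcup \{ \gnnws_2' \circ \gnnws_1' \mid (\gnnws_1', \gnnws_2') \in Y\}$ can be computed by first taking the lub in one coordinate and then the other, using monotonicity to interleave. I expect this diagonal argument to be the main obstacle, so I would state it carefully: given a chain $\{(\gnnws_1^{(n)}, \gnnws_2^{(n)})\}$, show $\bigsqcup_n (\gnnws_2^{(n)} \circ \gnnws_1^{(n)}) = (\bigsqcup_n \gnnws_1^{(n)}) $ post-composed appropriately, by sandwiching $\gnnws_2^{(n)} \circ \gnnws_1^{(n)}$ between $\gnnws_2^{(n)} \circ \gnnws_1^{(m)}$ terms and passing to the lub in each coordinate using the single-argument results already proved. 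Since later uses of this lemma (for $FIX$) only ever fix one argument, it is also acceptable to state and prove the single-argument version as the primary content and remark that the functionals to which $FIX$ is applied are built by fixing one argument at a time; I would mention this as a fallback in case the joint statement is more than is needed downstream.
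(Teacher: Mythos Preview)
Your approach is essentially the same as the paper's: fix one argument, pass to relations via $rel$, use monotonicity of $\relcomp$ for monotonicity and distributivity of $\relcomp$ over unions for lub-preservation, then say the other argument is analogous. The only divergence is your third paragraph on \emph{joint} continuity, which is unnecessary here: the lemma's phrasing ``continuous in both its arguments'' means separately continuous in each argument (exactly what the paper proves and what Theorem~\ref{thm:F-cont} needs, since $F$ is built by fixing one argument at a time), so your fallback is in fact the intended reading and you can drop the diagonal argument entirely.
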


\begin{proof}
We prove that $\circ$ is continuous in the first argument. The proof of continuity in the second argument is analogous.
Let $\gnnws_0$ be a graph neural network and let $F(\gnnws) = \gnnws \circ \gnnws_0$. We start by proving that $F$ is monotone.
If $\gnnws_1 \gnnorder \gnnws_2$, then $rel(\gnnws_1) \subseteq rel(\gnnws_2)$, and we can conclude that 
\[rel(\gnnws_0) \relcomp rel(\gnnws_1) \subseteq rel(\gnnws_0) \relcomp rel(\gnnws_2)\]
and therefore $F(\gnnws_1) \gnnorder F(\gnnws_2)$.

Next we prove the continuity of $F$. Let $Y$ be a non-empty chain, then
\begin{align*}
    rel(F(\bigsqcup Y)) &= rel((\bigsqcup Y) \circ \gnnws_0)\\
    &= rel(\gnnws_0) \relcomp rel(\bigsqcup Y)\\
    &= rel(\gnnws_0) \relcomp \bigcup \{rel(\gnnws) \mid \gnnws \in Y\} \\
    &= \bigcup\{rel(\gnnws_0) \relcomp rel(\gnnws) \mid \gnnws \in Y\}\\
    &= rel(\bigsqcup \{F(\gnnws) \mid \gnnws \in Y\})
\end{align*}
Thus $F(\bigsqcup Y) = \bigsqcup \{F(\gnnws) \mid \gnnws \in Y\}$.
\end{proof}

\begin{lemma}\label{lemma:cont-par}
    Parallel composition $\gnncomp$ is continuous in both its arguments.
\end{lemma}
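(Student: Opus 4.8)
The plan is to mirror the proof of Lemma~\ref{lemma:cont-circ}, since $\gnncomp$ is defined pointwise just like $\circ$, but working with the parallel composition of relations $\relpar$ in place of sequential composition $\relcomp$. Fix a graph neural network $\gnnws_0 : \gnntype{T}{T_0}$ and define $F(\gnnws) = \gnnws \gnncomp \gnnws_0$. By Definition~\ref{def:par-gnn} and the relationship $\gnnws_1 \gnncomp \gnnws_2 (\lblfunv) = \gnnws_1(\lblfunv) | \gnnws_2(\lblfunv)$, the underlying relation satisfies $rel(\gnnws \gnncomp \gnnws_0) = rel(\gnnws) \relpar rel(\gnnws_0)$; more precisely $(\lblfunv, (\lblfunv', \lblfunv'')) \in rel(F(\gnnws))$ iff $(\lblfunv, \lblfunv') \in rel(\gnnws)$ and $(\lblfunv, \lblfunv'') \in rel(\gnnws_0)$, which is exactly $rel(\gnnws) \relpar rel(\gnnws_0)$. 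I would establish this identity first, as it reduces everything to set-theoretic manipulation of relations, and then appeal to the earlier lemma that $\gnnws_1 \gnnorder \gnnws_2 \iff rel(\gnnws_1) \subseteq rel(\gnnws_2)$.

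First I would prove monotonicity of $F$: if $\gnnws_1 \gnnorder \gnnws_2$ then $rel(\gnnws_1) \subseteq rel(\gnnws_2)$, and since $A \mapsto A \relpar rel(\gnnws_0)$ is clearly monotone with respect to $\subseteq$ (any tuple $(\lblfunv,(\lblfunv',\lblfunv''))$ built from $A$ is also built from any superset of $A$), we get $rel(F(\gnnws_1)) \subseteq rel(F(\gnnws_2))$, hence $F(\gnnws_1) \gnnorder F(\gnnws_2)$. Then for continuity, let $Y$ be a non-empty chain and compute, reusing the fact from the ccpo theorem that $rel(\bigsqcup Y) = \bigcup\{rel(\gnnws) \mid \gnnws \in Y\}$:
\begin{align*}
    rel(F(\textstyle\bigsqcup Y)) &= rel((\textstyle\bigsqcup Y) \gnncomp \gnnws_0)\\
    &= rel(\textstyle\bigsqcup Y) \relpar rel(\gnnws_0)\\
    &= \bigl(\textstyle\bigcup \{rel(\gnnws) \mid \gnnws \in Y\}\bigr) \relpar rel(\gnnws_0) \\
    &= \textstyle\bigcup\{rel(\gnnws) \relpar rel(\gnnws_0) \mid \gnnws \in Y\}\\
    &= rel(\textstyle\bigsqcup \{F(\gnnws) \mid \gnnws \in Y\}).
\end{align*}
The fourth equality is the only computational step: it says that $\relpar$ distributes over arbitrary unions in its left argument, which holds because membership of $(\lblfunv,(\lblfunv',\lblfunv''))$ in the left side requires $(\lblfunv,\lblfunv')$ to lie in \emph{some} $rel(\gnnws)$ for $\gnnws \in Y$ together with $(\lblfunv,\lblfunv'') \in rel(\gnnws_0)$ — and that is exactly the right side. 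From $rel(F(\bigsqcup Y)) = rel(\bigsqcup\{F(\gnnws)\mid \gnnws\in Y\})$ and the lemma relating $rel$ to $\gnnorder$ (applied in both directions to get equality of GNNs) we conclude $F(\bigsqcup Y) = \bigsqcup\{F(\gnnws) \mid \gnnws \in Y\}$. Continuity in the second argument is symmetric, using that $\relpar$ also distributes over unions in its right argument.

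I do not expect a genuine obstacle here; the argument is essentially a transcription of Lemma~\ref{lemma:cont-circ}. The one point requiring a little care is the very first step — checking that $rel(\gnnws \gnncomp \gnnws_0)$ really is $rel(\gnnws) \relpar rel(\gnnws_0)$ as a relation of type $\lblfunvtype{T} \times (\lblfunvtype{T_1} \times \lblfunvtype{T_0})$ — because $\gnncomp$ is defined as a function producing pairs of \emph{node-labeling functions} $\langle \gnnws_1(e)(v), \gnnws_2(e)(v)\rangle$ rather than a pair of labelings directly, so one should note that $\lambda v.\langle\lblfunv'(v),\lblfunv''(v)\rangle$ is precisely $\lblfunv' | \lblfunv''$ and that $F(\gnnws)(\lblfunv)$ is defined exactly when both $\gnnws(\lblfunv)$ and $\gnnws_0(\lblfunv)$ are defined. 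Once that bookkeeping is settled, monotonicity and the chain computation go through verbatim.
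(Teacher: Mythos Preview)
Your proposal is correct and follows essentially the same approach as the paper's proof: fix one argument, reduce to the relation-level identity $rel(\gnnws \gnncomp \gnnws_0) = rel(\gnnws) \relpar rel(\gnnws_0)$, use it for monotonicity via $\subseteq$, and then run the same chain computation exploiting distributivity of $\relpar$ over unions. You are in fact slightly more explicit than the paper about justifying that relation-level identity and the distributivity step, but the argument is otherwise a verbatim mirror of Lemma~\ref{lemma:cont-circ}, exactly as the paper does.
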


\begin{proof}
    We prove that $\gnncomp$ is continuous in the first argument. The proof of continuity in the second argument is analogous. 
    Let $\gnnws_0$ be a graph neural network and let $F(\gnnws) = \gnnws \gnncomp \gnnws_0$. We start by proving that $F$ is monotone. 
    If $\gnnws_1 \gnnorder \gnnws_2$, then $rel(\gnnws_1) \subseteq rel(\gnnws_2)$, and we can conclude that 
    \[rel(\gnnws_0) \relpar rel(\gnnws_1) \subseteq rel(\gnnws_0) \relpar rel(\gnnws_2)\]
    and therefore $F(\gnnws_1) \gnnorder F(\gnnws_2)$.
    % \begin{align*}
    %      rel(\gnnws_1) &\subseteq rel(\gnnws_2)\\
    %      rel(\gnnws_1) \times rel(\gnnws_0) &\subseteq rel(\gnnws_2) \times rel(\gnnws_0)\\
    %      rel(\lambda e . (\gnnws_1(e), \gnnws_0(e))) &\subseteq rel(\lambda e . (\gnnws_2(e), \gnnws_0(e)))\\
    %      rel(\gnnws_1 | \gnnws_0) &\subseteq rel(\gnnws_2 | \gnnws_0)\\
    %      \gnnws_1 | \gnnws_0 &\gnnorder \gnnws_2 | \gnnws_0\\
    %      F(\gnnws_1) &\gnnorder F(\gnnws_2)
    % \end{align*}
    Next, we prove the continuity of $F$. Let $Y$ be a non-empty chain, then
    \begin{align*}
    rel(F(\bigsqcup Y)) &= rel((\bigsqcup Y) | \gnnws_0)\\
    &= rel(\bigsqcup Y) \frown rel(\gnnws_0) \\
    &= \bigcup \{rel(\gnnws) \mid \gnnws \in Y\} \frown rel(\gnnws_0) \\
    &= \bigcup\{rel(\gnnws) \frown rel(\gnnws_0) \mid \gnnws \in Y\}\\
    &= rel(\bigsqcup \{F(\gnnws) \mid \gnnws \in Y\})
\end{align*}

Thus $F(\bigsqcup Y) = \bigsqcup \{F(\gnnws) \mid \gnnws \in Y\}$.
    
\end{proof}

\begin{lemma}\label{lemma:cont-cond1}
     The function $cond$ is continuous in its first argument.
\end{lemma}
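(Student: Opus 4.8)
The plan is to mirror the structure of the proofs of Lemmas~\ref{lemma:cont-circ} and~\ref{lemma:cont-par}. Fix two graph neural networks $\gnnws_1, \gnnws_2$ and define $F(\gnnws) = cond(\gnnws, \gnnws_1, \gnnws_2)$, so that $F$ sends the test argument to the resulting GNN. I would first show that $F$ is monotone and then that it preserves least upper bounds of non-empty chains, in both cases working through the characterization $\gnnws \gnnorder \gnnws' \iff rel(\gnnws) \subseteq rel(\gnnws')$ established earlier.

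For monotonicity, assume $\gnnws \gnnorder \gnnws'$ and take $(\lblfunv, \lblfunv') \in rel(F(\gnnws))$. Reading the definition of $cond$ so that $cond(\gnnws, \gnnws_1, \gnnws_2)(\lblfunv)$ is undefined whenever $\gnnws(\lblfunv)$ is, membership of $(\lblfunv,\lblfunv')$ in $rel(F(\gnnws))$ forces $\gnnws(\lblfunv)$ to be defined; since $\gnnws \gnnorder \gnnws'$ we then have $\gnnws'(\lblfunv) = \gnnws(\lblfunv)$, so the same branch of the case split fires for $\gnnws'$ and $F(\gnnws')(\lblfunv) = F(\gnnws)(\lblfunv) = \lblfunv'$. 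Hence $rel(F(\gnnws)) \subseteq rel(F(\gnnws'))$, i.e. $F(\gnnws) \gnnorder F(\gnnws')$.

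For continuity, let $Y$ be a non-empty chain; by monotonicity $\{F(\gnnws) \mid \gnnws \in Y\}$ is a chain and its least upper bound exists by the ccpo theorem, with $\bigsqcup\{F(\gnnws)\mid \gnnws \in Y\} \gnnorder F(\bigsqcup Y)$ following from monotonicity. For the converse inclusion, take $(\lblfunv, \lblfunv') \in rel(F(\bigsqcup Y))$; then $(\bigsqcup Y)(\lblfunv)$ is defined, and since $rel(\bigsqcup Y) = \bigcup\{rel(\gnnws) \mid \gnnws \in Y\}$ there is some $\gnnws \in Y$ with $\gnnws(\lblfunv) = (\bigsqcup Y)(\lblfunv)$, so the case split in $cond$ evaluates identically for $\gnnws$ and for $\bigsqcup Y$ and $(\lblfunv,\lblfunv') \in rel(F(\gnnws)) \subseteq \bigcup\{rel(F(\gnnws')) \mid \gnnws' \in Y\}$. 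Thus $rel(F(\bigsqcup Y)) = \bigcup\{rel(F(\gnnws)) \mid \gnnws \in Y\} = rel(\bigsqcup\{F(\gnnws) \mid \gnnws \in Y\})$, and since a GNN is determined by its relation, $F(\bigsqcup Y) = \bigsqcup\{F(\gnnws) \mid \gnnws \in Y\}$.

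The only delicate point, and the one I expect to be the real obstacle, is the treatment of an undefined test: monotonicity of $F$ requires that when $\gnnws(\lblfunv)$ is undefined so is $cond(\gnnws, \gnnws_1, \gnnws_2)(\lblfunv)$, since otherwise passing from a test undefined at $\lblfunv$ to a larger test that there returns $\lambda v.\mathtt{True}$ would switch the output from $\gnnws_2(\lblfunv)$ to $\gnnws_1(\lblfunv)$ and violate $\gnnorder$. Once this convention is fixed, the rest is a routine adaptation of the two preceding continuity lemmas.
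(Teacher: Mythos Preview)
Your approach mirrors the paper's: fix the second and third arguments, show monotonicity of the resulting $F$, then deduce continuity via the two inclusions (one from monotonicity, one by producing a witness $\gnnws\in Y$ with $\gnnws(\lblfunv)=(\bigsqcup Y)(\lblfunv)$). The paper argues pointwise through the implication $F(\gnnws_2)(\lblfunv)=\lblfunv'\Rightarrow F(\gnnws_3)(\lblfunv)=\lblfunv'$ rather than via $rel$, but that is only a change of packaging.

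Where you differ is precisely the point you isolate at the end, and you are right that it is the crux. The paper's definition of $cond$ is written with a bare ``otherwise'' clause, and its monotonicity argument disposes of the case $\gnnws_2(\lblfunv)\neq\lambda v.\mathtt{True}$ by asserting that then ``the same is true for $F(\gnnws_3)(\lblfunv)$'', without separating out the sub-case where $\gnnws_2(\lblfunv)$ is undefined while $\gnnws_3(\lblfunv)=\lambda v.\mathtt{True}$. Under the literal ``otherwise'' reading that sub-case breaks monotonicity exactly as you describe. Your explicit convention---an undefined test yields an undefined $cond$---is what is needed for the lemma to hold, and your proof is the paper's argument carried out with this subtlety made explicit rather than glossed over.
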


\begin{proof}
    Let $\gnnws_0, \gnnws_1$ be GNNs, and let $F(\gnnws) = cond(\gnnws, \gnnws_0, \gnnws_1)$. We start by proving that $F$ is monotone. We have to show that if $\gnnws_2 \gnnorder \gnnws_3$, then $F(\gnnws_2) \sqsubseteq F(\gnnws_3)$. We consider an arbitrary node-labeling function $\lblfunv$ and show that
    \[F(\gnnws_2)(\lblfunv) = \lblfunv' \implies F(\gnnws_3)(\lblfunv) = \lblfunv'\]
    If $\gnnws_2(\lblfunv) = \lambda v . \mathtt{True}$, then $F(\gnnws_2)(\lblfunv) = \gnnws_0(\lblfunv)$, and from $\gnnws_2 \gnnorder \gnnws_3$, we get that $F(\gnnws_3)(\lblfunv) = \gnnws_0(\lblfunv)$. Let's consider the case $\gnnws_2(\lblfunv) \neq \lambda v . \mathtt{True}$. Then $F(\gnnws_2)(\lblfunv) = \gnnws_1(\lblfunv)$ and the same is true for $F(\gnnws_3)(\lblfunv)$ so the result is immediate.

    To prove the continuity of $F$. Let $Y$ be a non-empty chain, and we will only show that $F(\bigsqcup Y) \gnnorder \bigsqcup \{F(\gnnws) \mid \gnnws \in Y \}$ since from the monotonicity of $F$ we just proved we can conclude $\bigsqcup \{F(\gnnws) \mid \gnnws \in Y \} \gnnorder  F(\bigsqcup Y)$.

    Then let's assume $F(\bigsqcup Y)(\lblfunv) = \lblfunv'$ and we have to determine a $\gnnws \in Y$ such that $F(\gnnws)(\lblfunv) = \lblfunv'$. If $(\bigsqcup Y)(\lblfunv) = \lambda v . \mathtt{True}$, we have $F(\bigsqcup Y)(\lblfunv) = \gnnws_1(\lblfunv)$ and it must be the case that $(\lblfunv, \lambda v . \mathtt{True}) \in rel(\bigsqcup Y)$. But since $rel(\bigsqcup Y) = \bigcup \{rel(\gnnws) \mid \gnnws \in Y\}$ there must exist a $\gnnws \in Y$ such that $\gnnws(\lblfunv) = \lambda v . \mathtt{True}$ and $F(\gnnws)(\lblfunv) = \gnnws_1(\lblfunv)$. The case where $(\bigsqcup Y)(\lblfunv) \neq \lambda v . \mathtt{True}$ is analogous.
\end{proof}

\begin{lemma}\label{lemma:cont-cond2}
    The function $cond$ is continuous in its second and third arguments.
\end{lemma}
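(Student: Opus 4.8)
The plan is to follow exactly the template established for Lemma~\ref{lemma:cont-cond1}, treating the second and third arguments one at a time (and observing that, since $cond$ treats its second and third arguments symmetrically up to negating the test, the two cases are mirror images of each other, so only one needs to be written in full). First I would fix GNNs $\gnnws_0$ and $\gnnws_1$ and set $F(\gnnws) = cond(\gnnws_0, \gnnws, \gnnws_1)$, the function obtained by varying the second argument; the third-argument statement follows by the same argument applied to $G(\gnnws) = cond(\gnnws_0, \gnnws_1, \gnnws)$, swapping the roles of the ``then'' and ``otherwise'' branches.

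Next I would prove monotonicity of $F$: assuming $\gnnws_2 \gnnorder \gnnws_3$, I take an arbitrary node-labeling function $\lblfunv$ and show $F(\gnnws_2)(\lblfunv) = \lblfunv' \implies F(\gnnws_3)(\lblfunv) = \lblfunv'$. If $\gnnws_0(\lblfunv) = \lambda v . \mathtt{True}$ then $F(\gnnws_i)(\lblfunv) = \gnnws_i(\lblfunv)$ for $i = 2,3$, so the implication is precisely $\gnnws_2 \gnnorder \gnnws_3$; otherwise $F(\gnnws_2)(\lblfunv) = F(\gnnws_3)(\lblfunv) = \gnnws_1(\lblfunv)$ and the claim is immediate. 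Hence $F(\gnnws_2) \gnnorder F(\gnnws_3)$, and by Lemma-1 in the excerpt this gives one inclusion $\bigsqcup \{F(\gnnws) \mid \gnnws \in Y\} \gnnorder F(\bigsqcup Y)$ for any non-empty chain $Y$.

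For the reverse inclusion I would assume $F(\bigsqcup Y)(\lblfunv) = \lblfunv'$ and exhibit a $\gnnws \in Y$ with $F(\gnnws)(\lblfunv) = \lblfunv'$. If $\gnnws_0(\lblfunv) = \lambda v . \mathtt{True}$, then $(\bigsqcup Y)(\lblfunv) = \lblfunv'$, i.e. $(\lblfunv, \lblfunv') \in rel(\bigsqcup Y) = \bigcup \{ rel(\gnnws) \mid \gnnws \in Y\}$, so there is a $\gnnws \in Y$ with $\gnnws(\lblfunv) = \lblfunv'$ and thus $F(\gnnws)(\lblfunv) = \gnnws(\lblfunv) = \lblfunv'$. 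If instead $\gnnws_0(\lblfunv) \neq \lambda v . \mathtt{True}$, then $F(\bigsqcup Y)(\lblfunv) = \gnnws_1(\lblfunv) = \lblfunv'$, and picking any $\gnnws \in Y$ (possible since $Y$ is non-empty) gives $F(\gnnws)(\lblfunv) = \gnnws_1(\lblfunv) = \lblfunv'$. Combining the two inclusions yields $F(\bigsqcup Y) = \bigsqcup \{F(\gnnws) \mid \gnnws \in Y\}$, establishing continuity in the second argument; the third-argument case is obtained verbatim with the two branches of $cond$ interchanged.

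There is no real obstacle here: the only point requiring a moment's care is the ``otherwise'' branch of the continuity argument, where the witness $\gnnws \in Y$ is not singled out by the data but simply by non-emptiness of the chain — exactly as in the proof of Lemma~\ref{lemma:cont-cond1}. The rest is a routine case split on whether the fixed test $\gnnws_0(\lblfunv)$ equals $\lambda v . \mathtt{True}$.
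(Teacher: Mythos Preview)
Your proposal is correct and follows essentially the same approach as the paper's own proof: fix the other two arguments, establish monotonicity by a case split on whether the test evaluates to $\lambda v.\mathtt{True}$, then obtain the nontrivial direction of continuity by producing a witness in the chain via $rel(\bigsqcup Y) = \bigcup\{rel(\gnnws)\mid\gnnws\in Y\}$ in the ``then'' case and by non-emptiness of $Y$ in the ``otherwise'' case. Apart from variable naming and the order in which the two cases are handled, the arguments coincide.
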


\begin{proof}
    We prove that $cond$ is continuous in the second argument. The proof of continuity in the third argument is analogous.
    Let $\gnnws_t, \gnnws_0$ be GNNs, and let $F(\gnnws) = cond(\gnnws_t, \gnnws, \gnnws_0)$. We start by proving that $F$ is monotone. %We have to show that if $\gnnws_1 \gnnorder \gnnws_2$, then $F(\gnnws_1) \sqsubseteq F(\gnnws_2)$. We consider an arbitrary node-labeling function $\lblfunv$ and show that

    %\[F(\gnnws_1)(\lblfunv) = \lblfunv' \implies F(\gnnws_2)(\lblfunv) = \lblfunv'\]

    If $\gnnws_t(\lblfunv) = \lambda v . \mathtt{True}$, then $F(\gnnws_1)(\lblfunv) = \gnnws_1(\lblfunv)$, and from $\gnnws_1 \gnnorder \gnnws_2$, we get that $\gnnws_1(\lblfunv) = \lblfunv' \implies \gnnws_2(\lblfunv) = \lblfunv'$. Therefore $F(\gnnws_2)(\lblfunv) = \gnnws_2(\lblfunv) = \lblfunv'$ proves our result. Let's consider the case $\gnnws_t(\lblfunv) \neq \lambda v . \mathtt{True}$. Then $F(\gnnws_1)(\lblfunv) = \gnnws_0(\lblfunv)$ and the same is true for $F(\gnnws_2)(\lblfunv)$ so the result is immediate.

    To prove the continuity of $F$. Let $Y$ be a non-empty chain, and we will only show that $F(\bigsqcup Y) \gnnorder \bigsqcup \{F(\gnnws) \mid \gnnws \in Y \}$ since from the monotonicity of $F$ we just proved we can conclude $\bigsqcup \{F(\gnnws) \mid \gnnws \in Y \} \gnnorder  F(\bigsqcup Y)$.

    Then let's assume $F(\bigsqcup Y)(\lblfunv) = \lblfunv'$ and we have to determine a $\gnnws \in Y$ such that $F(\gnnws)(\lblfunv) = \lblfunv'$. If $\gnnws_t(\lblfunv) \neq \lambda v . \mathtt{True}$, we have $F(\bigsqcup Y)(\lblfunv) = \gnnws_0(\lblfunv)$ and for all $\gnnws \in Y$ we have $F(\gnnws)(\lblfunv) = \gnnws_0(\lblfunv)$. If $\gnnws_t(\lblfunv) = \lambda v . \mathtt{True}$, we have $F(\bigsqcup Y)(\lblfunv) = (\bigsqcup Y)(\lblfunv) = \lblfunv'$ and therefore $(\lblfunv, \lblfunv') \in rel(\bigsqcup Y)$. But then, since $rel(\bigsqcup Y) = \bigcup \{rel(\gnnws) \mid \gnnws \in Y\}$, there must exist a $\gnnws \in Y$ such that $\gnnws(\lblfunv) = \lblfunv'$ and then $F(\gnnws)(\lblfunv) = \lblfunv'$.
\end{proof} 

\noindent Finally, we can show that the functional we pass to $FIX$ is indeed a continuous function.

\begin{theorem}\label{thm:F-cont}
    The functional $F(\gnnws) = cond(\lambda e . \lambda v . \ds{\mathcal{N}}(e)(v) \simop e(v), id, \gnnws \circ \ds{\mathcal{N}})$ is continuous.
\end{theorem}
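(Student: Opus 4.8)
The plan is to avoid manipulating chains and least upper bounds directly, and instead recognise the functional $F$ as a composition of maps whose continuity has already been established in Lemmas~\ref{lemma:cont-circ}--\ref{lemma:cont-cond2}. Then I would invoke the standard domain‑theoretic fact that a composition of continuous maps between chain complete partial orders is continuous: monotonicity of a composite is immediate, and for a non‑empty chain $Y$ one has $(g \circ f)(\bigsqcup Y) = g(\bigsqcup f(Y)) = \bigsqcup (g \circ f)(Y)$, using that $f(Y)$ is again a chain by monotonicity of $f$.

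Concretely, I would first introduce $t_0 = \lambda e . \lambda v . \ds{\mathcal{N}}(e)(v) \simop e(v)$ and note that $t_0$ is a fixed (total) graph neural network of type $\gnntype{T_1}{\mathbb{B}}$ that does not depend on the argument $\gnnws$ of $F$. Next I would factor $F = H \circ G$, where $G(\gnnws) = \gnnws \circ \ds{\mathcal{N}}$ and $H(\psi) = cond(t_0, id, \psi)$. Continuity of $G$ follows from Lemma~\ref{lemma:cont-circ} (continuity of $\circ$ in its first argument, with $\ds{\mathcal{N}}$ held fixed as the second argument), and continuity of $H$ follows from Lemma~\ref{lemma:cont-cond2} (continuity of $cond$ in its third argument, with $t_0$ and $id$ held fixed). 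Composing the two, $F$ is continuous.

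I do not expect a genuine obstacle here; the only points requiring a little care are bookkeeping ones: (i) observing that the first argument of $cond$ in $F$ is constant in $\gnnws$, so that Lemma~\ref{lemma:cont-cond1} is not needed for this theorem, only Lemmas~\ref{lemma:cont-circ} and~\ref{lemma:cont-cond2}; and (ii) checking that $t_0$ is a legitimate GNN, which it is, being a total function and hence a particular partial function. If a self‑contained argument were preferred over the generic composition principle, the fallback would be to unfold the definitions and verify directly, exactly as in the proofs of the preceding lemmas, that $F$ is monotone and that $rel(F(\bigsqcup Y)) = \bigcup\{rel(F(\gnnws)) \mid \gnnws \in Y\}$ for every non‑empty chain $Y$; but the modular argument above is cleaner and reuses the work already done.
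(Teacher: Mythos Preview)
Your proposal is correct and follows essentially the same approach as the paper: factor $F$ as the composition of $\gnnws \mapsto \gnnws \circ \ds{\mathcal{N}}$ with $\psi \mapsto cond(t_0, id, \psi)$, then invoke Lemma~\ref{lemma:cont-circ} and Lemma~\ref{lemma:cont-cond2} together with the fact that a composition of continuous maps is continuous. Your decomposition is in fact stated more carefully than the paper's, which contains a typo swapping the second and third arguments of $cond$ in its displayed factorization.
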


\begin{proof}
    We can see $F$ as the composition of two functions 
    \[F = \lambda f . cond(\lambda e . \lambda v . \ds{\mathcal{N}}(e)(v) \simop e(v), f, id) \circ (\lambda f . f \circ \ds{\mathcal{N}})\]
    and since the function composition of continuous functions is itself continuous, we conclude its continuity from the continuity of function composition (Lemma~\ref{lemma:cont-circ}) and $cond$ (Lemma~\ref{lemma:cont-cond2}).
\end{proof}

We conclude this section with a proof of the well-definedness of $\ds{\cdot}$.

\begin{theorem}
Given a graph $\graph$, an edge-labeling function $\lblfune$, and under the assumption that the functions associated to any function symbols $\psi, \varphi, \sigma \in S$ are well-defined, the semantic interpretation function $\ds{\cdot}: \mathcal{N} \rightarrow \gnntypews$ is well-defined.
\end{theorem}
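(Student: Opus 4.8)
The plan is to prove well-definedness by structural induction on the $\mG$ expression $\mathcal{N}$, showing that $\ds{\mathcal{N}}$ is a well-defined element of $\gnntypews$, i.e.\ a partial function between node-labeling functions of the appropriate type. The key observation is that all the semantic clauses are built from operations that preserve well-definedness: the constant $id$, the pointwise application of well-defined functions $f_\psi, f_\varphi, f_\sigma$, function composition $\circ$, parallel composition $\gnncomp$, the projections $\pi_L, \pi_R$, the $cond$ combinator, and the $FIX$ operator. So the bulk of the proof is checking each syntactic form in turn.

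First I would handle the base cases. For $\id$, $\ds{\id} = id$ is trivially a (total) function of type $\lblfunvtype{T} \rightarrow \lblfunvtype{T}$. For $\psi$, the clause defines $\ds{\psi}(\lblfunv) = \lambda v . f_\psi(\lblfunv(\vSet), \lblfunv(v))$; since $f_\psi : T_1^\star \times T_1 \rightarrow T_2$ is well-defined by assumption, and $\lblfunv(\vSet)$ is a well-defined multiset of labels in $T_1^\star$ and $\lblfunv(v) \in T_1$, the result is a well-defined node-labeling of type $\lblfunvtype{T_2}$. The pre-image and post-image cases $\lhd_\sigma^\varphi, \rhd_\sigma^\varphi$ are analogous: the inner list comprehension over $\inset{\graph}{v}$ (resp.\ $\outset{\graph}{v}$) produces a well-defined multiset in $T_2^\star$ because $f_\varphi$ is well-defined and $\lblfune$ is a fixed total edge-labeling, and then applying the well-defined $f_\sigma$ yields a node-labeling of type $\lblfunvtype{T_3}$.

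For the inductive cases, I would invoke the induction hypothesis that $\ds{\mathcal{N}_1}, \ds{\mathcal{N}_2}$ (resp.\ $\ds{\mathcal{N}}$) are well-defined GNNs. Sequential composition $\ds{\mathcal{N}_1;\mathcal{N}_2} = \ds{\mathcal{N}_2} \circ \ds{\mathcal{N}_1}$ is well-defined as the composition of partial functions, with the type-matching of the intermediate label type being exactly the point where the type system would be needed (here we only need well-definedness as a partial function, so composition of partial functions suffices). Parallel composition $\ds{\mathcal{N}_1 || \mathcal{N}_2} = \ds{\mathcal{N}_1} \gnncomp \ds{\mathcal{N}_2}$ is well-defined by Definition~\ref{def:par-gnn}, which exhibits it explicitly as $\lambda e . \lambda v . \langle \gnnws_1(e)(v), \gnnws_2(e)(v)\rangle$, defined exactly when both components are; the same argument covers the auxiliary $\parcomp$ clause. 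For choice, $\ds{\mathcal{N}_1 \choiceop \mathcal{N}_2} = cond(\pi_L, \ds{\mathcal{N}_1}\circ\pi_R, \ds{\mathcal{N}_2}\circ\pi_R)$ is well-defined because $\pi_L, \pi_R$ are well-defined projections, compositions preserve well-definedness, and $cond(t,f_1,f_2)$ is a well-defined partial function whenever $t, f_1, f_2$ are — the case split on whether $t(\lblfunv) = \lambda v . \mathtt{True}$ is a well-defined dichotomy.

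The main obstacle, and the case requiring the most care, is the star operator $\ds{\starop{\mathcal{N}}} = FIX(F)$ where $F(\gnnws) = cond(\lambda e.\lambda v.\ds{\mathcal{N}}(e)(v) \simop e(v), id, \gnnws \circ \ds{\mathcal{N}})$. For $FIX(F)$ to be well-defined, $F$ must be a continuous endofunction on the ccpo $(\gnntypews, \gnnorder)$; this is exactly the content of Theorem~\ref{thm:F-cont} (relying on Lemmas~\ref{lemma:cont-circ}--\ref{lemma:cont-cond2} and on the ccpo structure established earlier), so by the Knaster–Tarski/Kleene fixed-point theorem $FIX(F) = \bigsqcup\{F^n(\bot_{\gnntypews}) \mid n \geq 0\}$ exists and lies in $\gnntypews$. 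I would also remark that the predicate $\simop$ is well-defined: for any label type $T$ and any $\epsilon \in \mathbb{Q}$, the disjunction in its definition is decidable for each pair $(x,y)$, so $\lambda e . \lambda v . \ds{\mathcal{N}}(e)(v) \simop e(v)$ is a well-defined (partial) node-labeling of Boolean type wherever $\ds{\mathcal{N}}$ is defined, which is what $cond$ needs for its first argument. This completes the induction.
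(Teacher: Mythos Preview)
Your proposal is correct and follows essentially the same approach as the paper: structural induction on $\mathcal{N}$, with the base cases relying on the assumed well-definedness of $f_\psi, f_\varphi, f_\sigma$, the compositional cases reducing to the well-definedness of $\circ$, $\gnncomp$, $\pi_L, \pi_R$, and $cond$, and the $\starop{\mathcal{N}}$ case appealing to Theorem~\ref{thm:F-cont} and Kleene's fixed-point theorem. Your write-up is in fact slightly more thorough than the paper's (e.g.\ the remark on $\simop$), but the structure and key ingredients are identical.
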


\begin{proof}
    The proof is by structural induction on $\mathcal{N}$.
    \begin{description}
        \item[The case $\id$:] The identity function is well-defined.
        \item[The case $\psi$:] The function that maps a node-labeling function $\lblfunv$ to $\lambda v . f_{\psi}(\lblfunv(\vSet), \lblfunv(v))$ is well-defined under the assumptions of the theorem.
        \item[The case $\lhd_{\sigma}^{\varphi}$:] Analogous to case $\psi$.
        \item[The case $\rhd_{\sigma}^{\varphi}$:] Analogous to case $\psi$.
        \item[The case $\mathcal{N}_1 ; \mathcal{N}_2$:] By induction hypothesis both $\ds{\mathcal{N}_1}$ and $\ds{\mathcal{N}_2}$ are well-defined, therefore their composition is also well-defined.
        \item[The case $\mathcal{N}_1 || \mathcal{N}_2$:] By induction hypothesis both $\ds{\mathcal{N}_1}$ and $\ds{\mathcal{N}_2}$ are well-defined, and so is their parallel composition.
        \item[The case $\mathcal{N}_1 \parcomp \mathcal{N}_2$:] By induction hypothesis both $\ds{\mathcal{N}_1}$ and $\ds{\mathcal{N}_2}$ are well-defined, and so are the projection functions $\pi_L, \pi_R$. Their composition then is clearly well-defined as well, and by the same argument for the case $\mathcal{N}_1 || \mathcal{N}_2$ their parallel composition is also well-defined.
        \item[The case $\mathcal{N}_1 \choiceop \mathcal{N}_2$:] By induction hypothesis both $\ds{\mathcal{N}_1}$ and $\ds{\mathcal{N}_2}$ are well-defined, and so are the projection functions $\pi_L, \pi_R$, and their composition with $\ds{\mathcal{N}_1}$ and $\ds{\mathcal{N}_2}$. The function $cond$ is therefore well-defined as well.
        \item[The case $\starop{\mathcal{N}}$:] By induction hypothesis $\ds{\mathcal{N}}$ is well-defined, and by Theorem~\ref{thm:F-cont} we have that $F(\phi) = cond(\lambda e . \lambda v . \ds{\mathcal{N}}(e)(v) \simop e(v), id, \gnnws \circ \ds{\mathcal{N}})$ is continuous. Then by Kleene's fixed-point theorem we have that $FIX(F)$ is well-defined.
    \end{description}
\end{proof}

\subsection{Operational semantics}
For the structural operational semantics, we consider terms of the form
\[ \soscomp{\mathcal{N}}{\lblfunv} \]
where $\mathcal{N}$ is a $\mG$ expression and $\lblfunv$ is a node-labeling function. The transition relation $\rightarrow_{\graph, \lblfune}$ is shown in Table~\ref{tab:structural_semantics} and we denote its reflexive-transitive closure as $\rightarrow_{\graph, \lblfune}^\star$. For any label type $T$ and any rational value $\epsilon \in \mathbb{Q}$, the function symbol $\psi_{\simop}$ in Table~\ref{tab:structural_semantics} is associated to a function $f_{\simop}: (T \times T)^\star \times (T \times T) \rightarrow \mathbb{B}$ defined as 
\[f_{\simop}(X, x) = x_1 \simop x_2\]
where $\simop$ is the predicate from Definition~\ref{def:denotational_semantics}.

\begin{table}[ht!]
\centering
\begin{tabular}{|lc|}
\hline & \\[1mm]

[\texttt{ID}] &
$\sosv{\id}{\lblfunv}{\lblfunv}$ \\[5mm]

[\texttt{APPLY}] &
$\sos{\psi}{\lblfunv}{\id}{\lambda v . f_{\psi}(\lblfunv(\vSet), \lblfunv(v))}$ \\[5mm]

[\texttt{PREIMG}] &
$\sos{\lhd_{\sigma}^{\varphi}}{\lblfunv}{\id}{\lambda v . f_\sigma([f_\varphi(\lblfunv(u), \lblfune((u, v)), \lblfunv(v)) \mid (u, v) \in \inset{\graph}{v}], \lblfunv(v))}$ \\[5mm]

[\texttt{POSTIMG}] &
$\sos{\rhd_{\sigma}^{\varphi}}{\lblfunv}{\id}{\lambda v . f_\sigma([f_\varphi(\lblfunv(u), \lblfune((v, u)), \lblfunv(v)) \mid (v, u) \in \outset{\graph}{v}], 
\lblfunv(v)) }$ \\[5mm]

[$\texttt{SEQ}_1$] & \inference{\sos{\mathcal{N}_1}{\lblfunv}{\mathcal{N}_1'}{\lblfunv'}}
{\sos{\mathcal{N}_1;\mathcal{N}_2}{\lblfunv}{\mathcal{N}_1';\mathcal{N}_2}{\lblfunv'}} \\[5mm] 

[$\texttt{SEQ}_2$] & $\sos{\id;\mathcal{N}_2}{\lblfunv}{\mathcal{N}_2}{\lblfunv}$ \\[5mm] 

[\texttt{SPLIT}] &
$\sos{\mathcal{N}_1 || \mathcal{N}_2}{\lblfunv}{\mathcal{N}_1 \parcomp \mathcal{N}_2}{\lblfunv | \lblfunv}$ \\[5mm]

[$\texttt{PAR}_1$] & \inference{\sos{\mathcal{N}_1}{\pi_L(\lblfunv)}{\mathcal{N}_1'}{\lblfunv'}}
{\sos{\mathcal{N}_1 \parcomp \mathcal{N}_2}{\lblfunv}{\mathcal{N}_1' \parcomp \mathcal{N}_2}{\lblfunv' | \pi_R(\lblfunv)}}\\[5mm] 

[$\texttt{PAR}_2$] & \inference{\sos{\mathcal{N}_2}{\pi_R(\lblfunv)}{\mathcal{N}_2'}{\lblfunv'}}
{\sos{\mathcal{N}_1 \parcomp \mathcal{N}_2}{\lblfunv}{\mathcal{N}_1 \parcomp \mathcal{N}_2'}{\pi_L(\lblfunv )| \lblfunv'}} \\[5mm] 

[\texttt{MERGE}] & $\sos{\id \parcomp \id}{\lblfunv}{\id}{\lblfunv}$ \\[5mm]

[$\mathtt{CHOICE}_1$] &
$\sos{ \mathcal{N}_1 \choiceop \mathcal{N}_2}{\lblfunv}{\mathcal{N}_1}{\pi_R(\lblfunv)}$ if $\pi_L(\lblfunv) = \lambda v . \mathtt{True}$ \\[5mm]

[$\mathtt{CHOICE}_2$] &
$\sos{\mathcal{N}_1 \choiceop \mathcal{N}_2}{\lblfunv}{\mathcal{N}_2}{\pi_R(\lblfunv)}$ if $\pi_L(\lblfunv) \neq \lambda v . \mathtt{True}$ \\[5mm]

[$\mathtt{STAR}$] &
$\sos{\starop{\mathcal{N}}}{\lblfunv}{((\id || \mathcal{N}) ; \psi_{\simop} || \id) ; (\id \choiceop \mathcal{N} ; \starop{\mathcal{N}})}{\lblfunv}$\\[5mm]

\hline
\end{tabular}
\caption{Structural operational semantics of $\mG$}
\label{tab:structural_semantics}
\end{table}

The rules $\texttt{PAR}_1$ and $\texttt{PAR}_2$ in Table~\ref{tab:structural_semantics} may give rise to multiple derivation sequences for a given $\mG$ expression. The next theorem shows that no matter the order of application of the structural semantics rules, the obtained graph neural network has the same input-output behavior, and we say that in this sense, the operational semantics are \textit{deterministic}.

\begin{theorem}[Strong Confluence]\label{th:confluence}
    If $\sos{\mathcal{N}}{\lblfunv}{\mathcal{N}_1}{\lblfunv_1}$ and $\sos{\mathcal{N}}{\lblfunv}{\mathcal{N}_2}{\lblfunv_2}$, then there exists a term $\mathcal{N}'$ and a labeling function $\lblfunv'$ such that $\sos[\star]{\mathcal{N}_1}{\lblfunv_1}{\mathcal{N}'}{\lblfunv'}$ and either $\sos{\mathcal{N}_2}{\lblfunv_2}{\mathcal{N}'}{\lblfunv'}$ or $\mathcal{N}' = \mathcal{N}_2$ and $\lblfunv' = \lblfunv_2$.
\end{theorem}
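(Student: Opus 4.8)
The plan is to prove the claim by structural induction on the run‑time term $\mathcal{N}$, i.e.\ on terms built from the grammar of Definition~\ref{def:syntax} together with the auxiliary constructor $\parcomp$ introduced by rule \texttt{SPLIT}, with the property universally quantified over the node‑labeling function $\lblfunv$ (so that the induction hypothesis can be invoked on a proper subterm at a labeling other than $\lblfunv$, notably at $\pi_L(\lblfunv)$ or $\pi_R(\lblfunv)$). The first step I would carry out is a careful enumeration, for each top‑level shape of $\mathcal{N}$, of which rules of Table~\ref{tab:structural_semantics} can fire on $\soscomp{\mathcal{N}}{\lblfunv}$, establishing that at most one rule is applicable \emph{at the root} in every shape except $\mathcal{N} = \mathcal{N}_1 \parcomp \mathcal{N}_2$, where $\texttt{PAR}_1$ and $\texttt{PAR}_2$ may compete. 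Every other apparent overlap is spurious: $\texttt{SEQ}_1$ and $\texttt{SEQ}_2$ never both apply, since $\texttt{SEQ}_2$ forces the left component to be $\id$ and then $\texttt{SEQ}_1$'s premise cannot be met, because $\soscomp{\id}{\lblfunv}$ has no configuration‑to‑configuration transition (only the value transition \texttt{ID}); \texttt{MERGE} requires both sides of a $\parcomp$ to be $\id$, which for the same reason blocks $\texttt{PAR}_1$ and $\texttt{PAR}_2$; and $\texttt{CHOICE}_1$, $\texttt{CHOICE}_2$ are guarded by complementary conditions on $\pi_L(\lblfunv)$. This analysis must also cover the run‑time shapes, e.g.\ $(\id \parcomp \id) \parcomp \mathcal{N}_2$, where $\texttt{PAR}_1$ is enabled through an inner \texttt{MERGE}.

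The cases then fall into three groups. First, if $\mathcal{N}$ is a base term, or has one of the shapes $\mathcal{N}_1 || \mathcal{N}_2$, $\mathcal{N}_1 \choiceop \mathcal{N}_2$, $\starop{\mathcal{N}}$, or $\id ; \mathcal{N}_2$, the unique applicable rule is an axiom, so $\soscomp{\mathcal{N}}{\lblfunv}$ has at most one outgoing configuration transition; then the two given transitions coincide and the diagram closes with zero steps (the ``$\mathcal{N}' = \mathcal{N}_2$'' alternative). Second, the congruence cases: if $\mathcal{N} = \mathcal{N}_1 ; \mathcal{N}_2$ with $\mathcal{N}_1 \neq \id$, both transitions arise from $\texttt{SEQ}_1$ by lifting transitions $\soscomp{\mathcal{N}_1}{\lblfunv} \to \soscomp{\mathcal{A}}{\lblfunv_a}$ and $\soscomp{\mathcal{N}_1}{\lblfunv} \to \soscomp{\mathcal{B}}{\lblfunv_b}$; I apply the induction hypothesis to $\mathcal{N}_1$, obtain a common reduct $\soscomp{\mathcal{C}}{\lblfunv_c}$ with the stated shape, and re‑lift every step of the two reductions through $\texttt{SEQ}_1$ — valid because $\texttt{SEQ}_1$'s premise is precisely a configuration step, and no intermediate term of either reduction is $\id$ except possibly the common endpoint. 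The same works for $\mathcal{N}_1 \parcomp \mathcal{N}_2$ when both transitions use $\texttt{PAR}_1$: apply the hypothesis to $\mathcal{N}_1$ at $\pi_L(\lblfunv)$ and re‑lift through $\texttt{PAR}_1$, using $\pi_L(\lblfunv' | \pi_R(\lblfunv)) = \lblfunv'$ so that the lifted premises remain valid and the right half $\pi_R(\lblfunv)$ of the labeling is carried along untouched; symmetrically when both use $\texttt{PAR}_2$, and trivially when both use \texttt{MERGE}.

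Third, and this is the only case with real content, $\mathcal{N} = \mathcal{N}_1 \parcomp \mathcal{N}_2$ with one transition by $\texttt{PAR}_1$, from a premise $\soscomp{\mathcal{N}_1}{\pi_L(\lblfunv)} \to \soscomp{\mathcal{N}_1'}{\lblfunv_1'}$ and hence reaching $\soscomp{\mathcal{N}_1' \parcomp \mathcal{N}_2}{\lblfunv_1' | \pi_R(\lblfunv)}$, and the other by $\texttt{PAR}_2$, from $\soscomp{\mathcal{N}_2}{\pi_R(\lblfunv)} \to \soscomp{\mathcal{N}_2'}{\lblfunv_2'}$ and reaching $\soscomp{\mathcal{N}_1 \parcomp \mathcal{N}_2'}{\pi_L(\lblfunv) | \lblfunv_2'}$. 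I would close the diamond with exactly one further step on each side: applying $\texttt{PAR}_2$ to the first configuration requires a step out of $\soscomp{\mathcal{N}_2}{\pi_R(\lblfunv_1' | \pi_R(\lblfunv))}$, and since $\pi_R(\lblfunv_1' | \pi_R(\lblfunv)) = \pi_R(\lblfunv)$ the original $\texttt{PAR}_2$ premise applies verbatim, giving $\soscomp{\mathcal{N}_1' \parcomp \mathcal{N}_2'}{\lblfunv_1' | \lblfunv_2'}$; symmetrically, one $\texttt{PAR}_1$ step from the second configuration reaches the same $\soscomp{\mathcal{N}_1' \parcomp \mathcal{N}_2'}{\lblfunv_1' | \lblfunv_2'}$. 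The hard part is precisely this bookkeeping, which rests on the observation that $\texttt{PAR}_1$ and $\texttt{PAR}_2$ commute because they read and update disjoint projections of the node‑labeling function — ultimately the identities $\pi_L(x | y) = x$ and $\pi_R(x | y) = y$ from Section~\ref{sec:domain}. I do not expect any individual step to be technically deep; the real danger is an incomplete rule‑overlap analysis, so I would be exhaustive about the enabled rules for every run‑time shape. As a closing remark, the argument above in fact always closes the diagram in at most one step on each side, so the stated $\rightarrow^\star$ could be tightened to a single optional step — the diamond form of strong confluence.
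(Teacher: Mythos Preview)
Your proposal is correct and takes essentially the same approach as the paper: a case analysis on the applicable rules, identifying the $\texttt{PAR}_1$/$\texttt{PAR}_2$ overlap on $\mathcal{N}_1 \parcomp \mathcal{N}_2$ as the only genuine source of nondeterminism and closing that diamond in one step on each side via the projection identities. Your write-up is in fact more careful than the paper's, which simply declares the $\texttt{PAR}_1$/$\texttt{PAR}_2$ case to be ``the only interesting case'' without spelling out the rule-overlap analysis or the congruence cases (both-$\texttt{SEQ}_1$, both-$\texttt{PAR}_1$, etc.) that you explicitly handle with the induction hypothesis; your closing observation that the $\rightarrow^\star$ can be tightened to at most one step is also accurate.
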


% \begin{theorem}[Confluence]\label{th:confluence}
%     If $\sos{\mathcal{N}}{\lblfunv}{\mathcal{N}_1}{\lblfunv_1}$ and $\sos{\mathcal{N}}{\lblfunv}{\mathcal{N}_2}{\lblfunv_2}$, then either $\mathcal{N}_1 = \mathcal{N}_2$ and $\lblfunv_1 = \lblfunv_2$ or there exists a term $\mathcal{N}'$ and a labeling function $\lblfunv'$ such that $\sos{\mathcal{N}_1}{\lblfunv_1}{\mathcal{N}'}{\lblfunv'}$ and $\sos{\mathcal{N}_2}{\lblfunv_2}{\mathcal{N}'}{\lblfunv'}$.
% \end{theorem}
\begin{proof}
    By induction on the structure of the derivation sequences. The only interesting case is that for rules $\texttt{PAR}_1$ and $\texttt{PAR}_2$, which can both be applied on expressions of the form $\mathcal{N}_1 \parcomp \mathcal{N}_2$. Suppose that $\sos{\mathcal{N}_1 \parcomp \mathcal{N}_2}{\lblfunv}{\mathcal{N}_1' \parcomp \mathcal{N}_2}{\lblfunv_1 | \pi_R(\lblfunv)}$ by rule $\texttt{PAR}_1$ and that $\sos{\mathcal{N}_1 \parcomp \mathcal{N}_2}{\lblfunv}{\mathcal{N}_1 \parcomp \mathcal{N}_2'}{\pi_L(\lblfunv) | \lblfunv_2}$ by rule $\texttt{PAR}_2$.   
    Then the required term is $\mathcal{N}' = \mathcal{N}_1' \parcomp \mathcal{N}_2'$ and the labeling function is $\lblfunv' = \lblfunv_1 | \lblfunv_2$, because $\sos{\mathcal{N}_1' \parcomp \mathcal{N}_2}{\lblfunv_1 | \pi_R(\lblfunv)}{\mathcal{N}_1' \parcomp \mathcal{N}_2'}{\lblfunv_1 | \lblfunv_2}$ by rule $\texttt{PAR}_2$ and $\sos{\mathcal{N}_1 \parcomp \mathcal{N}_2'}{\pi_L(\lblfunv) | \lblfunv_2}{\mathcal{N}_1' \parcomp \mathcal{N}_2'}{\lblfunv_1 | \lblfunv_2}$ by rule $\texttt{PAR}_1$.
\end{proof}

Finally, we can define the semantic interpretation function on the structural operational semantics. 

\begin{definition}[Structural operational semantics]\label{def:sos}
Given a graph $\graph$ and an edge-labeling function $\lblfune$, we define the semantic interpretation function $\sosf{\cdot}: \mathcal{N} \rightarrow \gnntypews$ such that \[
    \sosf{\mathcal{N}}(\lblfunv) = \begin{cases}
        \lblfunv' & \text{if } \sosv[\star]{\mathcal{N}}{\lblfunv}{\lblfunv'} \\
        \texttt{undef} & \text{otherwise}
    \end{cases}
\]
\end{definition}

\subsection{Language macros}

We can enrich $\mG$ with a number of macros that simplifies the job of programming a graph neural network. In this section, we describe the means to define variables, functions, and shortcuts for the definition of if-then-else and while loop expressions. Some of these extensions require us to introduce an additional set of symbols $\mathcal{X} ::= X \mid Y \mid Z \mid \cdots$ to the language's syntax in order to denote variables and function names. We will also use function symbols $p_L$ and $p_R$ to denote left and right projections of node-labeling functions.
%, and $p_i$ for $i \in 1, 2, \ldots$ to denote the corresponding $\pi_i$ projection functions.

\paragraph{Variable assignments} It is useful sometimes to assign an entire expression to a single label, so that whenever that label occurs in a program, the referred expression is substituted to it. For this purpose, we introduce variable assignments in the form of \texttt{let} expressions. A \texttt{let} expression has the form $\texttt{let } \mathcal{X} = \mathcal{N} \texttt{ in } \mathcal{N}'$. The intuitive meaning of such expression is that all occurrences of the variable symbol $\mathcal{X}$ in the expression $\mathcal{N}'$ are substituted with the expression $\mathcal{N}$. This substitution is purely syntactical, and \texttt{let} expressions are simply rewritten as $\mathcal{N}_{[\mathcal{N}/\mathcal{X}]}'$.
Clearly, multiple \texttt{let} expressions can be chained, so that we can write $\texttt{let } \mathcal{X}_1 = \mathcal{N}_1, \mathcal{X}_2 = \mathcal{N}_2, \ldots, \mathcal{X}_k = \mathcal{N}_k \texttt{ in } \mathcal{N}$ as a shorthand for $\texttt{let } \mathcal{X}_1 = \mathcal{N}_1 \texttt{ in } (\texttt{let } \mathcal{X}_2 = \mathcal{N}_2 \texttt{ in } (\ldots (\texttt{let } \mathcal{X}_k = \mathcal{N}_k \texttt{ in } \mathcal{N})\ldots))$.
\paragraph{Function definitions} Similarly to variable assignments, we also consider the definition of functions. A \texttt{def} expression has the form $\texttt{def } \mathcal{X}(\mathcal{X}_1, \ldots \mathcal{X}_k) \{ \mathcal{N} \} \texttt{ in } \mathcal{N}'$. Its intuitive meaning is that whenever the variable symbol $\mathcal{X}$ followed by a parenthesized list of values $(\mathcal{N}_1, \ldots, \mathcal{N}_k)$ (which from now on we refer to as a \texttt{call} expression) occurs in $\mathcal{N}'$ the entire \texttt{call} expression is substituted with the expression $\mathcal{N}$ in which each variable symbol $\mathcal{X}_1, \ldots, \mathcal{X}_k$ has been substituted with the corresponding expression $\mathcal{N}_1, \ldots, \mathcal{N}_k$. As in the case of variables, this substitution is purely syntactical is rewritten as
%
% \[\sos{\texttt{def } f(X_1, \ldots, X_k)\{\mathcal{N}\} \texttt{ in } \mathcal{N}'}{\lblfunv}{\mathcal{N}_{[f(\mathcal{N}_1, \ldots, \mathcal{N}_k)/\mathcal{N}_{[\mathcal{N}_1/X_1, \ldots, \mathcal{N}_k/X_k]}]}'}{\lblfunv}\]
% \noindent or, equivalently
\[\mathcal{N}_{[\mathcal{X}(\mathcal{N}_1, \ldots, \mathcal{N}_k)/\texttt{let } \mathcal{X}_1 = \mathcal{N}_1, \ldots, \mathcal{X}_k = \mathcal{N}_k \texttt{ in } \mathcal{N}]}'\]
\paragraph{If-then-else selection} A typical programming language construct is the \emph{if-then-else} selection operator that evaluates a Boolean condition and executes one out of two branches accordingly. In $\mG$ this operator can be implemented as a macro using the choice operator. We consider a graph neural network $\gnnws_1: \gnntype{T_1}{\mathbb{B}}$ denoted by an expression $\mathcal{N}_1$ to provide a Boolean node-labeling function, and then execute the graph neural network denoted by $\mathcal{N}_2$ if the Boolean labeling is \texttt{True} for every node, otherwise the GNN denoted by $\mathcal{N}_3$ is executed instead. Then we can introduce the term $\texttt{if } \mathcal{N}_1 \texttt{ then } \mathcal{N}_2 \texttt{ else } \mathcal{N}_3$ as a shorthand for $(\mathcal{N}_1 || \id) ; (\mathcal{N}_2 \choiceop \mathcal{N}_3)$.
%
% \paragraph{Fixed iteration loops} The basic fixpoint operator iterates until the output labeling is considered equal to the input labeling. At times the programmer might want to instead iterate for a fixed number of steps, that is, execute the body of the fixpoint expression only for $k \geq 1$ steps. For this purpose, we introduce a term $\texttt{repeat } \mathcal{N} \texttt{ for } k$ with $k \in \mathbb{N}^{+}$ that is a shorthand for the $\mG$ expression
% %
% \[(\psi_0 || \id) ;
% \starop{(\texttt{if } \psi_{pr_1} ; \psi_{\le_k} \texttt{ then } (\psi_{pr_1} ; \psi_{succ}) || (\psi_{pr_2} ; \mathcal{N}) \texttt{ else } \id)} ;
% \psi_{pr_2}\]
% %
% where $\psi_0$ denotes a constant function that maps all node labels to 0, $\psi_{pr_1}$ and $\psi_{pr_2}$ denote the left and right projection functions, $\psi_{\le_k}$ denotes a Boolean function that maps nodes to \texttt{True} if their (numeric) label is smaller than $k$, and $\psi_{succ}$ denotes the successor function over natural numbers.
%
\paragraph{Fixpoints with variables} Oftentimes we might want to include constant terms in our fixpoint computations, that is, include GNNs whose outputs do not depend on the current iterate solution, but only on the initial node labels as received by the star operator. To this end, we introduce a macro expression $\texttt{fix } \mathcal{X} = \mathcal{N}, \texttt{let } \mathcal{Y}_1 = \mathcal{N}_1, \ldots, \mathcal{Y}_k = \mathcal{N}_k \texttt{ in } \mathcal{N}'$ where the variable symbol $\mathcal{X}$ is the fixpoint variable and $\mathcal{N}$ is the GNN that computes the initial value for the fixpoint computation. The variables $\mathcal{Y}_i$ for $i \in 1, \ldots, k$ denote the values we want to pre-compute and stay constant during the computation. To clarify this behavior formally, we show how this macro can be expressed in the language.
%, where in the following $\mathcal{N}_1, \mathcal{N}_2, \ldots, \mathcal{N}_k$ indicate the distinct maximal sub-expressions which do not lie on the right-hand side of a sequential composition with $\mathcal{X}$ and are not sub-expressions of a choice operator, if any.
%These sub-expression can always be inferred statically by examining $\mathcal{N}'$ only.
We restrict the placement of variable symbols $\mathcal{X}, \mathcal{Y}$ to not appear on the right-hand side of a sequential composition expression. Then, this macro can be implemented as
\begin{align*}
&((\mathcal{N}_1 || \mathcal{N}_2 || \cdots || \mathcal{N}_k) || \mathcal{N});\\
&\starop{((p_1 || p_2 || \cdots || p_k) || \mathcal{N}_{[p_1/\mathcal{Y}_1, \ldots,  p_k/\mathcal{Y}_k, p_R/\mathcal{X}]}')};\\
&p_{R}
\end{align*}
where the terms $p_i$ denote a composition of left and right projections functions to obtain the node-labeling produced by $\ds{\mathcal{N}_i}$ in the parallel composition of $\mathcal{N}_i$ for $i \in 1, \ldots, k$. Indeed, it is even possible to statically infer the largest possible set of sub-expressions $\mathcal{N}_i$ that can be pre-computed, and therefore we can simply write $\texttt{fix } \mathcal{X} = \mathcal{N} \texttt{ in } \mathcal{N}'$ to have the same effect. This macro then essentially implements $\mu$-calculus style fixpoint computations, with the added flexibility of being able to set any initial value for the fixpoint computation.

In the case there is no such sub-expression $\mathcal{N}_i$, the macro reduces to
\[\mathcal{N};\starop{\mathcal{N}_{[\id/\mathcal{X}]}}\]
\paragraph{Fixed iteration loops} The basic fixpoint operator iterates until the output labeling is considered equal to the input labeling. At times the programmer might want to instead iterate for a fixed number of steps, that is, execute the body of the fixpoint expression only for $k \geq 1$ steps. For this purpose, we expand on the previous macro to introduce $\texttt{repeat } \mathcal{X} = \mathcal{N} \texttt{ in } \mathcal{N}' \texttt{ for } k$ with $k \in \mathbb{N}^{+}$ that is a shorthand for the $\mG$ expression
\begin{align*}
    &(\texttt{fix } \mathcal{X} = zero || \mathcal{N} \texttt{ in }\\
    &\qquad \texttt{if }  \mathcal{X} ; p_L ; <_k \texttt{ then }\\
    &   \qquad\qquad (\mathcal{X} ; p_L ; succ) || \mathcal{N}_{[\mathcal{X};p_R/\mathcal{X}]}' \\
    &\qquad\texttt{else }\\
    &   \qquad\qquad  \mathcal{X});p_R
\end{align*}
% \begin{align*}
%     &(zero || \mathcal{N});\starop{(\texttt{if }  p_L ; \le_k \texttt{ then } p_L ; succ \, || \, p_R ; \mathcal{N}' \texttt{ else } \id)};p_R
% \end{align*}
%
where $zero$ denotes a constant function that maps all node labels to 0, $<_k$ denotes a Boolean function that maps nodes to \texttt{True} if their (integer) label is smaller than $k$, and $succ$ denotes the successor function over natural numbers. Clearly, we can also have a \texttt{repeat} macro without having to specify a variable and an initial condition: $\texttt{repeat } \mathcal{N}' \texttt{ for } k$ can be rewritten as $\texttt{repeat } \mathcal{X} = \id \texttt{ in } \mathcal{N}' \texttt{ for } k$.

\section{Equivalence of operational and denotational semantics}\label{sec:equivalence}

Both the denotational semantics and operational semantics of $\mG$ define a mapping from node-labeling functions to node-labeling functions, i.e., a graph neural network. In the next theorem, we show that they define the \textit{same} graph neural network.

\begin{theorem}[Equivalence of denotational and structural operational semantics]

For any $\mG$ expression $\mathcal{N}$, any graph $\mathcal{G}$ and any edge-labeling function $\lblfune$, we have that 
%if for any variable symbol $X \in \mathcal{X}$ and any $\mG$ expression $\mathcal{N}'$
% \begin{equation}\label{eq:env-var-step}
% \ds[{[X \leftarrow \ds{\mathcal{N}'}]}]{\mathcal{N}} = \sosf{\mathcal{N}_{[\mathcal{N}' / X]}}
% \end{equation}
% \begin{equation}\label{eq:env-fun-step}
% \delta_f(f)(\Tilde{\mathcal{N}}) = \texttt{let } \Tilde{X} = \Tilde{\mathcal{N}}  \texttt{ in } \mathcal{N} \iff \rho_f(f)(\ds{\mathcal{N}_1}, \ldots, \ds{\mathcal{N}_k}) = \ds[{[\Tilde{X} \leftarrow \ds{\mathcal{N}_1}, \ldots, \ds{\mathcal{N}_k}]}]{\mathcal{N}}
% \end{equation}
% then
\[ \ds{\mathcal{N}} = \sosf{\mathcal{N}} \]
\end{theorem}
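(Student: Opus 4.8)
The plan is to establish the two orderings $\sosf{\mathcal{N}} \gnnorder \ds{\mathcal{N}}$ and $\ds{\mathcal{N}} \gnnorder \sosf{\mathcal{N}}$ separately and then conclude $\ds{\mathcal{N}} = \sosf{\mathcal{N}}$ from the antisymmetry of $\gnnorder$ (established in the proof that $(\gnntypews,\gnnorder)$ is a ccpo). Throughout, an equation between partial values is read as ``both sides undefined, or both defined and equal''.

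For the first ordering (\emph{soundness}) I would first prove a one-step lemma: if $\sos{\mathcal{N}}{\lblfunv}{\mathcal{N}'}{\lblfunv'}$ then $\ds{\mathcal{N}}(\lblfunv) = \ds{\mathcal{N}'}(\lblfunv')$, and if $\sosv{\mathcal{N}}{\lblfunv}{\lblfunv'}$ then $\ds{\mathcal{N}}(\lblfunv) = \lblfunv'$. This is shown by case analysis on the last rule of Table~\ref{tab:structural_semantics}. The axioms \texttt{ID}, \texttt{APPLY}, \texttt{PREIMG}, \texttt{POSTIMG} reproduce the corresponding clauses of $\ds{\cdot}$ verbatim; $\texttt{SEQ}_2$, \texttt{SPLIT}, \texttt{MERGE}, $\texttt{CHOICE}_1$, $\texttt{CHOICE}_2$ are immediate unfoldings of the definitions of $\circ$, $\gnncomp$, $\parcomp$, $cond$ and of the projections $\pi_L,\pi_R$; $\texttt{SEQ}_1$, $\texttt{PAR}_1$, $\texttt{PAR}_2$ use the induction hypothesis on the premise and then the relevant definition. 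The substantive case is \texttt{STAR}: using $FIX(F) = F(FIX(F))$ (Kleene, via Theorem~\ref{thm:F-cont}) I would unfold $\ds{\starop{\mathcal{N}}}$ once and then compute $\ds{((\id || \mathcal{N}) ; \psi_{\simop} || \id) ; (\id \choiceop \mathcal{N} ; \starop{\mathcal{N}})}(\lblfunv)$ by successively expanding the clauses for $||$, $;$, $\choiceop$ and the function symbol $\psi_{\simop}$, checking the two results coincide — the key point being that $\simop$ is symmetric, so the Boolean labeling $\lambda v . \lblfunv(v) \simop \ds{\mathcal{N}}(\lblfunv)(v)$ synthesised by $\psi_{\simop}$ equals the guard $\lambda v . \ds{\mathcal{N}}(\lblfunv)(v) \simop \lblfunv(v)$ used in the definition of $\ds{\starop{\mathcal{N}}}$. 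Extending the lemma to $\rightarrow_{\graph,\lblfune}^{\star}$ by induction on derivation length then gives: $\sosv[\star]{\mathcal{N}}{\lblfunv}{\lblfunv'}$ implies $\ds{\mathcal{N}}(\lblfunv) = \lblfunv'$, hence $\sosf{\cdot}$ is well defined and $\sosf{\mathcal{N}} \gnnorder \ds{\mathcal{N}}$.

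For the second ordering (\emph{completeness}) I would prove, by structural induction on $\mathcal{N}$, that $\ds{\mathcal{N}}(\lblfunv) = \lblfunv'$ implies $\sosv[\star]{\mathcal{N}}{\lblfunv}{\lblfunv'}$. Two auxiliary facts about $\rightarrow_{\graph,\lblfune}$ are needed first: (i) a \emph{factoring} property — since \texttt{ID} is the only rule whose right-hand side is a bare node-labeling, $\sosv[\star]{\mathcal{N}}{\lblfunv}{\lblfunv'}$ holds iff $\sos[\star]{\mathcal{N}}{\lblfunv}{\id}{\lblfunv'}$; and (ii) \emph{context lemmas} that let one replay a reduction $\sos[\star]{\mathcal{N}}{\lblfunv}{\id}{\lblfunv'}$ inside a ``$\_ ; \mathcal{N}_2$'' context (repeated $\texttt{SEQ}_1$, then $\texttt{SEQ}_2$) and inside either component of a ``$\_ \parcomp \_$'' context (repeated $\texttt{PAR}_1/\texttt{PAR}_2$, then \texttt{MERGE}). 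With these in hand, the cases $\id$, $\psi$, $\lhd_{\sigma}^{\varphi}$, $\rhd_{\sigma}^{\varphi}$ take one or two steps; $\mathcal{N}_1 ; \mathcal{N}_2$ uses the two induction hypotheses and the sequential context lemma; $\mathcal{N}_1 || \mathcal{N}_2$ fires \texttt{SPLIT}, applies both hypotheses under the $\parcomp$ context, and closes with \texttt{MERGE}; and $\mathcal{N}_1 \choiceop \mathcal{N}_2$ picks $\texttt{CHOICE}_1$ or $\texttt{CHOICE}_2$ according to whether $\pi_L(\lblfunv) = \lambda v . \mathtt{True}$ and then applies the relevant hypothesis.

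The remaining case $\starop{\mathcal{N}}$ is where the real work lies, and I expect it to be the main obstacle. Here $\ds{\starop{\mathcal{N}}}(\lblfunv) = FIX(F)(\lblfunv) = \big(\bigsqcup_{n \geq 0} F^n(\bot_{\gnntypews})\big)(\lblfunv)$ and, since $rel(\bigsqcup_n F^n(\bot_{\gnntypews})) = \bigcup_n rel(F^n(\bot_{\gnntypews}))$ by the ccpo construction, $\ds{\starop{\mathcal{N}}}(\lblfunv) = \lblfunv'$ yields some $n$ with $F^n(\bot_{\gnntypews})(\lblfunv) = \lblfunv'$. I would then argue by an inner induction on $n$ that $F^n(\bot_{\gnntypews})(\lblfunv) = \lblfunv'$ implies $\sosv[\star]{\starop{\mathcal{N}}}{\lblfunv}{\lblfunv'}$; the case $n = 0$ is vacuous. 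For $n{+}1$, writing $\lblfunv_N = \ds{\mathcal{N}}(\lblfunv)$, definedness of $F^{n+1}(\bot_{\gnntypews})(\lblfunv) = cond(\lambda e . \lambda v . \ds{\mathcal{N}}(e)(v) \simop e(v),\, id,\, F^n(\bot_{\gnntypews}) \circ \ds{\mathcal{N}})(\lblfunv)$ forces $\lblfunv_N$ to be defined, so the structural induction hypothesis for $\mathcal{N}$ gives $\sos[\star]{\mathcal{N}}{\lblfunv}{\id}{\lblfunv_N}$. Applying \texttt{STAR} and the context lemmas, $\soscomp{\starop{\mathcal{N}}}{\lblfunv}$ reduces to $\soscomp{\id \choiceop \mathcal{N} ; \starop{\mathcal{N}}}{b | \lblfunv}$ with $b = \lambda v . \lblfunv(v) \simop \lblfunv_N(v)$; if $b = \lambda v . \mathtt{True}$ then $\lblfunv' = \lblfunv$ and $\texttt{CHOICE}_1$ followed by \texttt{ID} finishes, while otherwise $\lblfunv' = F^n(\bot_{\gnntypews})(\lblfunv_N)$, and $\texttt{CHOICE}_2$ together with one further use of the structural hypothesis drives $\soscomp{\starop{\mathcal{N}}}{\lblfunv}$ to $\soscomp{\starop{\mathcal{N}}}{\lblfunv_N}$, whence the inner hypothesis at $n$ applied to $\lblfunv_N$ concludes. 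The delicate part is orchestrating this Kleene-approximant induction nested inside the structural induction, together with the bookkeeping of the long reduction sequence that \texttt{STAR} sets in motion; everything else is routine unfolding of the definitions. Combining the two orderings with antisymmetry of $\gnnorder$ gives $\ds{\mathcal{N}} = \sosf{\mathcal{N}}$.
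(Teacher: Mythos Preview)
Your proposal is correct and follows the same overall architecture as the paper: prove the two inequalities $\sosf{\mathcal{N}} \gnnorder \ds{\mathcal{N}}$ and $\ds{\mathcal{N}} \gnnorder \sosf{\mathcal{N}}$ separately, then conclude by antisymmetry. Your soundness direction is essentially identical to the paper's Lemma~\ref{lemma:sos-to-ds} (one-step preservation of denotation under $\rightarrow_{\graph,\lblfune}$, lifted to $\rightarrow_{\graph,\lblfune}^{\star}$ by induction on length), and your observation that symmetry of $\simop$ is needed to reconcile the guard produced by $(\id || \mathcal{N});\psi_{\simop}$ with the guard in the denotational clause is a detail the paper leaves implicit.

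The substantive divergence is in the completeness direction for $\starop{\mathcal{N}}$. You unfold $FIX(F)$ as $\bigsqcup_n F^n(\bot_{\gnntypews})$, extract a finite $n$ with $F^n(\bot_{\gnntypews})(\lblfunv)=\lblfunv'$, and then run an explicit inner induction on $n$ to build the operational derivation, threading the structural hypothesis for $\mathcal{N}$ through each unfolding. The paper instead works order-theoretically throughout Lemma~\ref{lemma:ds-to-sos}: rather than touching individual Kleene iterates, it proves the general fact that $f(d)\sqsubseteq d$ for continuous $f$ implies $FIX(f)\sqsubseteq d$, and then verifies the single inequality $F(\sosf{\starop{\mathcal{N}}}) \sqsubseteq \sosf{\starop{\mathcal{N}}}$ by unfolding the \texttt{STAR} rule once and appealing to the monotonicity of $\circ$ and $cond$ (Lemmas~\ref{lemma:cont-circ}, \ref{lemma:cont-cond1}, \ref{lemma:cont-cond2}). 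Your route is more operational and self-contained---it never invokes a Park-style least-pre-fixed-point principle---at the price of the nested induction you rightly flag as the delicate part. The paper's route is shorter and reuses the continuity/monotonicity lemmas already established, but requires stating and proving the pre-fixed-point lemma. Both are standard and sound; the paper's argument also phrases the non-$\star$ cases via monotonicity (e.g.\ $\ds{\mathcal{N}_2}\circ\ds{\mathcal{N}_1}\gnnorder\sosf{\mathcal{N}_2}\circ\sosf{\mathcal{N}_1}\gnnorder\sosf{\mathcal{N}_1;\mathcal{N}_2}$) rather than pointwise, but this is a stylistic difference only, and the same context lemmas you identify are proved inline there.
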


\begin{proof}
Since both functions return GNNs, which belong to a partially ordered set $\gnntypews$ with ordering relation $\gnnorder$, it is sufficient to show that for any $\mG$ expression $\mathcal{N}$:
\begin{itemize}
    \item $\sosf{\mathcal{N}} \gnnorder \ds{\mathcal{N}}$
    \item $\ds{\mathcal{N}} \gnnorder \sosf{\mathcal{N}}$ 
\end{itemize}
Lemma~\ref{lemma:sos-to-ds} and Lemma~\ref{lemma:ds-to-sos} prove that both these conditions are satisfied by the semantics of $\mG$.
\end{proof}

\begin{lemma}\label{lemma:sos-to-ds}
For every expression $\mathcal{N}$ of $\mG$, we have $\sosf{\mathcal{N}} \gnnorder \ds{\mathcal{N}}$.
\end{lemma}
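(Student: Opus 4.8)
The plan is to prove $\sosf{\mathcal{N}} \gnnorder \ds{\mathcal{N}}$ by structural induction on $\mathcal{N}$, but the star case will force us to strengthen what we carry along the induction. Concretely, I would first establish a preservation-style lemma: whenever $\sos{\mathcal{N}}{\lblfunv}{\mathcal{N}'}{\lblfunv'}$ is a single step of the operational semantics, then $\ds{\mathcal{N}}(\lblfunv) = \ds{\mathcal{N}'}(\lblfunv')$ (as partial functions, meaning one side is defined iff the other is, with equal values). This is the heart of the matter and is itself proved by induction on the derivation of the transition, with one case per rule of Table~\ref{tab:structural_semantics}. The base rules $\texttt{ID}$, $\texttt{APPLY}$, $\texttt{PREIMG}$, $\texttt{POSTIMG}$ are immediate by comparing with the corresponding clause of Definition~\ref{def:denotational_semantics} (using $\ds{\id}=id$). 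The rules $\texttt{SEQ}_1, \texttt{SEQ}_2$ follow from $\ds{\mathcal{N}_1;\mathcal{N}_2} = \ds{\mathcal{N}_2}\circ\ds{\mathcal{N}_1}$ and the inductive hypothesis on the premise; $\texttt{SPLIT}$, $\texttt{PAR}_1$, $\texttt{PAR}_2$, $\texttt{MERGE}$ follow from the semantics of $\parcomp$, namely $\ds{\mathcal{N}_1 \parcomp \mathcal{N}_2} = (\ds{\mathcal{N}_1}\circ\pi_L)\gnncomp(\ds{\mathcal{N}_2}\circ\pi_R)$, together with the identities $\pi_L(\lblfunv|\lblfunv)=\pi_R(\lblfunv|\lblfunv)=\lblfunv$ and $\pi_L(\lblfunv)|\pi_R(\lblfunv)=\lblfunv$; $\texttt{CHOICE}_1, \texttt{CHOICE}_2$ follow from the definition of $cond$ and $\ds{\mathcal{N}_1\choiceop\mathcal{N}_2} = cond(\pi_L, \ds{\mathcal{N}_1}\circ\pi_R, \ds{\mathcal{N}_2}\circ\pi_R)$.

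The genuinely delicate case of the preservation lemma is $\texttt{STAR}$, where I must check that
\[
\ds{\starop{\mathcal{N}}}(\lblfunv) = \ds{((\id || \mathcal{N}) ; \psi_{\simop} || \id) ; (\id \choiceop \mathcal{N} ; \starop{\mathcal{N}})}(\lblfunv).
\]
Here I would unfold the right-hand side using the compositional clauses: the first factor $(\id||\mathcal{N});\psi_{\simop}$ applied to $\lblfunv$ produces the Boolean labeling $\lambda v. (f_{\simop}(\ldots))$ which, by the definition of $f_{\simop}$ and $\simop$, is exactly $\lambda v.\, (\ds{\mathcal{N}}(\lblfunv)(v) \simop \lblfunv(v))$; tupling with $\id$ via $||$ puts the original $\lblfunv$ in the right component; then $\id \choiceop \mathcal{N};\starop{\mathcal{N}}$ applies $cond$ exactly as in the $FIX$ functional $F$ from Theorem~\ref{thm:F-cont}. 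So the right-hand side equals $F(\ds{\starop{\mathcal{N}}})(\lblfunv)$, and since $\ds{\starop{\mathcal{N}}} = FIX(F)$ is a fixed point of $F$ (by Kleene, using continuity of $F$ from Theorem~\ref{thm:F-cont}), we get $F(\ds{\starop{\mathcal{N}}}) = \ds{\starop{\mathcal{N}}}$, which closes the case. I expect this unfolding to be the main obstacle, mostly in carefully matching the tupling/projection bookkeeping of the $||$ and $\parcomp$ operators against the shape of $F$.

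Given the preservation lemma, an easy induction on the length of a terminating derivation $\sosv[\star]{\mathcal{N}}{\lblfunv}{\lblfunv'}$ shows that $\sosv[\star]{\mathcal{N}}{\lblfunv}{\lblfunv'}$ implies $\ds{\mathcal{N}}(\lblfunv) = \lblfunv'$: each step preserves the denotation, and the final configuration $\langle \id, \lblfunv'\rangle$ — reached because $\sosv[\star]{\mathcal{N}}{\lblfunv}{\lblfunv'}$ abbreviates $\sos[\star]{\mathcal{N}}{\lblfunv}{\id}{\lblfunv'}$ — has denotation $\ds{\id}(\lblfunv') = \lblfunv'$. Hence whenever $\sosf{\mathcal{N}}(\lblfunv) = \lblfunv'$ we have $\ds{\mathcal{N}}(\lblfunv) = \lblfunv'$, which is precisely $\sosf{\mathcal{N}} \gnnorder \ds{\mathcal{N}}$ by the definition of $\gnnorder$ (and of $\sosf{\cdot}$, which returns $\texttt{undef}$ whenever no such derivation exists). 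One subtlety to flag: strong confluence (Theorem~\ref{th:confluence}) is what guarantees $\sosf{\cdot}$ is well-defined as a function, but for this direction we only need soundness of each individual terminating derivation, so confluence is not strictly required here.
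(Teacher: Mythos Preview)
Your proposal is correct and follows essentially the same approach as the paper: both establish the single-step preservation property $\sos{\mathcal{N}}{\lblfunv}{\mathcal{N}'}{\lblfunv'} \Rightarrow \ds{\mathcal{N}}(\lblfunv) = \ds{\mathcal{N}'}(\lblfunv')$ by induction on the derivation of the transition (the paper splits this into a value-step case and a small-step case, but that is only a presentational difference), handle \texttt{STAR} via the fixed-point equation $FIX(F)=F(FIX(F))$, and then conclude by induction on the length of the terminating derivation. Your remark that confluence is not needed for this direction is also accurate.
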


\begin{proof} 
We shall prove that for any expression $\mathcal{N}$ and any two node-labeling functions $\lblfunv, \lblfunv'$
\begin{equation}\label{eq:sos-to-ds}
\sosv[\star]{\mathcal{N}}{\lblfunv}{\lblfunv'} \implies \ds{\mathcal{N}}(\lblfunv) = \lblfunv'
\end{equation}

For that end, we will show that 
\begin{equation}\label{eq:value-step}
\sosv{\mathcal{N}}{\lblfunv}{\lblfunv'} \implies \ds{\mathcal{N}}(\lblfunv) = \lblfunv'
\end{equation}
\begin{equation}\label{eq:small-step}
\sos{\mathcal{N}}{\lblfunv}{\mathcal{N}'}{\lblfunv'}  \implies \ds{\mathcal{N}}(\lblfunv) = \ds{\mathcal{N}'}(\lblfunv')
\end{equation}

If Equation~\ref{eq:value-step} and Equation~\ref{eq:small-step} hold, the proof of Equation~\ref{eq:sos-to-ds} is a straightforward induction on the length $k$ of the derivation sequence $\sosv[k]{\mathcal{N}}{\lblfunv}{\lblfunv'}$. The proof of Equation~\ref{eq:value-step} and Equation~\ref{eq:small-step} is by induction on the shape of the derivation trees.

\begin{description}
\item[The case {\texttt{ID}}:] We have $\sosv{\id}{\lblfunv}{\lblfunv}$, and since $\ds{\id}(\lblfunv) = \lblfunv$, the implication holds.
\item[The case {\texttt{APPLY}}:] We have $\sos{\psi}{\lblfunv}{\id}{\lambda v . f_{\psi}(\lblfunv(\vSet), \lblfunv(v))}$, and since 
\begin{align*}
    \ds{\psi}(\lblfunv) &= \lambda v . f_{\psi}(\lblfunv(\vSet), \lblfunv(v))\\
                        &= \ds{\id}(\lambda v . f_{\psi}(\lblfunv(\vSet), \lblfunv(v)))
\end{align*} the implication holds.
\item[The case {\texttt{PREIMG}}:] Analogous to case \texttt{APPLY}.
\item[The case {\texttt{POSTIMG}}:] Analogous to case \texttt{APPLY}.
\item[The case {$\texttt{SEQ}_1$}:] Assume that $\sos{\mathcal{N}_1;\mathcal{N}_2}{\lblfunv}{\mathcal{N}_1';\mathcal{N}_2}{\lblfunv'}$ because $\sos{\mathcal{N}_1}{\lblfunv}{\mathcal{N}_1'}{\lblfunv'}$. By induction hypothesis we have that $\ds{\mathcal{N}_1}(\lblfunv) = \ds{\mathcal{N}_1'}(\lblfunv')$. Then we get 
\begin{align*}
    \ds{\mathcal{N}_1 ; \mathcal{N}_2}(\lblfunv)    &= \ds{\mathcal{N}_2}(\ds{\mathcal{N}_1}(\lblfunv)) \\
                                                    &= \ds{\mathcal{N}_2}(\ds{\mathcal{N}_1'}(\lblfunv')) \\
                                                    &= \ds{\mathcal{N}_1' ; \mathcal{N}_2}(\lblfunv')
\end{align*}

\item[The case {$\texttt{SEQ}_2$}:] We have that $\sos{\id;\mathcal{N}_2}{\lblfunv}{\mathcal{N}_2}{\lblfunv}$. Then we get 
\begin{align*}
    \ds{\id ; \mathcal{N}_2}(\lblfunv)    &= \ds{\mathcal{N}_2}(\ds{\id}(\lblfunv)) \\
                                          &= \ds{\mathcal{N}_2}(\lblfunv)
\end{align*}
\item[The case {\texttt{SPLIT}}:] We have $\sos{\mathcal{N}_1 || \mathcal{N}_2}{\lblfunv}{\mathcal{N}_1 \parcomp \mathcal{N}_2}{\lblfunv | \lblfunv}$. Then we get 
\begin{align*}
\ds{\mathcal{N}_1 || \mathcal{N}_2}(\lblfunv)   &= \ds{\mathcal{N}_1} \gnncomp \ds{\mathcal{N}_2} (\lblfunv)\\
                                                &= (\ds{\mathcal{N}_1} \circ \pi_L) \gnncomp (\ds{\mathcal{N}_2} \circ \pi_R) (\lblfunv | \lblfunv)\\
                                                &= \ds{\mathcal{N}_1 \parcomp \mathcal{N}_2}(\lblfunv | \lblfunv)
\end{align*}

\item[The case {$\texttt{PAR}_1$}:] We have $\sos{\mathcal{N}_1 \parcomp \mathcal{N}_2}{\lblfunv}{\mathcal{N}_1' \parcomp \mathcal{N}_2}{\lblfunv' | \pi_R(\lblfunv)}$ because $\sos{\mathcal{N}_1}{\pi_L(\lblfunv)}{\mathcal{N}_1'}{\lblfunv'}$. By induction hypothesis we have that $\ds{\mathcal{N}_1}(\pi_L(\lblfunv)) = \ds{\mathcal{N}_1'}(\lblfunv')$. Then we get 
\begin{align*}
    \ds{\mathcal{N}_1 \parcomp \mathcal{N}_2}(\lblfunv)    &= (\ds{\mathcal{N}_1} \circ \pi_L) \gnncomp (\ds{\mathcal{N}_2} \circ \pi_R) (\lblfunv) \\
                                                           &= \ds{\mathcal{N}_1}(\pi_L(\lblfunv)) | \ds{\mathcal{N}_2} (\pi_R (\lblfunv))\\
                                                           &= \ds{\mathcal{N}_1'}(\lblfunv') | \ds{\mathcal{N}_2} (\pi_R (\lblfunv))\\
                                                           &= (\ds{\mathcal{N}_1'} \circ \pi_L) \gnncomp (\ds{\mathcal{N}_2} \circ \pi_R) (\lblfunv' | \pi_R(\lblfunv))\\
                                                           &= \ds{\mathcal{N}_1' \parcomp \mathcal{N}_2}(\lblfunv' | \pi_R(\lblfunv))
\end{align*}

\item[The case {$\texttt{PAR}_2$}:] Analogous to case $\texttt{PAR}_1$.

\item[The case {\texttt{MERGE}}:] We have that $\sos{\id \parcomp \id}{\lblfunv}{\id}{\lblfunv}$. Then we get:
\begin{align*}
\ds{\id \parcomp \id}(\lblfunv) &= (\ds{\id} \circ \pi_L) \gnncomp (\ds{\id} \circ \pi_R) (\lblfunv)\\
                                &= \ds{\id}(\pi_L(\lblfunv)) | \ds{\id}(\pi_R(\lblfunv))\\
                                &= \pi_L(\lblfunv) | \pi_R(\lblfunv)\\
                                &= \lblfunv\\
                                &= \ds{\id}(\lblfunv)
\end{align*}

\item[The case {$\texttt{CHOICE}_1$}:] We have $\sos{\mathcal{N}_1 \choiceop \mathcal{N}_2}{\lblfunv}{\mathcal{N}_1}{\pi_R(\lblfunv)}$ because $\pi_L(\lblfunv) = \lambda v . \mathtt{True}$. Then we get
\begin{align*}
    \ds{\mathcal{N}_1 \choiceop \mathcal{N}_2}(\lblfunv) &= cond(\pi_L, \ds{\mathcal{N}_1} \circ \pi_R, \ds{\mathcal{N}_2} \circ \pi_R)(\lblfunv)\\
    &= \ds{\mathcal{N}_1}(\pi_R(\lblfunv))
\end{align*}

\item[The case {$\texttt{CHOICE}_2$}:] Analogous to case $\texttt{CHOICE}_1$.

\item[The case {\texttt{STAR}}:] We have \[\sos{\starop{\mathcal{N}}}{\lblfunv}{((\id || \mathcal{N}) ; \psi_{\simop} || \id) ; (\id \choiceop \mathcal{N} ; \starop{\mathcal{N}})}{\lblfunv}\] Let $F = \lambda \gnnws .  cond(\lambda e . \lambda v . \ds{\mathcal{N}}(e)(v) \simop e(v), id, \gnnws \circ \ds{\mathcal{N}})$, then we get

\begin{align*}
\ds{\starop{\mathcal{N}}}(\lblfunv) &= FIX(F)(\lblfunv) \\
&= F(FIX(F))(\lblfunv)\\
&= cond(\lambda e . \lambda v . \ds{\mathcal{N}}(e)(v) \simop e(v), id, FIX(F) \circ \ds{\mathcal{N}})(\lblfunv)\\
&= cond(\lambda e . \lambda v . \ds{\mathcal{N}}(e)(v) \simop e(v), id, \ds{\starop{\mathcal{N}}} \circ \ds{\mathcal{N}})(\lblfunv)\\
&= cond(\ds{(\mathcal{N} || \id) ; \psi_{\simop}}, \ds{\id}, \ds{\mathcal{N} ; \starop{\mathcal{N}}})\\
&= cond(\pi_L, \ds{\id} \circ \pi_R, \ds{\mathcal{N} ; \starop{\mathcal{N}}} \circ \pi_R) \circ \ds{(\mathcal{N} || \id); \psi_{\simop} || \id} (\lblfunv) \\
&= \ds{\id \choiceop \mathcal{N} ; \starop{\mathcal{N}}} \circ \ds{((\id || \mathcal{N}) ; \psi_{\simop} || \id)}(\lblfunv) \\
&= \ds{((\id || \mathcal{N}) ; \psi_{\simop} || \id) ; (\id \choiceop \mathcal{N} ; \starop{\mathcal{N}})}(\lblfunv)
\end{align*}

\end{description}
%This completes our proof of Lemma~\ref{lemma:sos-to-ds}.
\end{proof}

\begin{lemma}\label{lemma:ds-to-sos}
For every expression $\mathcal{N}$ of $\mG$, we have $\ds{\mathcal{N}} \gnnorder \sosf{\mathcal{N}} $.
\end{lemma}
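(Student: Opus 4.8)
The plan is to establish the pointwise implication $\ds{\mathcal{N}}(\lblfunv) = \lblfunv' \implies \sosv[\star]{\mathcal{N}}{\lblfunv}{\lblfunv'}$ for every $\mG$ expression $\mathcal{N}$ and all $\lblfunv, \lblfunv'$; by Definition~\ref{def:sos} this gives $\sosf{\mathcal{N}}(\lblfunv) = \lblfunv'$ whenever $\ds{\mathcal{N}}(\lblfunv) = \lblfunv'$, which is exactly $\ds{\mathcal{N}} \gnnorder \sosf{\mathcal{N}}$. Before the main induction I would prove, straight from the rules of Table~\ref{tab:structural_semantics} and without any reference to $\ds{\cdot}$, three \emph{context lemmas} about the closure $\rightarrow_{\graph,\lblfune}^\star$: (i) if $\sosv[\star]{\mathcal{N}_1}{\lblfunv}{\lblfunv'}$ then $\sos[\star]{\mathcal{N}_1 ; \mathcal{N}_2}{\lblfunv}{\mathcal{N}_2}{\lblfunv'}$, obtained by iterating $\texttt{SEQ}_1$ and then applying $\texttt{SEQ}_2$, using the fact that $\texttt{ID}$ is the only rule whose right-hand side is a bare value, so every terminating derivation passes through a configuration $\soscomp{\id}{\lblfunv'}$; (ii) if $\sosv[\star]{\mathcal{N}_1}{\lblfunv_1}{\lblfunv_1'}$ and $\sosv[\star]{\mathcal{N}_2}{\lblfunv_2}{\lblfunv_2'}$ then $\sosv[\star]{\mathcal{N}_1 \parcomp \mathcal{N}_2}{\lblfunv_1 | \lblfunv_2}{\lblfunv_1' | \lblfunv_2'}$, obtained by iterating $\texttt{PAR}_1$ (which leaves the right component untouched), then iterating $\texttt{PAR}_2$, then closing with $\texttt{MERGE}$ and $\texttt{ID}$; and (iii) as a consequence, if $\sosv[\star]{\mathcal{N}_1}{\lblfunv}{\lblfunv_1'}$ and $\sosv[\star]{\mathcal{N}_2}{\lblfunv}{\lblfunv_2'}$ then $\sosv[\star]{\mathcal{N}_1 || \mathcal{N}_2}{\lblfunv}{\lblfunv_1' | \lblfunv_2'}$, by prefixing $\texttt{SPLIT}$.

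With these in hand I would argue by structural induction on $\mathcal{N}$. The cases $\id, \psi, \lhd_{\sigma}^{\varphi}, \rhd_{\sigma}^{\varphi}$ are immediate: $\texttt{ID}$, $\texttt{APPLY}$, $\texttt{PREIMG}$, $\texttt{POSTIMG}$, followed in the last three by $\texttt{ID}$, produce exactly the derivation demanded by the defining clauses of $\ds{\cdot}$. For $\mathcal{N}_1 ; \mathcal{N}_2$, definedness of $\ds{\mathcal{N}_1 ; \mathcal{N}_2}(\lblfunv)$ forces an intermediate $\lblfunv''$ with $\ds{\mathcal{N}_1}(\lblfunv) = \lblfunv''$ and $\ds{\mathcal{N}_2}(\lblfunv'') = \lblfunv'$; apply the induction hypothesis to both factors and splice the results with context lemma (i). For $\mathcal{N}_1 || \mathcal{N}_2$, use $\ds{\mathcal{N}_1 || \mathcal{N}_2}(\lblfunv) = \ds{\mathcal{N}_1}(\lblfunv) | \ds{\mathcal{N}_2}(\lblfunv)$, the induction hypothesis on each operand, and context lemma (iii). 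For $\mathcal{N}_1 \choiceop \mathcal{N}_2$, unfolding $cond$ shows that, depending on whether $\pi_L(\lblfunv) = \lambda v . \mathtt{True}$, a single step by $\texttt{CHOICE}_1$ or $\texttt{CHOICE}_2$ reaches $\soscomp{\mathcal{N}_i}{\pi_R(\lblfunv)}$ with $\ds{\mathcal{N}_i}(\pi_R(\lblfunv)) = \lblfunv'$, and the induction hypothesis finishes the case.

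The genuinely delicate case is $\starop{\mathcal{N}}$. Put $F(\gnnws) = cond(\lambda e . \lambda v . \ds{\mathcal{N}}(e)(v) \simop e(v), id, \gnnws \circ \ds{\mathcal{N}})$; Definition~\ref{def:denotational_semantics} gives $\ds{\starop{\mathcal{N}}} = FIX(F) = \bigsqcup \{F^n(\bot_{\gnntypews}) \mid n \geq 0\}$, so $\ds{\starop{\mathcal{N}}}(\lblfunv) = \lblfunv'$ implies $F^n(\bot_{\gnntypews})(\lblfunv) = \lblfunv'$ for some $n$, and I would prove by an \emph{inner} induction on $n$ that this entails $\sosv[\star]{\starop{\mathcal{N}}}{\lblfunv}{\lblfunv'}$. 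The base $n = 0$ is vacuous, since $F^0(\bot_{\gnntypews}) = \bot_{\gnntypews}$. For $n = m+1$, start the derivation with $\texttt{STAR}$, so $\soscomp{\starop{\mathcal{N}}}{\lblfunv}$ steps to $\soscomp{((\id || \mathcal{N}) ; \psi_{\simop} || \id) ; (\id \choiceop \mathcal{N} ; \starop{\mathcal{N}})}{\lblfunv}$; note that when $F^{m+1}(\bot_{\gnntypews})(\lblfunv)$ is defined, so is $\ds{\mathcal{N}}(\lblfunv)$, hence the prefix does not get stuck. Using context lemmas (i) and (iii) together with the \emph{outer} induction hypothesis applied to the subterm $\mathcal{N}$ (and the base cases for $\id$ and $\psi_{\simop}$), the prefix $(\id || \mathcal{N}) ; \psi_{\simop} || \id$ reduces $\lblfunv$ to $b | \lblfunv$ with $b = \lambda v . \lblfunv(v) \simop \ds{\mathcal{N}}(\lblfunv)(v)$, which by symmetry of $\simop$ equals the value at $\lblfunv$ of the test of $F$. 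If $b = \lambda v . \mathtt{True}$, then $F^{m+1}(\bot_{\gnntypews})(\lblfunv) = \lblfunv = \lblfunv'$, and $\texttt{CHOICE}_1$ followed by $\texttt{ID}$ brings the configuration to the bare value $\lblfunv$. Otherwise $F^{m+1}(\bot_{\gnntypews})(\lblfunv) = F^m(\bot_{\gnntypews})(\ds{\mathcal{N}}(\lblfunv)) = \lblfunv'$, and $\texttt{CHOICE}_2$ reaches $\soscomp{\mathcal{N} ; \starop{\mathcal{N}}}{\lblfunv}$; context lemma (i) with the outer hypothesis on $\mathcal{N}$ steps this to $\soscomp{\starop{\mathcal{N}}}{\ds{\mathcal{N}}(\lblfunv)}$, and the inner induction hypothesis on $m$ supplies $\sosv[\star]{\starop{\mathcal{N}}}{\ds{\mathcal{N}}(\lblfunv)}{\lblfunv'}$.

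I expect the $\starop{\mathcal{N}}$ case to be the main obstacle. The non-recursive wrapper term produced by $\texttt{STAR}$ is not a subterm of $\starop{\mathcal{N}}$, so its simulation must go through the context lemmas rather than the structural induction hypothesis; one must also verify that the operational convergence test assembled from $\psi_{\simop}$ matches the denotational one (this is where symmetry of $\simop$ enters), and keep the definedness side-conditions aligned so that the wrapper never gets stuck before the embedded choice is resolved. The remaining cases are routine.
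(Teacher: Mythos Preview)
Your proof is correct and covers the same ground as the paper's, but the organization and the argument for $\starop{\mathcal{N}}$ differ in an interesting way. The paper also does a structural induction, and it also proves your context lemmas (i) and (ii) --- though inline, inside the cases for $\mathcal{N}_1;\mathcal{N}_2$ and $\mathcal{N}_1 || \mathcal{N}_2$, rather than factored out in advance. Where you work pointwise with the implication $\ds{\mathcal{N}}(\lblfunv)=\lblfunv' \Rightarrow \sosv[\star]{\mathcal{N}}{\lblfunv}{\lblfunv'}$, the paper stays at the level of the ordering $\gnnorder$ and pushes everything through the monotonicity of $\circ$, $\gnncomp$ and $cond$ (Lemmas~\ref{lemma:cont-circ}--\ref{lemma:cont-cond2}); this lets it write each case as a short chain of $\gnnorder$-inequalities rather than chasing derivations, and it also includes an explicit case for the auxiliary connective $\parcomp$.

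The substantive divergence is the fixpoint case. You unroll the Kleene chain and do an inner numerical induction on $n$ to show that $F^{n}(\bot)(\lblfunv)=\lblfunv'$ yields a terminating operational derivation, carefully threading the $\texttt{STAR}$ unfolding through your context lemmas. The paper instead uses a pre-fixpoint (Park-style) argument: it first proves the general principle that for continuous $f$, $f(d)\sqsubseteq d$ implies $FIX(f)\sqsubseteq d$, and then checks $F(\sosf{\starop{\mathcal{N}}}) \sqsubseteq \sosf{\starop{\mathcal{N}}}$ by unfolding $\sosf{\starop{\mathcal{N}}}$ once via $\texttt{STAR}$ and applying monotonicity. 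Your approach is more operational and makes the bookkeeping about the wrapper term and the symmetry of $\simop$ explicit; the paper's approach is shorter and leans on the ccpo structure, but hides exactly the details you worried about (matching the operational and denotational convergence tests) inside a couple of equalities.
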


\begin{proof} 
We proceed by structural induction on the expression $\mathcal{N}$.

\begin{description}
\item[The case $\id$:] Immediate, as $\ds{\id}(\lblfunv) = \sosf{\id}(\lblfunv)$.
\item[The case $\psi$:] This case is simple as well, since $\ds{\psi}(\lblfunv) = \sosf{\psi}(\lblfunv)$ by applying rule \texttt{APPLY} followed by rule \texttt{ID}.
\item[The case $\lhd_{\sigma}^{\varphi}$:] Analogous to case $\psi$, but instead using the rules \texttt{PREIMG} and \texttt{ID}.
\item[The case $\rhd_{\sigma}^{\varphi}$:] Analogous to case $\psi$, but instead using the rules \texttt{POSTIMG} and \texttt{ID}.
\item[The case $\mathcal{N}_1 ; \mathcal{N}_2$:] By induction hypothesis we have $\ds{\mathcal{N}_1} \gnnorder \sosf{\mathcal{N}_1}$ and $\ds{\mathcal{N}_2} \gnnorder \sosf{\mathcal{N}_2}$. By the monotonicity of $\circ$ in both its arguments (Lemma~\ref{lemma:cont-circ}) and the induction hypothesis we get that
\begin{align*}
    \ds{\mathcal{N}_1 ; \mathcal{N}_2}  &= \ds{\mathcal{N}_2} \circ \ds{\mathcal{N}_1} \\
    &\gnnorder \sosf{\mathcal{N}_2} \circ \sosf{\mathcal{N}_1}
\end{align*}
To continue the proof, we need to prove that if $\sosv[k]{\mathcal{N}_1}{\lblfunv}{\lblfunv'}$ then $\sos[k]{\mathcal{N}_1 ; \mathcal{N}_2}{\lblfunv}{\mathcal{N}_2}{\lblfunv'}$. The proof is by induction on the derivation length $k$. For $k = 1$ we have $\sosv{\mathcal{N}_1}{\lblfunv}{\lblfunv'}$, which means that $\mathcal{N}_1 = \id$ and $\lblfunv = \lblfunv'$. Therefore, by applying rule $\texttt{SEQ}_2$ we get $\sos{\id ; \mathcal{N}_2}{\lblfunv}{\mathcal{N}_2}{\lblfunv'}$. For the inductive step, let's assume the lemma holds for $k$ and prove that it holds for $k + 1$. We assume that $\sosv[k+1]{\mathcal{N}_1}{\lblfunv}{\lblfunv'}$, which means that $\sos{\mathcal{N}_1}{\lblfunv}{\mathcal{N}_1'}{\lblfunv_1}$ for some $\mathcal{N}_1', \lblfunv_1$, and $\sosv[k]{\mathcal{N}_1'}{\lblfunv_1}{\lblfunv'}$. Therefore the first rule we can apply for the sequential composition of $\mathcal{N}_1$ and $\mathcal{N}_2$ is $\texttt{SEQ}_1$ with $\sos{\mathcal{N}_1}{\lblfunv}{\mathcal{N}_1'}{\lblfunv_1}$ as its premise, and we get the transition $\sos{\mathcal{N}_1 ; \mathcal{N}_2}{\lblfunv}{\mathcal{N}_1' ; \mathcal{N}_2}{\lblfunv_1}$. Now we apply the induction hypothesis $\sosv[k]{\mathcal{N}_1'}{\lblfunv_1}{\lblfunv'} \implies \sos[k]{\mathcal{N}_1' ; \mathcal{N}_2}{\lblfunv_1}{\mathcal{N}_2}{\lblfunv'}$ to obtain the desired result, having performed $k + 1$ steps in total.
Using this result, we can conclude that
\[\sosf{\mathcal{N}_2} \circ \sosf{\mathcal{N}_1} \gnnorder \sosf{\mathcal{N}_1 ; \mathcal{N}_2}\]
as required.

\item[The case $\mathcal{N}_1 || \mathcal{N}_2$:] By induction hypothesis we have $\ds{\mathcal{N}_1} \gnnorder \sosf{\mathcal{N}_1}$ and $\ds{\mathcal{N}_2} \gnnorder \sosf{\mathcal{N}_2}$. By the monotonicity of $\gnncomp$ in both its arguments (Lemma~\ref{lemma:cont-par}) and the induction hypothesis, we get that
\begin{align*}
    \ds{\mathcal{N}_1 || \mathcal{N}_2}  &= \ds{\mathcal{N}_1} \gnncomp \ds{\mathcal{N}_2} \\
                                         &\gnnorder \sosf{\mathcal{N}_1} \gnncomp \sosf{\mathcal{N}_2}
\end{align*}
%
%
% We have that $\ds{\mathcal{N}_1 || \mathcal{N}_2}(\lblfunv) = \ds{\mathcal{N}_1}(\lblfunv) | \ds{\mathcal{N}_2}(\lblfunv)$ and by induction hypothesis we have $\sosv[\star]{\mathcal{N}_1}{\lblfunv}{\ds{\mathcal{N}_1}(\lblfunv)}$ and $\sosv[\star]{\mathcal{N}_2}{\lblfunv}{\ds{\mathcal{N}_2}(\lblfunv)}$. 
%
To continue the proof, we show that if $\sosv[k_1]{\mathcal{N}_1}{\lblfunv_1}{\lblfunv_1'}$ and $\sosv[k_2]{\mathcal{N}_2}{\lblfunv_2}{\lblfunv_2'}$, then $\sosv[k_1 + k_2]{\mathcal{N}_1 \parcomp \mathcal{N}_2}{\lblfunv_1 | \lblfunv_2}{\lblfunv_1' | \lblfunv_2'}$. For $k_1 = k_2 = 1$, we have that $\mathcal{N}_1 = \mathcal{N}_2 = \id$, $\lblfunv_1 = \lblfunv_1'$ and $\lblfunv_2 = \lblfunv_2'$. Hence, we get $\sosv{\id \parcomp \id}{\lblfunv_1 | \lblfunv_2}{\sosv{\id}{\lblfunv_1 | \lblfunv_2}{\lblfunv_1 | \lblfunv_2}}$ by applying rule \texttt{MERGE} followed by rule \texttt{ID} and performing $1 + 1 = 2$ steps. For the inductive step, let's assume the implication holds for $k_1, k_2$ and prove it for $k_1 + 1, k_2 + 1$. We have $\sosv{\mathcal{N}_1}{\lblfunv_1}{\sosv[k_1]{\mathcal{N}_1'}{\lblfunv_1''}{\lblfunv_1'}}$ and $\sosv{\mathcal{N}_2}{\lblfunv_2}{\sosv[k_2]{\mathcal{N}_2'}{\lblfunv_2''}{\lblfunv_2'}}$ for some $\mathcal{N}_1', \mathcal{N}_2', \lblfunv_1'', \lblfunv_2''$. Then we can perform two steps $\sosv{\mathcal{N}_1 \parcomp \mathcal{N}_2}{\lblfunv_1 | \lblfunv_2}{\sos{\mathcal{N}_1' \parcomp \mathcal{N}_2}{\lblfunv_1'' | \lblfunv_2}{\mathcal{N}_1' \parcomp \mathcal{N}_2'}{\lblfunv_1'' | \lblfunv_2''}}$ by applying rules [$\texttt{PAR}_1$] and [$\texttt{PAR}_2$]. Then by applying the induction hypothesis we get our result $\sosv[k_1 + k_2]{\mathcal{N}_1' \parcomp \mathcal{N}_2'}{\lblfunv_1'' | \lblfunv_2''}{\lblfunv_1' | \lblfunv_2'}$ in $2 + k_1 + k_2 = (k_1 + 1) + (k_2 + 1)$ steps.
Using this result, we can conclude that
% \sosf{\mathcal{N}_1 || \mathcal{N}_2}
\begin{align*}
    \sosf{\mathcal{N}_2} \gnncomp \sosf{\mathcal{N}_1} &\gnnorder \sosf{\mathcal{N}_1 \parcomp \mathcal{N}_2} \circ (id \gnncomp id)\\
                                                       &= \sosf{\mathcal{N}_1 || \mathcal{N}_2}
\end{align*}
as required.
%
% Now we can continue our proof. Using rule \texttt{SPLIT} we get $\sosv{\mathcal{N}_1 || \mathcal{N}_2}{\lblfunv}{\soscomp{\mathcal{N}_1}{\lblfunv} \otimes \soscomp{\mathcal{N}_2}{\lblfunv}}$. Now we can apply the lemma to the induction hypothesis to get $\vsosv[\star]{\soscomp{\mathcal{N}_1}{\lblfunv} \otimes \soscomp{\mathcal{N}_2}{\lblfunv}}{\ds{\mathcal{N}_1}(\lblfunv) | \ds{\mathcal{N}_2}(\lblfunv)}$.
%
%
\item[The case $\mathcal{N}_1 \parcomp \mathcal{N}_2$:] By induction hypothesis we have $\ds{\mathcal{N}_1} \gnnorder \sosf{\mathcal{N}_1}$ and $\ds{\mathcal{N}_2} \gnnorder \sosf{\mathcal{N}_2}$. By the monotonicity of $\circ$ (Lemma~\ref{lemma:cont-circ}) and that of $\gnncomp$ in both its arguments (Lemma~\ref{lemma:cont-par}), and the induction hypothesis, we get that
\begin{align*}
    \ds{\mathcal{N}_1 \parcomp \mathcal{N}_2}  &= (\ds{\mathcal{N}_1} \circ \pi_L) \gnncomp (\ds{\mathcal{N}_1} \circ \pi_R) \\
                                         &\gnnorder (\sosf{\mathcal{N}_1} \circ \pi_L) \gnncomp (\sosf{\mathcal{N}_2} \circ \pi_R)\\
                                         &= \sosf{\mathcal{N}_1 \parcomp \mathcal{N}_2}
\end{align*}

\item[The case $\mathcal{N}_1 \choiceop \mathcal{N}_2$:] We have that $\ds{\mathcal{N}_1 \choiceop \mathcal{N}_2} = cond(\pi_L, \ds{\mathcal{N}_1} \circ \pi_R, \ds{\mathcal{N}_2} \circ \pi_R)$. By induction hypothesis we have that $\ds{\mathcal{N}_1} \gnnorder \sosf{\mathcal{N}_1}$ and $\ds{\mathcal{N}_2} \gnnorder \sosf{\mathcal{N}_2}$.
By the monotonicity of $\circ$ (Lemma~\ref{lemma:cont-circ}) and that of $cond$ in its second and third arguments (Lemma~\ref{lemma:cont-cond2}), and the induction hypothesis we have that
\begin{align*}
    \ds{\mathcal{N}_1 \choiceop \mathcal{N}_2}  &= cond(\pi_L, \ds{\mathcal{N}_1} \circ \pi_R, \ds{\mathcal{N}_2} \circ \pi_R) \\
    &\gnnorder cond(\pi_L, \sosf{\mathcal{N}_1} \circ \pi_R, \sosf{\mathcal{N}_2} \circ \pi_R)
\end{align*}
Now from the rules $\texttt{CHOICE}_1$ and $\texttt{CHOICE}_2$ it follows that
\begin{align*}
    \sosf{\mathcal{N}_1 \choiceop \mathcal{N}_2}(\lblfunv) = \sosf{\mathcal{N}_1}(\pi_R(\lblfunv)) & \text{ if } \pi_L(\lblfunv) = \lambda v . \mathtt{True} \\
    \sosf{\mathcal{N}_1 \choiceop \mathcal{N}_2}(\lblfunv) = \sosf{\mathcal{N}_2}(\pi_R(\lblfunv)) & \text{ if } \pi_L(\lblfunv) \neq \lambda v . \mathtt{True}
\end{align*}
therefore
\[cond(\pi_L, \sosf{\mathcal{N}_1} \circ \pi_R, \sosf{\mathcal{N}_2} \circ \pi_R) = \sosf{\mathcal{N}_1 \choiceop \mathcal{N}_2}\]

% We consider two possible cases, whether $\pi_L(\lblfunv) = \lambda v . \mathtt{True}$ or not. In the former case, we have $cond(\lambda v . \mathtt{True}, \ds{\mathcal{N}_1}(\pi_R(\lblfunv)), \ds{\mathcal{N}_2}(\pi_R(\lblfunv))) = \ds{\mathcal{N}_1}(\pi_R(\lblfunv))$ and by applying rule $\texttt{CHOICE}_1$ we get $\sos{\mathcal{N}_1 \choiceop \mathcal{N}_2}{\lblfunv}{\mathcal{N}_1}{\pi_R(\lblfunv)}$. Then by induction hypothesis we get $\sosv[\star]{\mathcal{N}_1}{\pi_R(\lblfunv)}{\ds{\mathcal{N}_1}(\pi_R(\lblfunv))}$. In the latter case the proof is analogous, but instead we apply rule $\texttt{CHOICE}_2$ to get $\sos{\mathcal{N}_1 \choiceop \mathcal{N}_2}{\lblfunv}{\mathcal{N}_2}{\pi_R(\lblfunv)}$. Then by induction hypothesis we get $\sosv[\star]{\mathcal{N}_2}{\pi_R(\lblfunv)}{\ds{\mathcal{N}_2}(\pi_R(\lblfunv))}$.

\item[The case $\starop{\mathcal{N}}$:] We have that $\ds{\starop{\mathcal{N}}} = FIX(F)$, where
\[F = \lambda \gnnws .  cond(\lambda e . \lambda v . \ds{\mathcal{N}}(e)(v) \simop e(v), id, \gnnws \circ \ds{\mathcal{N}})\] 
We start by proving that if $f: D \rightarrow D$ is a continuous function on a ccpo $(D, \sqsubseteq)$ and if $d \in D$ is such that $f(d) \sqsubseteq d$, then $FIX(f) \sqsubseteq d$. To prove this, we recall that by the monotonicity of $f$ it must be the case that $f(f(d)) \sqsubseteq f(d)$, and more generally, that $f^{n+1}(d) \sqsubseteq f^n(d), \forall n \geq 0$. Therefore $d$ is an upper bound of the chain $\{f^n(d) \mid n \geq 0\}$. Then, since $\bot \sqsubseteq d$, we have that $f^n(\bot) \sqsubseteq f^n(d), \forall n \geq 0$ and $d$ is also an upper bound for the chain $\{f^n(\bot) \mid n \geq 0\}$. Then $FIX(f) = \bigsqcup \{f^n(\bot) \mid n \geq 0\} \sqsubseteq d$ since the least fixed point of $f$ is the least upper bound of that chain.

This result proves that $FIX(F) \sqsubseteq \sosf{\starop{\mathcal{N}}}$, on the condition that we show that $F(\sosf{\starop{\mathcal{N}}}) \sqsubseteq \sosf{\starop{\mathcal{N}}}$.  We can prove this by first showing that
\begin{align*}
    \sosf{\starop{\mathcal{N}}} &= \sosf{((\mathcal{N} || \id) ; \psi_{\simop} || \id) ; (\id \choiceop \mathcal{N} ; \starop{\mathcal{N}})}\\
    &\sqsupseteq \sosf{(\id \choiceop \mathcal{N} ; \starop{\mathcal{N}})} \circ \sosf{((\mathcal{N} || \id) ; \psi_{\simop} || \id)}\\
    &= cond(\pi_L, \sosf{\id} \circ \pi_R,  \sosf{\mathcal{N} ; \starop{\mathcal{N}}} \circ \pi_R) \circ \sosf{((\mathcal{N} || \id) ; \psi_{\simop} || \id)}\\
    &= cond(\sosf{(\mathcal{N} || \id) ; \psi_{\simop}}, \sosf{\id}, \sosf{\mathcal{N} ; \starop{\mathcal{N}}})\\
    &\sqsupseteq  cond(\sosf{(\mathcal{N} || \id) ; \psi_{\simop}}, \sosf{\id}, \sosf{\starop{\mathcal{N}}} \circ \sosf{\mathcal{N}})
\end{align*}
The induction hypothesis gives us that $\ds{\mathcal{N}} \sqsubseteq \sosf{\mathcal{N}}$, and by the monotonicity of $\circ$ (Lemma~\ref{lemma:cont-circ}) and $cond$ (Lemmas~\ref{lemma:cont-cond1} and~\ref{lemma:cont-cond2}) we get
\begin{align*}
    \sosf{\starop{\mathcal{N}}} &\sqsupseteq  cond(\sosf{(\mathcal{N} || \id) ; \psi_{\simop}}, \sosf{\id}, \sosf{\starop{\mathcal{N}}} \circ \sosf{\mathcal{N}})\\
    &\sqsupseteq cond(\ds{(\mathcal{N} || \id) ; \psi_{\simop}}, id, \sosf{\starop{\mathcal{N}}} \circ \ds{\mathcal{N}})\\
    &= F(\sosf{\starop{\mathcal{N}}})
\end{align*}
%
%This completes our proof of Lemma~\ref{lemma:ds-to-sos}.
\end{description}
\end{proof}

\section{Type soundness}\label{sec:types}
\begin{table}[ht]
\centering
\begin{tabular}{lc}
[\texttt{T-ID}] & $\id: \gnntype{T}{T}$ \\[5mm]

[\texttt{T-PSI}] & \inference{\Gamma |- f_{\psi}: T^{\star}_{1} \times T_1 \rightarrow T_2}{\psi: \gnntype{T_1}{T_2}} \\[5mm]

[\texttt{T-PRE}]  & \inference{\Gamma |- f_\varphi: T_1 \times T_e \times T_1 \rightarrow T_2 & \Gamma |- f_\sigma: T_{2}^{\star} \times T_1 \rightarrow T_3}{\lhd_{\sigma}^{\varphi}: \gnntype{T_1}{T_3}} \\[5mm]

[\texttt{T-POST}]  & \inference{\Gamma |- f_\varphi: T_1 \times T_e \times T_1 \rightarrow T_2 & \Gamma |- f_\sigma: T_{2}^{\star} \times T_1 \rightarrow T_3}{\rhd_{\sigma}^{\varphi}: \gnntype{T_1}{T_3}} \\[5mm]

[\texttt{T-SEQ}] & \inference{\mathcal{N}_1: \gnntype{T_1}{T_2} & \mathcal{N}_2: \gnntype{T_2}{T_3}}{\mathcal{N}_1;\mathcal{N}_2: \gnntype{T_1}{T_3}} \\[5mm]

[\texttt{T-PAR}] & \inference{\mathcal{N}_1: \gnntype{T}{T_1} & \mathcal{N}_2: \gnntype{T}{T_2}}{\mathcal{N}_1 || \mathcal{N}_2: \gnntype{T}{T_1 \times T_2}} \\[5mm]

[\texttt{T-PROD}] & \inference{\mathcal{N}_1: \gnntype{T_1}{T_1'} & \mathcal{N}_2: \gnntype{T_2}{T_2'}}{\mathcal{N}_1 \parcomp \mathcal{N}_2: \gnntype{T_1 \times T_2}{T_1' \times T_2'}} \\[5mm]

[\texttt{T-CHOICE}] & \inference{\mathcal{N}_1: \gnntype{T_1}{T_2} & \mathcal{N}_2: \gnntype{T_1}{T_2}}{\mathcal{N}_1 \choiceop \mathcal{N}_2: \gnntype{\mathbb{B} \times T_1}{T_2}}  \\[5mm]

[\texttt{T-STAR}] & \inference{\mathcal{N} : \gnntype{T}{T}}{\starop{\mathcal{N}}: \gnntype{T}{T}}  \\[5mm]

\end{tabular}%}
\caption{Typing rules of $\mG$.}
\label{Tab:gnn_types}
\end{table}
We say that a $\mG$ term is \emph{well-typed} whenever it can be typed with the rules of Table~\ref{Tab:gnn_types}. These rules guarantee that any \emph{well-typed} $\mG$ term defines a graph neural network. Next, we will be using these typing rules to prove the type soundness of the operational semantics of $\mG$. Proving that $\mG$ is type safe ensures that the GNNs we program with it handle the data types correctly and produce results of the expected type. Concretely, this means that well-typed GNNs can always progress to the next step of computation and that after each step the type of the output is preserved.
\begin{theorem}[Type soundness]
If $\mathcal{N}: \gnntype{T_1}{T_2}$, $\lblfunv : \lblfunvtype{T_1}$ and $\sosv[\star]{\mathcal{N}}{\lblfunv}{\lblfunv'}$, then $\lblfunv' : \lblfunvtype{T_2}$.
\end{theorem}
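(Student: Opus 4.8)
The plan is to prove the statement in the usual syntactic style: establish a single‑step \emph{preservation} (subject reduction) lemma, observe that the terminal value step trivially preserves the output type, and then induct on the length of the derivation sequence $\sosv[\star]{\mathcal{N}}{\lblfunv}{\lblfunv'}$. The preservation lemma I would state is: if $\mathcal{N} : \gnntype{T_1}{T_2}$, $\lblfunv : \lblfunvtype{T_1}$ and $\sos{\mathcal{N}}{\lblfunv}{\mathcal{N}'}{\lblfunv'}$, then there is a type $T$ with $\mathcal{N}' : \gnntype{T}{T_2}$ and $\lblfunv' : \lblfunvtype{T}$. The key observation is that a single reduction step may change the \emph{input} type of the residual expression (part of the computation has been performed and absorbed into the labeling, in a way exactly matched by the type of the new labeling) but never touches the \emph{output} type $T_2$; iterating this keeps $T_2$ fixed.

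I would prove the preservation lemma by induction on the derivation of $\sos{\mathcal{N}}{\lblfunv}{\mathcal{N}'}{\lblfunv'}$, case‑splitting on the last rule used. Since the typing rules of Table~\ref{Tab:gnn_types} are syntax‑directed, each case opens by inverting the typing derivation of $\mathcal{N}$. The axiom cases are direct computations: for \texttt{APPLY}, inversion of \texttt{T-PSI} gives $f_\psi : T_1^\star \times T_1 \rightarrow T_2$, so $\lambda v . f_{\psi}(\lblfunv(\vSet), \lblfunv(v)) : \lblfunvtype{T_2}$ and \texttt{T-ID} types the residual $\id$ as $\gnntype{T_2}{T_2}$; \texttt{PREIMG} and \texttt{POSTIMG} are analogous via \texttt{T-PRE}/\texttt{T-POST}, carrying the intermediate message type through $f_\varphi$ and $f_\sigma$. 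For $\texttt{SEQ}_2$ and \texttt{MERGE} one additionally uses that inverting \texttt{T-ID} forces the input and output types of each $\id$ to coincide. The congruence cases $\texttt{SEQ}_1$, $\texttt{PAR}_1$, $\texttt{PAR}_2$ apply the induction hypothesis to the premise transition and reassemble the type with \texttt{T-SEQ} resp.\ \texttt{T-PROD}; \texttt{SPLIT} sends $\gnntype{T_1}{B_1 \times B_2}$ to $\gnntype{T_1 \times T_1}{B_1 \times B_2}$ via \texttt{T-PROD}, matching $\lblfunv | \lblfunv : \lblfunvtype{T_1 \times T_1}$; and $\texttt{CHOICE}_1$, $\texttt{CHOICE}_2$ strip the leading $\mathbb{B}$ from the input type, which is consistent with $\pi_R(\lblfunv) : \lblfunvtype{T_1}$ when $\lblfunv : \lblfunvtype{\mathbb{B} \times T_1}$.

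The case I expect to be the real work is \texttt{STAR}. Here inversion of \texttt{T-STAR} gives $\mathcal{N} : \gnntype{T}{T}$, and one must re‑derive a type for the unfolded term $((\id || \mathcal{N}) ; \psi_{\simop} || \id) ; (\id \choiceop \mathcal{N} ; \starop{\mathcal{N}})$. The crucial fact is that $\psi_{\simop}$, being attached to $f_{\simop} : (T \times T)^\star \times (T \times T) \rightarrow \mathbb{B}$, is typed $\psi_{\simop} : \gnntype{T \times T}{\mathbb{B}}$ by \texttt{T-PSI}. Then: $\id || \mathcal{N} : \gnntype{T}{T \times T}$ by \texttt{T-PAR}; $(\id || \mathcal{N}) ; \psi_{\simop} : \gnntype{T}{\mathbb{B}}$ by \texttt{T-SEQ}; $((\id || \mathcal{N}) ; \psi_{\simop}) || \id : \gnntype{T}{\mathbb{B} \times T}$ by \texttt{T-PAR}; on the other side \texttt{T-STAR} re‑types the recursive occurrence $\starop{\mathcal{N}} : \gnntype{T}{T}$, so $\mathcal{N} ; \starop{\mathcal{N}} : \gnntype{T}{T}$ by \texttt{T-SEQ} and $\id \choiceop (\mathcal{N} ; \starop{\mathcal{N}}) : \gnntype{\mathbb{B} \times T}{T}$ by \texttt{T-CHOICE}; a final \texttt{T-SEQ} yields $\gnntype{T}{T}$, and since $\lblfunv' = \lblfunv : \lblfunvtype{T}$ the witness is $T$ itself, so the output type is again $T$, matching the original.

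To finish, note that the only rule producing a bare value is \texttt{ID}, so $\sosv{\mathcal{N}}{\lblfunv}{\lblfunv'}$ forces $\mathcal{N} = \id$ and $\lblfunv' = \lblfunv$; if moreover $\id : \gnntype{T_1}{T_2}$ then inverting \texttt{T-ID} gives $T_1 = T_2$, hence $\lblfunv' : \lblfunvtype{T_2}$. Consequently a derivation $\sosv[\star]{\mathcal{N}}{\lblfunv}{\lblfunv'}$ decomposes into finitely many configuration steps leading to some $\langle \id, \lblfunv' \rangle$ followed by one \texttt{ID} value step; applying the preservation lemma repeatedly along the configuration steps keeps the output type equal to $T_2$ (while the input type may vary, but always consistently with the current labeling), and the concluding value step then delivers $\lblfunv' : \lblfunvtype{T_2}$, as required. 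The only genuinely delicate point in the whole argument is the \texttt{STAR} case above, because it is the one place where the one‑step reduct is a sizeable compound term and where the built‑in symbol $\psi_{\simop}$ must be given precisely the type $\gnntype{T \times T}{\mathbb{B}}$ for the reconstruction to go through.
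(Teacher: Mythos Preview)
Your proposal is correct and follows essentially the same route as the paper: a single-step preservation lemma (the paper's Lemma~\ref{lemma:preservation}) whose output type stays fixed while the input type may shift, plus the observation that the only value-producing rule is \texttt{ID}, iterated along the derivation sequence. Your case analysis, including the \texttt{STAR} unfolding and the typing of $\psi_{\simop}$ at $\gnntype{T \times T}{\mathbb{B}}$, matches the paper's almost verbatim.

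Two minor differences worth noting. First, the paper additionally states and proves a Progress lemma and cites both Progress and Preservation for type soundness; you correctly observe that for the theorem \emph{as stated} (which already assumes a terminating run $\sosv[\star]{\mathcal{N}}{\lblfunv}{\lblfunv'}$) Progress is not needed, and your argument dispenses with it. Second, the paper organizes the preservation proof by induction on the \emph{typing} derivation (cases \texttt{T-ID}, \texttt{T-PSI}, \ldots), whereas you induct on the \emph{operational} derivation (cases \texttt{ID}, \texttt{APPLY}, \ldots). Since both rule systems are syntax-directed and in one-to-one correspondence on the relevant constructors, the two inductions unfold into the same case analysis; your choice is arguably the more natural one for a subject-reduction statement.
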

\begin{proof}
Follows from Lemma~\ref{lemma:progress} (Progress) and Lemma~\ref{lemma:preservation} (Preservation).
\end{proof}
\begin{lemma}[Progress]\label{lemma:progress}
Suppose that $\mathcal{N}$ is a well-typed term $\mathcal{N}: \gnntype{T_1}{T_2}$ and that $\lblfunv: \lblfunvtype{T_1}$, then either $\mathcal{N} = \id$ or there exists $\mathcal{N}', \lblfunv'$ such that $\sos{\mathcal{N}}{\lblfunv}{\mathcal{N}'}{\lblfunv'}$.
\end{lemma}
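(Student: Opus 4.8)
The plan is to argue by induction on the structure of $\mathcal{N}$ — equivalently, on the derivation of $\mathcal{N} : \gnntype{T_1}{T_2}$, since the typing rules of Table~\ref{Tab:gnn_types} are syntax-directed — reading the lemma as quantified over all well-typed $\mathcal{N}$ and all compatible $\lblfunv$, so that the induction hypothesis is available for every subterm at its own input type. For each shape of $\mathcal{N}$ I either observe that $\mathcal{N} = \id$ or exhibit a rule of Table~\ref{tab:structural_semantics} whose instance at $\soscomp{\mathcal{N}}{\lblfunv}$ produces the required transition $\sos{\mathcal{N}}{\lblfunv}{\mathcal{N}'}{\lblfunv'}$.

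The base and ``axiomatic'' cases are immediate. If $\mathcal{N} = \id$ we are done. If $\mathcal{N}$ is $\psi$, a pre-image term $\lhd_{\sigma}^{\varphi}$, or a post-image term $\rhd_{\sigma}^{\varphi}$, then rule \texttt{APPLY}, \texttt{PREIMG}, or \texttt{POSTIMG} is an axiom and fires unconditionally; here the matching typing premise (e.g.\ $\Gamma \vdash f_{\psi} : T_1^{\star} \times T_1 \to T_2$ for \texttt{T-PSI}) together with $\lblfunv : \lblfunvtype{T_1}$ ensures the right-hand side, such as $\lambda v . f_{\psi}(\lblfunv(\vSet), \lblfunv(v))$, is a genuine node-labeling function. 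If $\mathcal{N} = \mathcal{N}_1 || \mathcal{N}_2$ then rule \texttt{SPLIT} fires, and if $\mathcal{N} = \starop{\mathcal{N}_0}$ then rule $\texttt{STAR}$ fires; neither rule has a premise, so no appeal to the induction hypothesis is needed.

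The cases that genuinely use the induction hypothesis and the typing information are sequential composition, the product operator, and choice. If $\mathcal{N} = \mathcal{N}_1 ; \mathcal{N}_2$, rule \texttt{T-SEQ} gives $\mathcal{N}_1 : \gnntype{T_1}{T}$ for some $T$, so the induction hypothesis applies to $\mathcal{N}_1$ at input $\lblfunv : \lblfunvtype{T_1}$: either $\mathcal{N}_1 = \id$ and then $\texttt{SEQ}_2$ fires, or $\sos{\mathcal{N}_1}{\lblfunv}{\mathcal{N}_1'}{\lblfunv'}$ and then $\texttt{SEQ}_1$ lifts it to $\sos{\mathcal{N}_1 ; \mathcal{N}_2}{\lblfunv}{\mathcal{N}_1' ; \mathcal{N}_2}{\lblfunv'}$. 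If $\mathcal{N} = \mathcal{N}_1 \parcomp \mathcal{N}_2$, rule \texttt{T-PROD} forces the input type to be a product, hence $\lblfunv$ admits both projections $\pi_L(\lblfunv)$ and $\pi_R(\lblfunv)$ of the appropriate types; then if $\mathcal{N}_1 = \mathcal{N}_2 = \id$ rule \texttt{MERGE} fires, and otherwise the induction hypothesis applied to whichever of $\mathcal{N}_1, \mathcal{N}_2$ differs from $\id$ (on the corresponding projection of $\lblfunv$) furnishes a transition, which $\texttt{PAR}_1$ or $\texttt{PAR}_2$ lifts. If $\mathcal{N} = \mathcal{N}_1 \choiceop \mathcal{N}_2$, rule \texttt{T-CHOICE} forces the input type to be $\mathbb{B} \times T_1$, so $\pi_L(\lblfunv)$ is a Boolean node-labeling function and the side condition of exactly one of $\texttt{CHOICE}_1$, $\texttt{CHOICE}_2$ holds, yielding the transition.

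The argument is mostly bookkeeping; the one point requiring attention is that the induction must be carried out with the input label type in scope, not over bare terms. This matters precisely for $\parcomp$ and $\choiceop$: the operational rules $\texttt{PAR}_1$, $\texttt{PAR}_2$, $\texttt{MERGE}$, $\texttt{CHOICE}_1$, $\texttt{CHOICE}_2$ all apply projections to $\lblfunv$, and it is exactly the typing rules \texttt{T-PROD} and \texttt{T-CHOICE} that guarantee $\lblfunv$ really has product type so those projections are defined. Everything else reduces to checking that each syntactic constructor is matched by an operational rule whose premises, if any, are discharged by the induction hypothesis.
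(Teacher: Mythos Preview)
Your proposal is correct and follows essentially the same approach as the paper's proof: both proceed by induction on the typing derivation, dispatching the axiomatic constructors ($\id$, $\psi$, $\lhd$, $\rhd$, $||$, $\starop{}$) by directly exhibiting the matching operational rule, and handling $;$, $\parcomp$, and $\choiceop$ via the induction hypothesis and the typing-induced product structure of $\lblfunv$. Your added remark that the typing hypotheses for \texttt{T-PROD} and \texttt{T-CHOICE} are precisely what guarantee the projections $\pi_L(\lblfunv)$, $\pi_R(\lblfunv)$ are defined is a welcome clarification that the paper leaves implicit.
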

\begin{proof}
    By induction on typing derivations. It suffices to show that either $\mathcal{N} = \id$ or there exists an operational rule that can be applied.
    \begin{description}
        \item[Case \texttt{T-ID}:] Immediate, as $\mathcal{N} = \id$.
        \item[Case \texttt{T-PSI}:] We have $\mathcal{N} = \psi$ and therefore can use rule \texttt{APPLY}.
        \item[Case \texttt{T-PRE}:] We have $\mathcal{N} = \lhd_\sigma^\varphi$ and therefore can use rule \texttt{PREIMG}.
        \item[Case \texttt{T-POST}:] We have $\mathcal{N} = \rhd_\sigma^\varphi$ and therefore can use rule \texttt{POSTIMG}.
        \item[Case \texttt{T-SEQ}:] We have $\mathcal{N} = \mathcal{N}_1 ; \mathcal{N}_2$ where $\mathcal{N}_1 : \gnntype{T_1}{T}$ and $\mathcal{N}_2 : \gnntype{T}{T_2}$. By induction hypothesis $\mathcal{N}_1$ is either $\id$ or it is possible to apply a transition rule. In the former case, we can apply rule $\texttt{SEQ}_2$, while in the latter case we can apply rule $\texttt{SEQ}_1$.
        \item[Case \texttt{T-PAR}:]  We have $\mathcal{N} = \mathcal{N}_1 || \mathcal{N}_2$ where $\mathcal{N}_1 : \gnntype{T}{T_1}$ and $\mathcal{N}_2 : \gnntype{T}{T_2}$. In this case, we can use rule \texttt{SPLIT}.
        \item[Case \texttt{T-PROD}:] We have $\mathcal{N} = \mathcal{N}_1 \parcomp \mathcal{N}_2$ where $\mathcal{N}_1 : \gnntype{T_1}{T_1'}$ and $\mathcal{N}_2 : \gnntype{T_2}{T_2'}$. By induction hypothesis $\mathcal{N}_1$ and $\mathcal{N}_2$ are either $\id$ or is it possible to apply a transition rule. Then,
        \begin{itemize}
            \item if $\mathcal{N}_1 = \id$ and $\mathcal{N}_2 = \id$, we can apply rule \texttt{MERGE}.
            \item if $\mathcal{N}_1 \neq \id$ and $\mathcal{N}_2 = \id$, we can apply rule $\texttt{PAR}_1$.
            \item if $\mathcal{N}_1 = \id$ and   $\mathcal{N}_2 \neq \id$, we can apply rule $\texttt{PAR}_2$.
            \item if $\mathcal{N}_1 \neq \id$ and $\mathcal{N}_2 \neq \id$, we can apply either rule $\texttt{PAR}_1$ or rule $\texttt{PAR}_2$.
        \end{itemize}
        \item[Case \texttt{T-CHOICE}:] We have $\mathcal{N} = \mathcal{N}_1 \choiceop \mathcal{N}_2 : \gnntype{\mathbb{B} \times T_1}{T_2}$ where $\mathcal{N}_1 : \gnntype{T_1}{T_2}$ and $\mathcal{N}_2 : \gnntype{T_1}{T_2}$. Furthermore, we have that $\lblfunv: \lblfunvtype{\mathbb{B} \times T_1}$. If $\pi_L(\lblfunv)$ is the constant node-labeling function that maps every node to \texttt{True}, we apply rule $\texttt{CHOICE}_1$. In any other case we apply rule $\texttt{CHOICE}_2$.
        \item[Case \texttt{T-STAR}:]  We have $\mathcal{N} = \starop{\mathcal{N}_1}$ where $\mathcal{N}_1 : \gnntype{T}{T}$. We can apply rule \texttt{STAR}.
    \end{description}
\end{proof}
\begin{lemma}[Preservation]\label{lemma:preservation}
% If $\mathcal{N}: \gnntype{T_1}{T_2}$, $\lblfunv: \lblfunvtype{T_1}$ and $\sos{\mathcal{N}}{\lblfunv}{\mathcal{N}'}{\lblfunv'}$, then $\mathcal{N}': \gnntype{T'}{T_2}$ and $\lblfunv': \lblfunvtype{T'}$.
If $\mathcal{N}: \gnntype{T_1}{T_2}$, $\lblfunv: \lblfunvtype{T_1}$, and $\sosv{\mathcal{N}}{\lblfunv}{\gamma}$
\begin{itemize}
    \item if $\gamma = \soscomp{\mathcal{N}'}{\lblfunv'}$, then $\mathcal{N}': \gnntype{T'}{T_2}$ and $\lblfunv': \lblfunvtype{T'}$, for some label type $T'$.
    \item if $\gamma = \lblfunv'$, then $\lblfunv': \lblfunvtype{T_2}$.
\end{itemize}

\end{lemma}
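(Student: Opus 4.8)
The plan is to argue by induction on the derivation of the one‑step transition $\sosv{\mathcal{N}}{\lblfunv}{\gamma}$, with a case analysis on the last rule of Table~\ref{tab:structural_semantics} applied. The cases $\texttt{SEQ}_1$, $\texttt{PAR}_1$, $\texttt{PAR}_2$ — the only rules with a transition premise — will invoke the induction hypothesis on that premise; every other rule is an axiom and is discharged directly. The one auxiliary ingredient is \emph{inversion} of the typing judgement: since each syntactic form of $\mathcal{N}$ is the conclusion of exactly one rule of Table~\ref{Tab:gnn_types}, the judgement $\mathcal{N} : \gnntype{T_1}{T_2}$ pins down the premises, and hence the types of the immediate subterms, uniquely. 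I would note at the outset that the bullet for $\gamma = \lblfunv'$ can arise only from \texttt{ID}, where $T_1 = T_2$ and the residual is $\lblfunv : \lblfunvtype{T_2}$ itself; every other rule produces a configuration $\soscomp{\mathcal{N}'}{\lblfunv'}$.

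For the remaining axioms I simply read the type off the residual. For \texttt{APPLY}, inversion of \texttt{T-PSI} gives $f_\psi$ of type $T_1^{\star} \times T_1 \to T_2$, so the new labeling $\lambda v.\, f_\psi(\lblfunv(\vSet),\lblfunv(v))$ has type $\lblfunvtype{T_2}$ and $\id : \gnntype{T_2}{T_2}$; take $T' = T_2$. The cases \texttt{PREIMG}/\texttt{POSTIMG} are identical via \texttt{T-PRE}/\texttt{T-POST}. For $\texttt{SEQ}_2$, inversion of \texttt{T-SEQ} on $\id;\mathcal{N}_2$ forces the intermediate type to be $T_1$, so $\mathcal{N}_2 : \gnntype{T_1}{T_2}$ and $\lblfunv : \lblfunvtype{T_1}$. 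For \texttt{SPLIT}, \texttt{T-PAR} gives $\mathcal{N}_1 : \gnntype{T_1}{A}$, $\mathcal{N}_2 : \gnntype{T_1}{B}$ with $T_2 = A \times B$; then $\mathcal{N}_1 \parcomp \mathcal{N}_2 : \gnntype{T_1 \times T_1}{A \times B}$ by \texttt{T-PROD} and $\lblfunv | \lblfunv : \lblfunvtype{T_1 \times T_1}$, so $T' = T_1 \times T_1$. For \texttt{MERGE}, \texttt{T-PROD} together with \texttt{T-ID} on each factor gives $\id \parcomp \id : \gnntype{A \times B}{A \times B}$, which $\id$ matches. For $\texttt{CHOICE}_1$/$\texttt{CHOICE}_2$, \texttt{T-CHOICE} gives $T_1 = \mathbb{B} \times S$ with $\mathcal{N}_i : \gnntype{S}{T_2}$, and $\pi_R(\lblfunv) : \lblfunvtype{S}$; take $T' = S$.

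The rules with a premise follow the same recipe: apply the induction hypothesis to the sub‑derivation, then re‑apply the \emph{same} typing rule to the residual. For $\texttt{SEQ}_1$, inversion of \texttt{T-SEQ} gives $\mathcal{N}_1 : \gnntype{T_1}{T}$ and $\mathcal{N}_2 : \gnntype{T}{T_2}$; the hypothesis on $\sos{\mathcal{N}_1}{\lblfunv}{\mathcal{N}_1'}{\lblfunv'}$ yields $\mathcal{N}_1' : \gnntype{T'}{T}$ and $\lblfunv' : \lblfunvtype{T'}$, whence $\mathcal{N}_1';\mathcal{N}_2 : \gnntype{T'}{T_2}$ by \texttt{T-SEQ}. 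For $\texttt{PAR}_1$, inversion of \texttt{T-PROD} gives $\mathcal{N}_1 : \gnntype{A_1}{B_1}$ and $\mathcal{N}_2 : \gnntype{A_2}{B_2}$; then $\pi_L(\lblfunv) : \lblfunvtype{A_1}$, the hypothesis gives $\mathcal{N}_1' : \gnntype{A_1'}{B_1}$ and $\lblfunv' : \lblfunvtype{A_1'}$, so $\lblfunv' | \pi_R(\lblfunv) : \lblfunvtype{A_1' \times A_2}$ and $\mathcal{N}_1' \parcomp \mathcal{N}_2 : \gnntype{A_1' \times A_2}{B_1 \times B_2}$ by \texttt{T-PROD}; $\texttt{PAR}_2$ is symmetric.

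The only case with real content is \texttt{STAR}, which has no premise: from inversion of \texttt{T-STAR} we get $\mathcal{N} : \gnntype{T}{T}$, and I must check that the replacement term $((\id || \mathcal{N}) ; \psi_{\simop} || \id) ; (\id \choiceop \mathcal{N} ; \starop{\mathcal{N}})$ is typable with domain and codomain $T$. Following the operator‑precedence rules of Definition~\ref{def:syntax} (so that $;$ binds tighter than $||$, which binds tighter than $\choiceop$), I would build the derivation bottom‑up: $\id || \mathcal{N} : \gnntype{T}{T \times T}$ by \texttt{T-PAR}; since the symbol $\psi_{\simop}$ is associated to $f_{\simop}$ of type $(T \times T)^{\star} \times (T \times T) \to \mathbb{B}$, rule \texttt{T-PSI} gives $\psi_{\simop} : \gnntype{T \times T}{\mathbb{B}}$, hence $(\id || \mathcal{N}) ; \psi_{\simop} : \gnntype{T}{\mathbb{B}}$ by \texttt{T-SEQ} and $((\id || \mathcal{N}) ; \psi_{\simop}) || \id : \gnntype{T}{\mathbb{B} \times T}$ by \texttt{T-PAR}; on the right, $\mathcal{N} ; \starop{\mathcal{N}} : \gnntype{T}{T}$ by \texttt{T-SEQ} (reusing $\starop{\mathcal{N}} : \gnntype{T}{T}$), so $\id \choiceop (\mathcal{N} ; \starop{\mathcal{N}}) : \gnntype{\mathbb{B} \times T}{T}$ by \texttt{T-CHOICE}; a final \texttt{T-SEQ} joins the two halves to $\gnntype{T}{T}$, and since $\lblfunv : \lblfunvtype{T}$ is unchanged we take $T' = T$. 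I expect no deep obstacle anywhere; the only delicate point is keeping the parenthesization of the \texttt{STAR} residual straight and matching each connective to its typing rule, everything else being mechanical inversion.
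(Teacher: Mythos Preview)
Your proof is correct and matches the paper's argument in content: the same case analysis, the same use of inversion on the typing rules, and the same explicit typing derivation for the \texttt{STAR} residual. The only difference is organizational---you induct on the operational derivation and then invert the typing, whereas the paper inducts on the typing derivation and then case-splits on the applicable operational rules; these are dual presentations of the same proof and neither buys anything the other does not.
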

%[\texttt{APPLY}] &
% $\sos{\psi}{\lblfunv}{\id}{\lambda v . f_{\psi}(\lblfunv(\vSet), \lblfunv(v))}$ \\[5mm]

% [\texttt{PREIMG}] &
% $\sos{\lhd_{\sigma}^{\varphi}}{\lblfunv}{\id}{\lambda v . f_\sigma(f_\varphi((\lblfunv, \lblfune)(\inset{\graph}{v})), \lblfunv(v))}$ \\[5mm]

% [\texttt{POSTIMG}] &
% $\sos{\rhd_{\sigma}^{\varphi}}{\lblfunv}{\id}{\lambda v . f_\sigma(f_\varphi((\lblfunv, \lblfune)(\outset{\graph}{v})), \lblfunv(v))}$ \\[5mm]
\begin{proof}
    By induction on typing derivations.
        \begin{description}
        \item[Case \texttt{T-ID}:] We have $\mathcal{N} = \id : \gnntype{T}{T}$ and $\lblfunv: \lblfunvtype{T}$. The only applicable rule is \texttt{ID} by which $\sosv{\id}{\lblfunv}{\lblfunv} : \lblfunvtype{T}$.
        
        \item[Case \texttt{T-PSI}:] We have $\psi: \gnntype{T_1}{T_2}$ and $\lblfunv: \lblfunvtype{T_1}$, with $f_\psi: T_1^\star \times T_1 \rightarrow T_2$. The only rule that can be applied is \texttt{APPLY} which gives $\mathcal{N}' = \id$ and $\lblfunv' = \lambda v . f_{\psi}(\lblfunv(\vSet), \lblfunv(v))$. Then $\lblfunv'$ has type $\lblfunvtype{T_2}$ and $\id$ has type $\gnntype{T_2}{T_2}$ by rule \texttt{T-ID}.
        
        \item[Case \texttt{T-PRE}:] We have $\lhd_\sigma^\varphi : \gnntype{T_1}{T_3}$ and $\lblfunv: \lblfunvtype{T_1}$, with $f_\varphi: T_1 \times T_e \times T_1 \rightarrow T_2$ and $f_\sigma: T_2^\star \times T_1 \rightarrow T_3$. The only rule that can be applied is \texttt{PREIMG} which gives $\mathcal{N}' = \id$ and $\lblfunv' = \lambda v . f_\sigma([f_\varphi(\lblfunv(u), \lblfune((u, v)), \lblfunv(v)) \mid (u, v) \in \inset{\graph}{v}], \lblfunv(v))$. Then $\lblfunv'$ has type $\lblfunvtype{T_3}$ and $\id$ has type $\gnntype{T_3}{T_3}$ by rule \texttt{T-ID}.
        
        \item[Case \texttt{T-POST}:] Analogous to \texttt{T-PRE}, but using rule \texttt{POSTIMG} instead.
        
        \item[Case \texttt{T-SEQ}:] We have $\mathcal{N}_1 ; \mathcal{N}_2 : \gnntype{T_1}{T_3}$ and $\lblfunv : \lblfunvtype{T_1}$. We also have sub-derivations with conclusions that $\mathcal{N}_1 : \gnntype{T_1}{T_2}$ and $\mathcal{N}_2 : \gnntype{T_2}{T_3}$. There are two possible rules that can be applied, $\texttt{SEQ}_1$ and $\texttt{SEQ}_2$. 
        \begin{itemize}
            \item In the case of $\texttt{SEQ}_1$, we have $\mathcal{N}' = \mathcal{N}_1' ; \mathcal{N}_2$. By applying the induction hypothesis on $\mathcal{N}_1$ we get that $\lblfunv' : \lblfunvtype{T}$ and $\mathcal{N}_1' : \gnntype{T}{T_2}$ for some node-labeling type $\lblfunvtype{T}$. Then we can apply rule \texttt{T-SEQ} on $\mathcal{N}_1' ; \mathcal{N}_2$ to get its type $\gnntype{T}{T_3}$.
            \item In the case of $\texttt{SEQ}_2$, then $\mathcal{N}_1 = \id$, $T_1 = T_2$ and we have $\mathcal{N}' = \mathcal{N}_2$, which we know to have type $\gnntype{T_1}{T_3}$, and $\lblfunv' = \lblfunv$, which we know to have type $\lblfunvtype{T_1}$. 
        \end{itemize}
        
        \item[Case \texttt{T-PAR}:] We have $\mathcal{N}_1 || \mathcal{N}_2 : \gnntype{T}{T_1 \times T_2}$ and $\lblfunv : \lblfunvtype{T}$. We also have sub-derivations with conclusions that $\mathcal{N}_1 : \gnntype{T}{T_1}$ and $\mathcal{N}_2 : \gnntype{T}{T_2}$. The only rule that can be applied is \texttt{SPLIT}. Then $\lblfunv' = \lblfunv | \lblfunv : \lblfunvtype{T \times T}$ and $\mathcal{N}' = \mathcal{N}_1 \parcomp \mathcal{N}_2$ which by rule \texttt{T-PROD} has type $\gnntype{T \times T}{T_1 \times T_2}$.

        \item[Case \texttt{T-PROD}:] We have $\mathcal{N}_1 \parcomp \mathcal{N}_2 : \gnntype{T_1 \times T_2}{T_1' \times T_2'}$ and $\lblfunv : \lblfunvtype{T_1 \times T_2}$. We also have sub-derivations with conclusions that $\mathcal{N}_1 : \gnntype{T_1}{T_1'}$ and $\mathcal{N}_2 : \gnntype{T_2}{T_2'}$. There are three possible rules that can be applied, $\texttt{PAR}_1, \texttt{PAR}_2$, and $\texttt{MERGE}$.
        \begin{itemize}
            \item In the case of $\texttt{PAR}_1$, we have $\mathcal{N}' = \mathcal{N}_1' \parcomp \mathcal{N}_2$ and $\lblfunv' = \lblfunv_1 | \pi_R(\lblfunv)$. By applying the induction hypothesis on $\mathcal{N}_1$ we get that $\lblfunv_1 : \lblfunvtype{T}$ and $\mathcal{N}_1' : \gnntype{T}{T_1'}$ for some node-labeling type $\lblfunvtype{T}$. Then we can apply rule \texttt{T-PROD} on $\mathcal{N}_1' \parcomp \mathcal{N}_2$ to get its type $\gnntype{T \times T_2}{T_1' \times T_2'}$, and $\lblfunv': \lblfunvtype{T \times T_2}$.
            \item The case of $\texttt{PAR}_2$ is analogous to that of $\texttt{PAR}_1$.
            \item In the case of $\texttt{MERGE}$, we have $\mathcal{N}' = \id$ and $\lblfunv' = \lblfunv$. Furthermore, this rule could have been applied only if $\mathcal{N}_1 = \mathcal{N}_2 = \id$, which means that $T_1 = T_1'$ and $T_2 = T_2'$. Then by rule \texttt{T-ID} we get that $\id: \gnntype{T_1' \times T_2'}{T_1' \times T_2'}$.
        \end{itemize}

        \item[Case \texttt{T-CHOICE}:] We have $\mathcal{N}_1 \choiceop \mathcal{N}_2: \gnntype{\mathbb{B} \times T_1}{T_2}$ and $\lblfunv: \lblfunvtype{\mathbb{B} \times T_1}$.  We also have sub-derivations with conclusions that $\mathcal{N}_1 : \gnntype{T_1}{T_2}$ and $\mathcal{N}_2 : \gnntype{T_1}{T_2}$. There are two possible rules that can be applied, $\texttt{CHOICE}_1$ and $\texttt{CHOICE}_2$.
        \begin{itemize}
            \item In case $\texttt{CHOICE}_1$, we have $\lblfunv' = \pi_R(\lblfunv) : \lblfunvtype{T_1}$ and $\mathcal{N}' = \mathcal{N}_1$ which we know to have type $\gnntype{T_1}{T_2}$.
            \item In case $\texttt{CHOICE}_2$, we have $\lblfunv' = \pi_R(\lblfunv) : \lblfunvtype{T_1}$ and $\mathcal{N}' = \mathcal{N}_2$ which we know to have type $\gnntype{T_1}{T_2}$.
        \end{itemize}

        \item[Case \texttt{T-STAR}:]  We have $\starop{\mathcal{N}}: \gnntype{T}{T}$ and $\lblfunv: \lblfunvtype{T}$. We also have a subderivation with the conclusion that $\mathcal{N}: \gnntype{T}{T}$. The only rule that can be applied is \texttt{STAR}. Then $\lblfunv' = \lblfunv : \lblfunvtype{T}$ and $\mathcal{N}' = ((\id || \mathcal{N}) ; \psi_{\simop} || \id) ; (\id \choiceop \mathcal{N} ; \starop{\mathcal{N}})$.  To type this term, first we infer the type of the left-hand side of the sequential composition
        \[\resizebox{0.94\textwidth}{!}{
        \inference[\texttt{T-PAR}]{
        \inference[\texttt{T-SEQ}]{
        \inference[\texttt{T-PAR}]{
        \inference[\texttt{T-ID}]{}{\id: \gnntype{T}{T}} & \mathcal{N}: \gnntype{T}{T}
        }{\id || \mathcal{N} : \gnntype{T}{T \times T}}
        &
        \inference[\texttt{T-PSI}]{\Gamma |- f_{\psi_{\simop}}: (T \times T)^{*} \times (T \times T) \rightarrow \mathbb{B}}{\psi_{\simop}: \gnntype{T \times T}{\mathbb{B}}}
        }{(\id || \mathcal{N}) ; \psi_{\simop} : \gnntype{T}{\mathbb{B}}}
        &
        \inference[\texttt{T-ID}]{}{\id: \gnntype{T}{T}}
        }{(\id || \mathcal{N}) ; \psi_{\simop} || \id : \gnntype{T}{\mathbb{B} \times T}}
        }\]
        We proceed likewise for the right-hand side
        \[
        \inference[\texttt{T-CHOICE}]{
        \inference[\texttt{T-ID}]{}{\id: \gnntype{T}{T}}
        &
        \inference[\texttt{T-SEQ}]{\mathcal{N}: \gnntype{T}{T}
        &
        \starop{\mathcal{N}} : \gnntype{T}{T}}{\mathcal{N} ; \starop{\mathcal{N}} : \gnntype{T}{T}}
        }{\id \choiceop \mathcal{N} ; \starop{\mathcal{N}} : \gnntype{\mathbb{B} \times T}{T}}\]
        Finally, we can apply rule \texttt{T-SEQ} to infer
        \[
        \inference[\texttt{T-SEQ}]{(\id || \mathcal{N}) ; \psi_{\simop} || \id : \gnntype{T}{\mathbb{B} \times T} & \id \choiceop \mathcal{N} ; \starop{\mathcal{N}} : \gnntype{\mathbb{B} \times T}{T}
        }
        {((\id || \mathcal{N}) ; \psi_{\simop} || \id) ; (\id \choiceop \mathcal{N} ; \starop{\mathcal{N}}): \gnntype{T}{T}}\]
        as required.
    \end{description}
\end{proof}

\section{A graphical representation}\label{sec:graphics}
In this section, we present a graphical representation for $\mG$ programs and expressions. Using such graphical notation makes it easier to understand and communicate the structure and purpose of a $\mG$ program, as the programmer might easily lose track of label types and sizes when developing in textual form. The basic terms of $\mG$ can be represented graphically as boxes, as shown in Figure~\ref{fig:basic-repr}, while the composite terms of the language are shown in Figure~\ref{fig:comp-repr}. Each term has exactly one input node-labeling function and one output node-labeling function, and the graphical representation is built top-down starting from the main (usually composite) term and substituting each box with either a basic term or a composite one; in this last case, this procedure is repeated recursively until all boxes in the figure are basic terms. As an illustrative example, let $\psi_1, \psi_2, \psi_3, \varphi, \sigma$ be function symbols. Figure~\ref{fig:example} shows how we can represent graphically the $\mu\mathcal{G}$ expression $(\psi_1 || \psi_2);\psi_3;\lhd_{\sigma}^{\varphi}$. 

\begin{figure}[ht]
    \centering
    \includegraphics[scale=0.33]{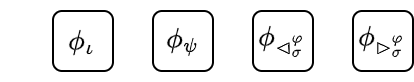}
    \caption{Graphical representation of basic $\mG$ expressions.}
    \label{fig:basic-repr}
    \Description{Three square boxes labelled as $\phi_{\psi}$, $\phi_{\lhd_{\sigma}^{\varphi}}$, and $\phi_{\rhd_{\sigma}^{\varphi}}$}
\end{figure}

\begin{figure}[ht]
\centering
    \subfloat[The sequential composition of graph neural networks $\phi_1$ and $\phi_2$.]{
        \includegraphics[scale=0.25]{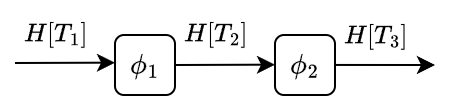}
    } \qquad
    \subfloat[The parallel composition of graph neural networks $\phi_1$ and $\phi_2$.]{
        \includegraphics[scale=0.25]{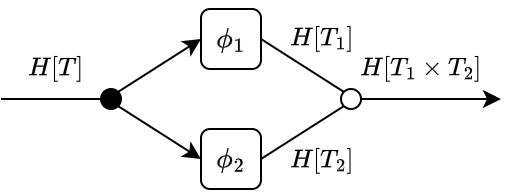}
    }

    \subfloat[The choice operator for the graph neural networks $\phi_1$ and $\phi_2$.]{
        \includegraphics[scale=0.25]{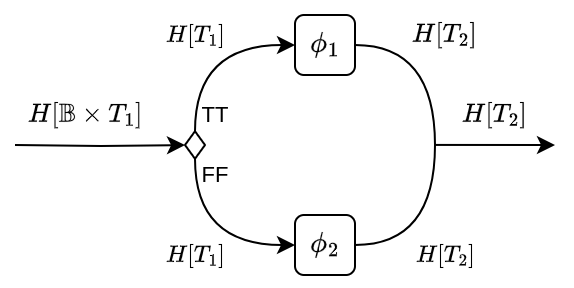}
    }

    \subfloat[The star graph neural network that computes the fixed point of $\phi_1$.]{
    \includegraphics[scale=0.25]{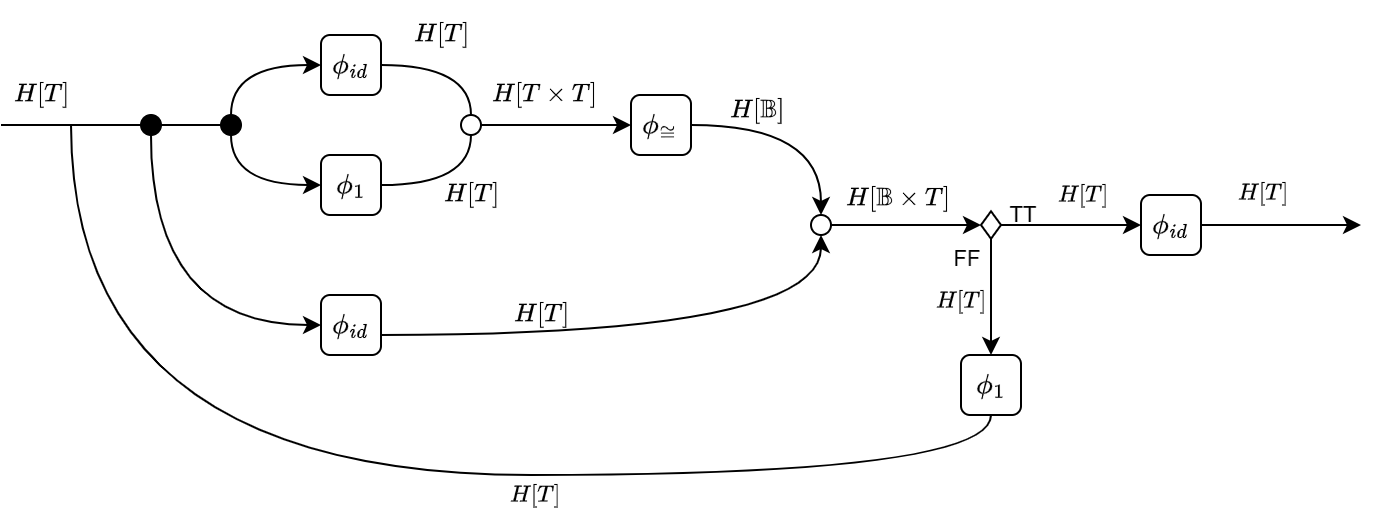}
    }
    
    \caption{Graphical representation of the possible compositions of $\mG$ expressions.}
    \label{fig:comp-repr}
    \Description{Four subfigures showing (1) the sequential composition of GNNs, (2) the parallel composition of GNNs, (3) the choice operator for GNNs, and (4) the star operator for GNNs.}
\end{figure}

\begin{figure}[ht]
    \centering
    \includegraphics[scale=0.25]{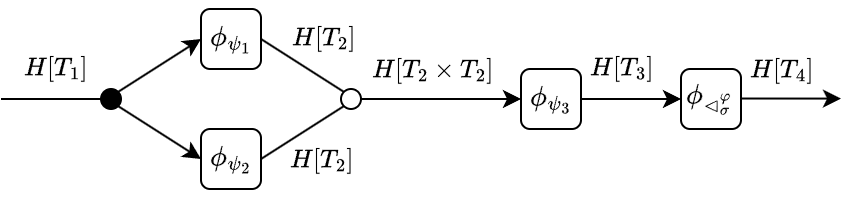}
    \caption{Example graphical representation of the $\mu\mathcal{G}$ expression $(\psi_1 || \psi_2);\psi_3;\lhd_{\sigma}^{\varphi}$.}
    \label{fig:example}
    \Description{The incoming arrow from the left splits into two and enters boxes $\phi_{\psi_{1}}$ and $\phi_{\psi_{2}}$. Then the arrows coming out of these boxes merge again and enter a third box $\phi_{\psi_3}$. Finally, the arrow enters a box $\phi_{\lhd_{\sigma}^{\varphi}}$ and exits pointing to the right.}
\end{figure}

In the graphical representation for $\mG$ expressions we have used two auxiliary functions and a choice operator which we describe in turn:
\begin{itemize}
\item The function $\bullet: \lblfunvtype{T_1} \rightarrow \lblfunvtype{T_1} \times \lblfunvtype{T_1}, \bullet(\lblfunv) = (\lblfunv, \lblfunv)$ maps a node-labeling function to a tuple containing two copies of itself.
\item The function $\circ: \lblfunvtype{T_1} \times \lblfunvtype{T_2} \rightarrow \lblfunvtype{T_1 \times T_2}$, $\circ(\lblfunv_1, \lblfunv_2) = \lblfunv_1 | \lblfunv_2$ maps two node-labeling functions to their parallel composition.
% \item The function $\diamond: \lblfunvtype{\mathbb{B} \times T_1} \rightarrow \lblfunvtype{T_1}$,
% \[\diamond(\lblfunv) = \begin{cases}
% \phi_1(\pi_R(\lblfunv)) & \text{if } \pi_L(\lblfunv) = \lambda v . \mathtt{True}\\
% \pi_R(\lblfunv) & \text{otherwise}
% \end{cases}\]
% takes a Boolean node-labeling function and another labeling function and propagates it in output as the first or second element of a tuple. If the Boolean labeling function maps every node to a true value, the other labeling function is propagated as the first element of the tuple (to be received in input by the GNN to be executed when the test is satisfied), otherwise it is returned as the second element of the tuple.
% \item The function $eq: \lblfunvtype{T_1} \times \lblfunvtype{T_1} \rightarrow \lblfunvtype{\mathbb{B}}, eq(\lblfunv_1, \lblfunv_2) = \lambda v . \begin{cases}
% \mathtt{True} & \text{if } \lblfunv_1(v) = \lblfunv_2(v) \\
% \mathtt{False} & \text{otherwise}
% \end{cases}$ takes in input two node-labeling functions and returns a node-labeling function that marks nodes with \texttt{True} if the two labeling functions give it the same labels, or else \texttt{False}.
% \item The function $bind: \lblfunvtype{T_1} \rightarrow (), bind(\lblfunv_1) = \rho[X \leftarrow \lambda \lblfunv . \lblfunv_1]$ assigns to the fixpoint variable $X$ a constant GNN that returns the input node-labeling function.
\end{itemize}
The diamond symbol $\diamond$ represents the choice operator that selects the \texttt{True} branch if the Boolean node-labeling function it receives in the left projection of its input labels each node in the graph with a \texttt{True} value, otherwise it selects the \texttt{False} branch. The selected branch receives the right projection of the node-labeling function sent to the choice operator.

\section{Implementation}\label{sec:implementation}
The $\mG$ language is available as a Python library called \textsc{libmg}, which we described in detail in a previous work~\cite{belenchia_libmg_2024}. The library was developed on top of Spektral~\cite{grattarola_graph_2021}, a graph neural network library based on TensorFlow~\cite{tensorflow_2015}. The main advantage of using a framework such as TensorFlow is that $\mG$ programs are compiled to graph neural networks that can be trained and executed seamlessly on CPUs, GPUs, or any other hardware supported by TensorFlow without changing the source code. TensorFlow also supports these operations in a distributed setting.

The functionalities of \textsc{libmg} include the typical formal language support tools in the form of a parser, unparser and normalizer for $\mG$ expressions. The main feature that is provided consists in the $\mG$ compiler that transforms a $\mG$ program into a TensorFlow \texttt{Model} instance. The compiler can be set up to automatically tabulate sub-expressions that occur multiple times in the same program. The library also implements basic graph visualization functions, that allows to view in any web browser the input or output graphs, or intermediate representation produced by a $\mG$ model. Finally, a simple explanatory algorithm is provided that computes the sub-graph that was used by a $\mG$ model to compute the label for a given node.

Using \textsc{libmg}, the $\psi, \varphi$, and $\sigma$ functions are defined by instantiating the corresponding classes and are stored in dictionaries. These dictionaries, which define the mapping between function names and the functions themselves, are used to create a $\mG$ compiler instance. The $\mG$ compiler can then parse $\mG$ programs containing these function names and return the GNN that specified by the program.

%The models generated by the compiler can be trained by TensorFlow as usual. %Only in the case of the star operator, the corresponding layer implements a custom gradient computation. In that case implicit differentiation is used as to make the computation of the gradient tractable.

\section{Evaluation}\label{sec:evaluation}
We evaluate $\mG$ both in terms of its expressiveness, meaning that it can be used to define most, if not all, of the graph neural network architectures that have been developed over the years, and also as a specification language for the development of graph neural networks for specific tasks. As an example of the latter use case, in Section~\ref{sec:mc} we describe an updated version of the CTL model checker developed using $\mG$ introduced in our previous work~\cite{belenchia_implementing_2023, belenchia_libmg_2024}. Then, in Section~\ref{sec:gnn-examples}, we show how to define some of the most well-known GNN architectures using our language.

\subsection{Model Checking}\label{sec:mc}
We evaluated $\mG$ in a previous work by using it to define a GNN that performs explicit-state global CTL model checking on Kripke structures. For that use case, the set of functions that we used is shown in Table~\ref{tab:ctl-functions}, while in Table~\ref{tab:ctl-mg} we show the translation function from CTL to the $\mG$ expression denoting the GNN that verifies it.

\begin{table}[tbp]
    \centering
    \begin{tabular}{ll}
    $\psi_{p}: (2^{AP})^{\star} \times 2^{AP} \rightarrow \mathbb{B}$ &  $\psi_p(X, x) = \begin{cases}\mathtt{True} & \text{if } p \in x\\ \mathtt{False} & \text{otherwise} \end{cases} (\forall p \in AP) $\\ 
    $\psi_{tt}: (2^{AP})^{\star} \times 2^{AP} \rightarrow \mathbb{B}$ &  $\psi_{tt}(X, x) = \mathtt{True} $\\ 
    $\psi_{ff}: (2^{AP})^{\star} \times 2^{AP} \rightarrow \mathbb{B}$ &  $\psi_{ff}(X, x) = \mathtt{False} $\\ 
    $\psi_{\neg}: (\mathbb{B})^{\star} \times \mathbb{B} \rightarrow \mathbb{B}$     & $\psi_{\neg}(X, x) = \neg x$ \\ 
    $\psi_{\land}: (\mathbb{B}^2)^{\star} \times \mathbb{B}^2 \rightarrow \mathbb{B}$     & $\psi_{\land}(X, x) = x_1 \land x_2$ \\ 
    $\psi_{\lor}: (\mathbb{B}^2)^{\star} \times \mathbb{B}^2 \rightarrow \mathbb{B}$    & $\psi_{\lor}(X, x) = x_1 \lor x_2$ \\ 
    $\varphi_{id}: \mathbb{B} \times () \times \mathbb{B} \rightarrow \mathbb{B}$    & $\varphi_{id}(i, e, j) = i$ \\ 
    $\sigma_{\lor}: \mathbb{B}^{\star} \times \mathbb{B} \rightarrow \mathbb{B}$    & $\sigma_{\lor}(M, x) = \bigvee_i M_i$ \\ 
    \end{tabular}
    \caption{The $\psi, \varphi$, and $\sigma$ functions used for CTL model checking. The subscripts indicate the function symbols that will be used in the $\mG$ expressions. For each $p \in AP$, we consider a corresponding function $\psi_p$.}
    \label{tab:ctl-functions}
\end{table}

\begin{table}[tbp]
    \begin{align*}
    \mathcal{M}(p) &= p \\
    \mathcal{M}(\neg \phi) &= \mathcal{M}(\phi);\neg \\
    \mathcal{M}(\phi_1 \land \phi_2) &= (\mathcal{M}(\phi_1) || \mathcal{M}(\phi_2));\land \\
    \mathcal{M}(\texttt{EX}\; \phi) &= \mathcal{M}(\phi);\rhd^{id}_{\lor}\\
    \mathcal{M}(\texttt{EG}\; \phi) &= \texttt{fix } X = tt \texttt{ in } (\mathcal{M}(\phi) || X ; \rhd^{id}_{\lor});\land\\
    \mathcal{M}(\texttt{E}\; \phi_1 \;\texttt{U}\; \phi_2) &= \texttt{fix } X = ff \texttt{ in } (\mathcal{M}(\phi_2) || (\mathcal{M}(\phi_1) || X ; \rhd^{id}_{\lor});\lor);\land \\
    \end{align*}
    \caption{Recursive definition of the translation function $\mathcal{M}: \phi \rightarrow \mathcal{N}$.}
    \label{tab:ctl-mg}
\end{table}

The $\mG$ model checker implementation was compared to two other explicit-state model checkers, pyModelChecking\footnote{\url{https://github.com/albertocasagrande/pyModelChecking}} and the widely used mCRL2 toolset~\cite{mucrl2}. We used some of the benchmarks from the Model Checking Contest 2022~\cite{abcdg2019}, namely those for which we could explicitly generate the reachability graph on our machine, shown in Table~\ref{tab:models}. The $\mG$ model checker was set up to either verify all the formulas in parallel (the \textit{full} setup), or one by one as would have been the case for the other model checkers (the \textit{split} setup). In the former case, there could have been additional advantages due to the tabulation of common sub-formulas across multiple formulas. The experimental results are shown in Figure~\ref{fig:evaluation}, and we can notice that for the larger Kripke structures (>17000 states) there is an evident speed advantage of $\mG$ compared to the other model checkers (in some instances, by an order of magnitude). On the twelfth Petri Net, the full setup runs out of memory because the size of the formulas also has an influence on the amount of memory that is used by the GNN, and in that case their parallel composition exceeded the GPU memory.

\begin{table}[tbp]
\centering
\begin{tabular}{
  | c
  | l 
  | c 
  | c |  }
\hline
\textbf{ID} & \textbf{Petri Net} & \textbf{\#nodes in RG} & \textbf{\#edges in RG} \\ \hline
1 & RobotManipulation-PT-00001 & 110 & 274 \\ \hline 
2 & TokenRing-PT-005 & 166 & 365  \\ \hline 
3 & Philosophers-PT-000005 & 243 & 945  \\ \hline 
4 & RobotManipulation-PT-00002 & 1430 & 5 500  \\ \hline 
5 & Dekker-PT-010 & 6144 & 171 530  \\ \hline 
6 & BART-PT-002 & 17 424 & 53 328  \\ \hline 
7 & ClientsAndServers-PT-N0001P0 & 27 576 & 113 316  \\ \hline 
8 & Philosophers-PT-000010 & 59 049 & 459 270  \\ \hline 
9 & Referendum-PT-0010 & 59 050 & 393 661  \\ \hline 
10 & SatelliteMemory-PT-X00100Y0003 & 76 358 & 209 484  \\ \hline 
11 & RobotManipulation-PT-00005 & 184 756 & 1 137 708  \\ \hline 
12 & Dekker-PT-015 & 278 528 & 16 834 575\\ \hline 
13 & HouseConstruction-PT-00005 & 1 187 984 & 7 191 110 \\ \hline
\end{tabular}
\caption{Petri net models used for the experiments, for which we reported here the number of nodes and edges in the reachability graph (RG).}
\label{tab:models}
\end{table}

% \begin{table}
% \centering
% \resizebox{\textwidth}{!}{
% \begin{tabular}{
%   | l 
%   | c
%   | c
%   | c 
%   | c
%   | c
%   | c
%   | c | }
% \hline
% \textbf{Petri Net} & \textbf{Tracing (split)} & \textbf{Running (split)} & \textbf{Tracing (full)} & \textbf{Running (full)} & \textbf{pyMC} & \textbf{$\mu$CRL2 PBES gen.}  & \textbf{$\mu$CRL2 PBES solv.}\\ \hline
% 1 & 7.541s & 0.406s & 6.494s & 0.17s & 0.048s & 0.476s & 0.118s \\ \hline 
% 2  & 13.534s & 0.552s & 7.704s & 0.189s & 0.203s & 10.822s & 0.128s \\ \hline 
% 3  & 9.379s & 0.426s & 8.441s & 0.174s & 0.125s & 2.345s & 0.13s \\ \hline 
% 4  & 8.655s & 0.609s & 7.974s & 0.264s & 0.863s & 5.739s & 0.142s \\ \hline 
% 5 & 9.926s & 0.684s & 8.621s & 0.176s & 6.001s & 132.027s & 0.164s \\ \hline 
% 6 & 25.265s & 4.987s & 12.21s & 0.958s & 24.399s & 481.579s & 0.161s \\ \hline 
% 7 & 9.593s & 1.603s & 8.677s & 0.727s & 17.53s & 105.978s & 0.173s \\ \hline 
% 8 & 8.928s & 2.008s & 7.401s & 0.619s & 52.184s & 1031.431s & 0.219s \\ \hline 
% 9 & 10.404s & 1.79s & 9.491s & 0.476s & 51.117s & 610.559s & 0.237s \\ \hline 
% 10 & 8.515s & 3.942s & 7.826s & 1.359s & 28.542s & 214.721s & 0.209s \\ \hline 
% 11 & 8.727s & 4.940s & 8.094s & 1.492s & 136.821s & 1232.92s & 0.366 \\ \hline 
% 12 & 10.493s & 16.593s & 8.911s & OOM & 497.917s & 12551.7s & 4.227s \\ \hline 
% 13 & 9.008s & 61.964s & 8.445s & 25.003s & 936.875s & 7160.645s & 1.884s 
%  \\ \hline
% \end{tabular}}
% \caption{Execution times for $\mu\mathcal{G}$ models, pyModelChecking and $\mu$CRL2. The notation ``OOM'' indicates an out of memory error.}
% \label{tab:evaluation-ctl}
% \end{table}

\begin{figure}[tbp]
    \centering
    \includegraphics[width=\textwidth]{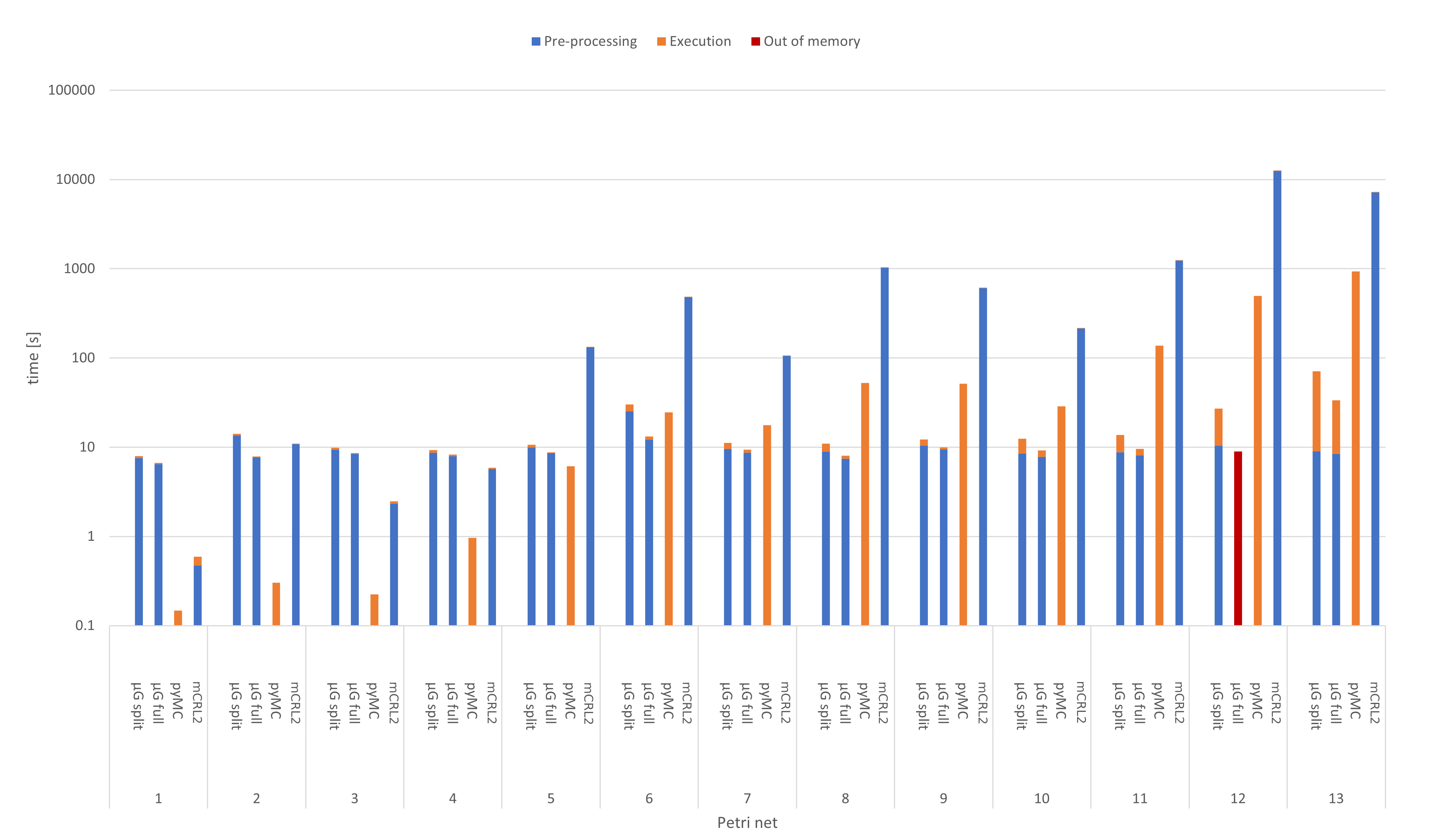}
    \caption{Execution times of $\mG$ (split and full setups), pyModelChecking, and mCRL2 across 13 Petri net models from the Model Checking Contest 2022. The y-axis is on a logarithmic scale. On the twelfth Petri net, the red bar indicates that the $\mG$ model ran out of memory after the pre-processing step.}
    \Description{A bar plot showing the execution times of $\mG$, pyModelChecking and mCRL2.}
    \label{fig:evaluation}
\end{figure}

\subsection{Specification of Graph Neural Networks}\label{sec:gnn-examples}
In this section, we show how to define some of the most popular graph neural network models in $\mG$. As is the case for any programming language, there are many different ways to obtain the same result, therefore the implementations that we show here are only intended to highlight the generality of the language. For specific use cases, it is up to the programmer to choose the most useful $\mG$ implementation.

\paragraph{Graph Convolutional Networks}
A Graph Convolutional Network~\cite{kipf_semi-supervised_2017} (GCN) performs a graph convolution by multiplying the node features with a matrix of weights, normalized according to the degree of the nodes and assigning greater importance to the features received from nodes with fewer neighbors. 
The new node labels $x_i' \in \mathbb{R}^m$ are computed from the current node labels $x_i \in \mathbb{R}^n$ according to the equation
\[x_i' = f \left(\sum_{j \in \fullset{\graph}{i} \cup \{i\}} \frac{x_j \Theta}{\sqrt{deg(i)}\sqrt{deg(j)}}\right)\]
where $deg(i)$ is the degree of node $i$, $\Theta \in \mathbb{R}^{n \times m}$ is a matrix of trainable weights, and $f$ is an activation function. The set of neighbors of a node may include both the predecessors and the successors, as the GCN was originally thought for the case of undirected graphs, or only the predecessors, as is usually the case for directed graphs. In this example, we will implement the GCN assuming the former definition of neighborhood.
\begin{table}[tbp]
    \centering
    \begin{tabular}{ll}
        $\psi_{NN}: (\mathbb{R}^{n})^\star \times \mathbb{R}^n \rightarrow \mathbb{R}^m$ & $\psi_{NN}(X, x) = f(x^T\Theta)$\\
        $\psi_{+_2}: (\mathbb{R}^{2})^\star \times \mathbb{R}^2 \rightarrow \mathbb{R}$ & $\psi_{+_2}(X, x) = x_1 + x_2$\\
        $\psi_{+_{2n}}: (\mathbb{R}^{2n})^\star \times \mathbb{R}^{2n} \rightarrow \mathbb{R}^n$ & $\psi_{+_{2n}}(X, x) = x_1 + x_2$\\
        $\varphi_{1}: \mathbb{R}^n \times \mathbb{R} \times \mathbb{R}^n \rightarrow \mathbb{R}$ & $\varphi_{1}(i, e, j) = 1$\\
        $\varphi_{dgn}: \mathbb{R}^{n+1} \times \mathbb{R} \times \mathbb{R}^{n+1} \rightarrow \mathbb{R}^n$ & $\varphi_{dgn}(i, e, j) = \frac{i_2}{\sqrt{i_1}\sqrt{j_1}}$\\
        $\sigma_{+}:  \mathbb{R}^\star \times \mathbb{R}^n \rightarrow \mathbb{R}$ & $\sigma_{+}(M, x) = 1 + \sum_{i}M_i$\\
        $\sigma_{dg+}:  (\mathbb{R}^{n})^\star \times \mathbb{R}^{n+1} \rightarrow \mathbb{R}^n$ & $\sigma_{dg+}(M, x) = \frac{x_2}{x_1} + \sum_{i}M_i$\\
    \end{tabular}
    \caption{Functions used in the definition of the GCN.}
    \label{tab:gcn}
\end{table}
To implement the GCN, we make use of the functions shown in Table~\ref{tab:gcn}. We start by defining a $\mG$ expression to compute the degree of each node. We recall that the degree of a node is the number of its outgoing edges (the \textit{outdegree}) plus the number of incoming edges (the \textit{indegree}), with self loops counting as both. It is easy to compute this number in $\mG$: we use $\lhd_{+}^{1}$ to compute the indegree and $\rhd_{+}^{1}$ to compute the outdegree, then we add them. 
\[(\lhd_{+}^{1} || \rhd_{+}^{1}) ; +_2\] 
In the steps that follow, we need the degree of nodes but also the original labels, therefore we use parallel composition with $\id$ to have both these values. 
\[(\lhd_{+}^{1} || \rhd_{+}^{1}) ; +_2 || \id\] 
Then we compute $\sum_{j \in \fullset{\graph}{i} \cup \{i\}} \frac{x_j}{\sqrt{deg(i)}\sqrt{deg(j)}}$ by summing the pre-image and the post-image using $dgn$ and $dg+$ 
\[((\lhd_{+}^{1} || \rhd_{+}^{1}) ; +_2 || \id) ; (\lhd_{dg+}^{dgn} || \rhd_{dg+}^{dgn}) ; +_{2n}\]
Note that both $\varphi_{dgn}$ and $\sigma_{dg+}$ expect node labels $l \in \mathbb{R}^{n+1}$, where $l_1 \in \mathbb{R}$ is the degree of the node and $l_2 \in \mathbb{R}^n$ is the actual label of the node. Finally, we multiply the node labels with the weight matrix $\Theta$ followed by the application of the activation function $f$, i.e., the classic dense neural network layer. Putting all of this together, the $\mG$ expression that implements a GCN layer is 
\begin{align*}
    &((\lhd_{+}^{1} || \rhd_{+}^{1}) ; +_2 || \id);(\lhd_{dg+}^{dgn} || \rhd_{dgr+}^{dgn}) ; +_{2n} ; NN
\end{align*}
\paragraph{Graph Attention Networks}
A Graph Attention Network (GAT)~\cite{velickovic_graph_2018}, like the GCN, transforms the node labels using a learnable weight matrix $\Theta \in \mathbb{R}^{n \times m}$. Then it employs a self-attention mechanism $a: \mathbb{R}^m \times \mathbb{R}^m \rightarrow \mathbb{R}$, shared by all the nodes, that given the labels of node $j$ and node $i$ such that $(j, i) \in \eSet$, it indicates the importance of node $j$'s labels to node $i$. This value is then normalized using the softmax function. In the original formulation, the attention mechanism is a single-layer feedforward neural network which uses a weight vector $\textbf{a} \in \mathbb{R}^{2m}$ and the LeakyReLU activation function. Therefore, the attention coefficients $\alpha_{ji}$ are given by
 \[\alpha_{ji} = \frac{exp \left( LeakyReLU \left( \textbf{a}^T (x_i \Theta, x_j \Theta) \right) \right)}{\sum_{k \in \preset{\graph}{i} \cup \{i\}} exp\left( LeakyReLU \left(\textbf{a}^T(x_i\Theta, x_k\Theta) \right)\right)}\]
Given these attention coefficients, the GAT layer computes the new node labels $x_i' \in \mathbb{R}^m$ from the current labels $x_i \in \mathbb{R}^n$ according to the equation
\[x_i' =  f\left(\sum_{j \in \preset{\graph}{i} \cup \{i\}} \alpha_{ji} x_j \Theta \right)\]
where $f$ is the activation function. The authors then extend this basic mechanism by employing multi-head attention, that is, they define $K$ independent attention heads and weight matrices whose outputs are then concatenated. Thus, we get
\[x_i' = \Big\Vert_{k=1}^K f\left(\sum_{j \in \preset{\graph}{i} \cup \{i\}} \alpha_{ji}^k x_j \Theta^k \right) \]
\begin{table}[tbp]
    \centering
    \begin{tabular}{ll}
    $\psi_{NN_i}: (\mathbb{R}^{n})^\star \times \mathbb{R}^n \rightarrow \mathbb{R}^m$ & $\psi_{NN_i}(X, x) = f(x\Theta^i)$\\
    $\psi_{/}: (\mathbb{R}^{n+1})^\star \times \mathbb{R}^{n+1} \rightarrow \mathbb{R}^n$ & $\psi_{/}(X, x) = \frac{x_1}{x_2}$\\
    $\varphi_{att*_i}: \mathbb{R}^n \times \mathbb{R} \times \mathbb{R}^n \rightarrow \mathbb{R}^n$ & $\varphi_{att*_i}(i, e, j) = exp(LeakyReLU(\textbf{a}_i \cdot (i^T\Theta^i , j^T\Theta^i))) \cdot i$\\
    $\varphi_{att_i}: \mathbb{R}^n \times \mathbb{R} \times \mathbb{R}^n \rightarrow \mathbb{R}$ & $\varphi_{att_i}(i, e, j) = exp(LeakyReLU(\textbf{a}_i \cdot (i^T\Theta^i , j^T\Theta^i)))$\\
    $\sigma_{t-att_i}:  \mathbb{R}^\star \times \mathbb{R}^n \rightarrow \mathbb{R}$ & $\sigma_{t-att_i}(M, x) = exp(LeakyReLU(\textbf{a}_i \cdot (x^T\Theta^i , x^T\Theta^i))) + \sum_{j}M_j$\\
    $\sigma_{n-att_i}:  (\mathbb{R}^{n})^\star \times \mathbb{R}^n \rightarrow \mathbb{R}^n$ & $\sigma_{n-att_i}(M, x) = exp(LeakyReLU(\textbf{a}_i \cdot (x^T\Theta^i , x^T\Theta^i))) \cdot x + \sum_{j}M_j$\\
    \end{tabular}
    \caption{Functions used in the definition of GAT convolutions.}
    \label{tab:gat}
\end{table}
The functions needed to implement the GAT layer are shown in Table~\ref{tab:gat}. First, we compute the unnormalized attention coefficients and multiply them with the node labels, using the term $\lhd_{n-att_1}^{att*_1}$. Then to normalize them, we compute in parallel the total attention values for incoming edges at each node using $\lhd_{t-att_1}^{att_1}$, and divide the labels of the first expression by this expression \[(\lhd_{n-att_1}^{att*_1} || \lhd_{t-att_1}^{att_1}) ; /\] At this point, we simply have to multiply by the weight matrix and apply the activation function to get the basic GAT layer \[(\lhd_{n-att_1}^{att*_1} || \lhd_{t-att_1}^{att_1}) ; / ; NN_1\]
Now we show how to use multi-head attention. We save the previous expression in $k$ different variables using \texttt{let} expressions, such that each expression gets its own attention mechanism and weight matrix. Then the body of the \texttt{let} expression is simply the parallel composition of these variables
\begin{align*}
    \texttt{let } &GAT_1 = (\lhd_{n-att_1}^{att*_1} || \lhd_{t-att_1}^{att_1}) ; / ; NN_1,\\
    &\qquad\qquad\qquad\qquad\vdots \\
    &GAT_k = (\lhd_{n-att_k}^{att*_k} || \lhd_{t-att_k}^{att_k}) ; / ; NN_k \texttt{ in } \\
    &\qquad\qquad GAT_1 || GAT_2 || \cdots || GAT_k
\end{align*}

\paragraph{Graph Isomorphism Networks}
A Graph Isomorphism Network (GIN)~\cite{xu_how_2019} generalizes the Weisfeiler-Lehman test~\cite{weisfeiler_reduction_1968} and thus has the greatest discriminative power among GNNs for determining whether two graphs are not isomorphic. The GIN convolution consists in multiplying each node label by $(1 + \epsilon)$, where $\epsilon \in \mathbb{R}$ is a trainable parameter, then adding the sum of the labels of the neighbor nodes. The obtained values are then passed in input to a multilayer perceptron (MLP), that is, it gets multiplied with a learnable weight matrix $\Theta^i$, followed by the application of an activation function $f_i$, two or more times in sequence. Formally, the GIN computes the new node labels $x_i' \in \mathbb{R}^m$ from the current labels $x_i \in \mathbb{R}^n$ as 
\[x_i' = MLP((1 + \epsilon) \cdot x_i + \sum_{j \in \preset{\graph}{i}}x_j)\]
\begin{table}[tbp]
    \centering
    \begin{tabular}{ll}
    $\psi_{MLP}: (\mathbb{R}^{n})^\star \times \mathbb{R}^n \rightarrow \mathbb{R}^{m}$ & $\psi_{MLP}(X, x) = f_2(f_1(x\Theta^{1})\Theta^2)$\\
    $\psi_{*\epsilon}: (\mathbb{R}^{n})^\star \times \mathbb{R}^n \rightarrow \mathbb{R}^n$ & $\psi_{*\epsilon}(X, x) = (w_\epsilon + 1) \cdot x$\\
    $\psi_{+}: (\mathbb{R}^{2n})^\star \times \mathbb{R}^{2n} \rightarrow \mathbb{R}^n$ & $\psi_{+}(X, x) = x_1 + x_2$\\
    $\varphi_{id}: \mathbb{R}^n \times \mathbb{R} \times \mathbb{R}^n \rightarrow \mathbb{R}^n$ & $\varphi_{*}(i, e, j) = i$\\
    $\sigma_{+}:  (\mathbb{R}^{n})^\star \times \mathbb{R}^n \rightarrow \mathbb{R}^n$ & $\sigma_{+}(M, x) = \sum_{i}M_i$\\
    \end{tabular}
    \caption{Functions used in the definition of GIN convolutions.}
    \label{tab:gin}
\end{table}
In Table~\ref{tab:gin} we show the functions we need to implement the GIN layer in $\mG$. We use parallel composition to compute the product of the node labels with $1 + \epsilon$ and the sum of the labels of the neighbors of each node, then we add them using $+$. 
\[(*\epsilon || \lhd^{id}_{+}) ; +\]
Finally, we put in sequence two dense neural network layers with weight matrices $\Theta^1 \in \mathbb{R}^{n \times n'}, \Theta^2 \in \mathbb{R}^{n' \times m}$ and activation functions $f_1, f_2$, as in the case of the original GIN article. Therefore, the $\mG$ expression that implements a GIN is 
\[(*\epsilon || \lhd^{id}_{+}) ; + ; MLP\]

\paragraph{The Original Graph Neural Network Model}
The graph neural network model was first introduced by~\citet{scarselli_graph_2009}. The new node labels $x_i' \in \mathbb{R}^m$ of the original GNN model are computed as a two-step process. In the first step the new ``state'' $s_i' \in \mathbb{R}^k$ of each node is computed from the current node labels $x_i \in \mathbb{R}^n$, the edge labels $e_{ji} \in \mathbb{R}^{l}$, and the neighbor nodes' labels $x_j \in \mathbb{R}^n$ and states $s_j \in \mathbb{R}^k$ as the fixed point of the equation 
\[s_i = \sum_{j \in \preset{\graph}{i}} h(x_i, e_{ji}, x_j, s_j)\] 
where $h$ is typically a multilayer perceptron with weight matrices $\Theta^1 \in \mathbb{R}^{(2n + l + k) \times n'}, \Theta^2 \in \mathbb{R}^{n' \times k}$ and activation functions $f_1, f_2$. Then, the second step consists in the application of a function $g$ to the state and labels of each node to obtain the final node labels 
\[x_i' = g(x_i, s_i)\]
where in this case too $g$ is a multilayer perceptron, with weight matrices $\Theta^3 \in \mathbb{R}^{(n + k) \times m'}, \Theta^4 \in \mathbb{R}^{m' \times m}$ and activation functions $f_3, f_4$.
\begin{table}[tbp]
    \centering
    \begin{tabular}{ll}
    $\psi_{MLP_2}: (\mathbb{R}^{n+k})^\star \times \mathbb{R}^{n+k} \rightarrow \mathbb{R}^{m}$ & $\psi_{MLP_2}(X, x) = f_4(f_3(x\Theta^{3})\Theta^{4})$\\
    $\psi_{0}: (\mathbb{R}^{n})^\star \times \mathbb{R}^n \rightarrow \mathbb{R}^k$ & $\psi_{0}(X, x) = \textbf{0}_k$\\
    $\varphi_{MLP_1}: \mathbb{R}^{n+k} \times \mathbb{R} \times \mathbb{R}^{n+k} \rightarrow \mathbb{R}^{k}$ & $\varphi_{MLP_1}(i, e, j) = f_2(f_1((\pi_L(j), e, i)\Theta^1)\Theta^2)$\\
    $\sigma_{+}:  (\mathbb{R}^{k})^\star \times \mathbb{R}^{k} \rightarrow \mathbb{R}^{k}$ & $\sigma_{+}(M, x) = \sum_{i}M_i$\\
    \end{tabular}
    \caption{Functions used in the definition of the original GNN convolution.}
    \label{tab:gnn}
\end{table}
In Table~\ref{tab:gnn} we show the functions required to implement this GNN layer. The first step consists in initializing the node states. For this example, we initialize the node states to an array of $k$ zeros and we append it to the current node labels using the identity term and parallel composition 
\[(\id || 0)\]
After this step, we compute the fixed point of the states using a MLP to generate messages and aggregating them through summation 
\[(\id || 0) ; \starop{(p_L || \lhd_{+}^{MLP_1})}\]
Once a fixed point for the states is reached, we use another MLP to obtain the final node labels and we get
\[(\id || 0);\starop{(p_L || \lhd_{+}^{MLP_1})};MLP_2\]

\section{Conclusion}\label{sec:conclusion}
We presented the syntax and semantics of $\mG$, a novel programming language for the definition of graph neural networks, where a graph neural network is characterized as a higher-order function $\gnnws: \lblfunvtype{T_1} \rightarrow \lblfunvtype{T_2}$ between node-labeling functions. We proved the type soundness of $\mG$ and shown a graphical formalism for the representation of $\mG$ programs. The language is available as a Python library built on the TensorFlow framework, and was evaluated on the application of CTL model checking and as a means to specify some of the most popular graph neural network models.

In future implementations, we plan to expand the language in a number of ways. A typical graph neural network operation that is still overlooked in $\mG$ is pooling. A pooling operator changes the underlying graph by reducing the number of nodes, producing a coarsened version of a graph. Graph pooling operators can be generally described in terms of the operations of selection, reduction, and connection~\cite{grattarola_understanding_2022}. The selection operator groups nodes together, the reduction operator aggregates the nodes in each group into a single node, and finally the connection operator generates the edges to link the newly generated nodes. The introduction of pooling operators requires careful consideration of how they can be composed with the existing terms, but it would allow $\mG$ to deal with graph-level tasks explicitly.

The main purpose for developing $\mG$ is to be eventually able to perform formal verification of graph neural networks and generate explanations for their outputs. For the time being, there are still no procedures that allow the verification of output reachability properties of GNNs~\cite{salzer_fundamental_2022}. In the future, we aim to define an abstract interpretation approach for these tasks.

%%
%% The next two lines define the bibliography style to be used, and
%% the bibliography file.
\bibliographystyle{ACM-Reference-Format}
\bibliography{references}

%%% -*-BibTeX-*-
%%% Do NOT edit. File created by BibTeX with style
%%% ACM-Reference-Format-Journals [18-Jan-2012].

\begin{thebibliography}{38}

%%% ====================================================================
%%% NOTE TO THE USER: you can override these defaults by providing
%%% customized versions of any of these macros before the \bibliography
%%% command.  Each of them MUST provide its own final punctuation,
%%% except for \shownote{}, \showDOI{}, and \showURL{}.  The latter two
%%% do not use final punctuation, in order to avoid confusing it with
%%% the Web address.
%%%
%%% To suppress output of a particular field, define its macro to expand
%%% to an empty string, or better, \unskip, like this:
%%%
%%% \newcommand{\showDOI}[1]{\unskip}   % LaTeX syntax
%%%
%%% \def \showDOI #1{\unskip}           % plain TeX syntax
%%%
%%% ====================================================================

\ifx \showCODEN    \undefined \def \showCODEN     #1{\unskip}     \fi
\ifx \showDOI      \undefined \def \showDOI       #1{#1}\fi
\ifx \showISBNx    \undefined \def \showISBNx     #1{\unskip}     \fi
\ifx \showISBNxiii \undefined \def \showISBNxiii  #1{\unskip}     \fi
\ifx \showISSN     \undefined \def \showISSN      #1{\unskip}     \fi
\ifx \showLCCN     \undefined \def \showLCCN      #1{\unskip}     \fi
\ifx \shownote     \undefined \def \shownote      #1{#1}          \fi
\ifx \showarticletitle \undefined \def \showarticletitle #1{#1}   \fi
\ifx \showURL      \undefined \def \showURL       {\relax}        \fi
% The following commands are used for tagged output and should be
% invisible to TeX
\providecommand\bibfield[2]{#2}
\providecommand\bibinfo[2]{#2}
\providecommand\natexlab[1]{#1}
\providecommand\showeprint[2][]{arXiv:#2}

\bibitem[Abadi et~al\mbox{.}(2016)]%
        {tensorflow_2015}
\bibfield{author}{\bibinfo{person}{Mart\'{\i}n Abadi}, \bibinfo{person}{Paul Barham}, \bibinfo{person}{Jianmin Chen}, \bibinfo{person}{Zhifeng Chen}, \bibinfo{person}{Andy Davis}, \bibinfo{person}{Jeffrey Dean}, \bibinfo{person}{Matthieu Devin}, \bibinfo{person}{Sanjay Ghemawat}, \bibinfo{person}{Geoffrey Irving}, \bibinfo{person}{Michael Isard}, \bibinfo{person}{Manjunath Kudlur}, \bibinfo{person}{Josh Levenberg}, \bibinfo{person}{Rajat Monga}, \bibinfo{person}{Sherry Moore}, \bibinfo{person}{Derek~G. Murray}, \bibinfo{person}{Benoit Steiner}, \bibinfo{person}{Paul Tucker}, \bibinfo{person}{Vijay Vasudevan}, \bibinfo{person}{Pete Warden}, \bibinfo{person}{Martin Wicke}, \bibinfo{person}{Yuan Yu}, {and} \bibinfo{person}{Xiaoqiang Zheng}.} \bibinfo{year}{2016}\natexlab{}.
\newblock \showarticletitle{TensorFlow: a system for large-scale machine learning}. In \bibinfo{booktitle}{\emph{Proceedings of the 12th USENIX Conference on Operating Systems Design and Implementation}} (Savannah, GA, USA) \emph{(\bibinfo{series}{OSDI'16})}. \bibinfo{publisher}{USENIX Association}, \bibinfo{address}{USA}, \bibinfo{pages}{265–283}.
\newblock
\showISBNx{9781931971331}


\bibitem[Amparore et~al\mbox{.}(2019)]%
        {abcdg2019}
\bibfield{author}{\bibinfo{person}{Elvio Amparore}, \bibinfo{person}{Bernard Berthomieu}, \bibinfo{person}{Gianfranco Ciardo}, \bibinfo{person}{Silvano Dal~Zilio}, \bibinfo{person}{Francesco Gall{\`a}}, \bibinfo{person}{Lom~Messan Hillah}, \bibinfo{person}{Francis Hulin-Hubard}, \bibinfo{person}{Peter~Gj{\o}l Jensen}, \bibinfo{person}{Lo{\"i}g Jezequel}, \bibinfo{person}{Fabrice Kordon}, \bibinfo{person}{Didier Le~Botlan}, \bibinfo{person}{Torsten Liebke}, \bibinfo{person}{Jeroen Meijer}, \bibinfo{person}{Andrew Miner}, \bibinfo{person}{Emmanuel Paviot-Adet}, \bibinfo{person}{Ji{\v{r}}{\'i} Srba}, \bibinfo{person}{Yann Thierry-Mieg}, \bibinfo{person}{Tom van Dijk}, {and} \bibinfo{person}{Karsten Wolf}.} \bibinfo{year}{2019}\natexlab{}.
\newblock \showarticletitle{Presentation of the 9th Edition of the Model Checking Contest}. In \bibinfo{booktitle}{\emph{Tools and Algorithms for the Construction and Analysis of Systems}}, \bibfield{editor}{\bibinfo{person}{Dirk Beyer}, \bibinfo{person}{Marieke Huisman}, \bibinfo{person}{Fabrice Kordon}, {and} \bibinfo{person}{Bernhard Steffen}} (Eds.). \bibinfo{publisher}{Springer International Publishing}, \bibinfo{address}{Cham}, \bibinfo{pages}{50--68}.
\newblock
\showISBNx{978-3-030-17502-3}
\urldef\tempurl%
\url{https://doi.org/10.1007/978-3-030-17502-3_4}
\showDOI{\tempurl}


\bibitem[Athalye et~al\mbox{.}(2018)]%
        {athalye_synthesizing_2018}
\bibfield{author}{\bibinfo{person}{Anish Athalye}, \bibinfo{person}{Logan Engstrom}, \bibinfo{person}{Andrew Ilyas}, {and} \bibinfo{person}{Kevin Kwok}.} \bibinfo{year}{2018}\natexlab{}.
\newblock \bibinfo{title}{Synthesizing Robust Adversarial Examples}.
\newblock
\newblock
\showeprint[arxiv]{1707.07397}~[cs.CV]


\bibitem[Barbierato and Gatti(2024)]%
        {barbierato_challenges_2024}
\bibfield{author}{\bibinfo{person}{Enrico Barbierato} {and} \bibinfo{person}{A. Gatti}.} \bibinfo{year}{2024}\natexlab{}.
\newblock \showarticletitle{The Challenges of Machine Learning: A Critical Review}.
\newblock \bibinfo{journal}{\emph{ELECTRONICS}}  \bibinfo{volume}{13} (\bibinfo{year}{2024}), \bibinfo{pages}{1--30}.
\newblock
\showISSN{2079-9292}
\urldef\tempurl%
\url{https://doi.org/10.3390/electronics13020416}
\showDOI{\tempurl}


\bibitem[Battaglia et~al\mbox{.}(2018)]%
        {battaglia_relational_2018}
\bibfield{author}{\bibinfo{person}{Peter~W. Battaglia}, \bibinfo{person}{Jessica~B. Hamrick}, \bibinfo{person}{Victor Bapst}, \bibinfo{person}{Alvaro Sanchez-Gonzalez}, \bibinfo{person}{Vinicius Zambaldi}, \bibinfo{person}{Mateusz Malinowski}, \bibinfo{person}{Andrea Tacchetti}, \bibinfo{person}{David Raposo}, \bibinfo{person}{Adam Santoro}, \bibinfo{person}{Ryan Faulkner}, \bibinfo{person}{Caglar Gulcehre}, \bibinfo{person}{Francis Song}, \bibinfo{person}{Andrew Ballard}, \bibinfo{person}{Justin Gilmer}, \bibinfo{person}{George Dahl}, \bibinfo{person}{Ashish Vaswani}, \bibinfo{person}{Kelsey Allen}, \bibinfo{person}{Charles Nash}, \bibinfo{person}{Victoria Langston}, \bibinfo{person}{Chris Dyer}, \bibinfo{person}{Nicolas Heess}, \bibinfo{person}{Daan Wierstra}, \bibinfo{person}{Pushmeet Kohli}, \bibinfo{person}{Matt Botvinick}, \bibinfo{person}{Oriol Vinyals}, \bibinfo{person}{Yujia Li}, {and} \bibinfo{person}{Razvan Pascanu}.} \bibinfo{year}{2018}\natexlab{}.
\newblock \bibinfo{title}{Relational inductive biases, deep learning, and graph networks}.
\newblock
\newblock
\urldef\tempurl%
\url{https://doi.org/10.48550/arXiv.1806.01261}
\showDOI{\tempurl}
\newblock
\shownote{arXiv:1806.01261 [cs, stat]}.


\bibitem[Belenchia et~al\mbox{.}(2023)]%
        {belenchia_implementing_2023}
\bibfield{author}{\bibinfo{person}{Matteo Belenchia}, \bibinfo{person}{Flavio Corradini}, \bibinfo{person}{Michela Quadrini}, {and} \bibinfo{person}{Michele Loreti}.} \bibinfo{year}{2023}\natexlab{}.
\newblock \showarticletitle{Implementing a {CTL} {Model} {Checker} with {$\mu \mathcal{G}$}, a {Language} for {Programming} {Graph} {Neural} {Networks}}. In \bibinfo{booktitle}{\emph{Formal {Techniques} for {Distributed} {Objects}, {Components}, and {Systems}}} \emph{(\bibinfo{series}{Lecture {Notes} in {Computer} {Science}})}, \bibfield{editor}{\bibinfo{person}{Marieke Huisman} {and} \bibinfo{person}{António Ravara}} (Eds.). \bibinfo{publisher}{Springer Nature Switzerland}, \bibinfo{address}{Cham}, \bibinfo{pages}{37--54}.
\newblock
\showISBNx{978-3-031-35355-0}
\urldef\tempurl%
\url{https://doi.org/10.1007/978-3-031-35355-0_4}
\showDOI{\tempurl}


\bibitem[Belenchia et~al\mbox{.}(2024)]%
        {belenchia_libmg_2024}
\bibfield{author}{\bibinfo{person}{Matteo Belenchia}, \bibinfo{person}{Flavio Corradini}, \bibinfo{person}{Michela Quadrini}, {and} \bibinfo{person}{Michele Loreti}.} \bibinfo{year}{2024}\natexlab{}.
\newblock \showarticletitle{{\textsc{libmg}}: A Python library for programming graph neural networks in {$\mu \mathcal{G}$}}.
\newblock \bibinfo{journal}{\emph{Science of Computer Programming}}  \bibinfo{volume}{238} (\bibinfo{year}{2024}), \bibinfo{pages}{103165}.
\newblock
\showISSN{0167-6423}
\urldef\tempurl%
\url{https://doi.org/10.1016/j.scico.2024.103165}
\showDOI{\tempurl}


\bibitem[Bronstein et~al\mbox{.}(2021)]%
        {bronstein_geometric_2021}
\bibfield{author}{\bibinfo{person}{Michael~M. Bronstein}, \bibinfo{person}{Joan Bruna}, \bibinfo{person}{Taco Cohen}, {and} \bibinfo{person}{Petar Veličković}.} \bibinfo{year}{2021}\natexlab{}.
\newblock \bibinfo{title}{Geometric Deep Learning: Grids, Groups, Graphs, Geodesics, and Gauges}.
\newblock
\newblock
\showeprint[arxiv]{2104.13478}~[cs.LG]


\bibitem[Bunte et~al\mbox{.}(2019)]%
        {mucrl2}
\bibfield{author}{\bibinfo{person}{Olav Bunte}, \bibinfo{person}{Jan~Friso Groote}, \bibinfo{person}{Jeroen J.~A. Keiren}, \bibinfo{person}{Maurice Laveaux}, \bibinfo{person}{Thomas Neele}, \bibinfo{person}{Erik~P. de Vink}, \bibinfo{person}{Wieger Wesselink}, \bibinfo{person}{Anton Wijs}, {and} \bibinfo{person}{Tim A.~C. Willemse}.} \bibinfo{year}{2019}\natexlab{}.
\newblock \showarticletitle{The mCRL2 Toolset for Analysing Concurrent Systems}. In \bibinfo{booktitle}{\emph{Tools and Algorithms for the Construction and Analysis of Systems}}, \bibfield{editor}{\bibinfo{person}{Tom{\'a}{\v{s}} Vojnar} {and} \bibinfo{person}{Lijun Zhang}} (Eds.). \bibinfo{publisher}{Springer International Publishing}, \bibinfo{address}{Cham}, \bibinfo{pages}{21--39}.
\newblock
\showISBNx{978-3-030-17465-1}
\urldef\tempurl%
\url{https://doi.org/10.1007/978-3-030-17465-1_2}
\showDOI{\tempurl}


\bibitem[Corso et~al\mbox{.}(2024)]%
        {corso_graph_2024}
\bibfield{author}{\bibinfo{person}{Gabriele Corso}, \bibinfo{person}{Hannes Stark}, \bibinfo{person}{Stefanie Jegelka}, \bibinfo{person}{Tommi Jaakkola}, {and} \bibinfo{person}{Regina Barzilay}.} \bibinfo{year}{2024}\natexlab{}.
\newblock \showarticletitle{Graph neural networks}.
\newblock \bibinfo{journal}{\emph{Nature Reviews Methods Primers}} \bibinfo{volume}{4}, \bibinfo{number}{1} (\bibinfo{date}{07 Mar} \bibinfo{year}{2024}), \bibinfo{pages}{17}.
\newblock
\showISSN{2662-8449}
\urldef\tempurl%
\url{https://doi.org/10.1038/s43586-024-00294-7}
\showDOI{\tempurl}


\bibitem[Elango et~al\mbox{.}(2018)]%
        {elango_diesel_2018}
\bibfield{author}{\bibinfo{person}{Venmugil Elango}, \bibinfo{person}{Norm Rubin}, \bibinfo{person}{Mahesh Ravishankar}, \bibinfo{person}{Hariharan Sandanagobalane}, {and} \bibinfo{person}{Vinod Grover}.} \bibinfo{year}{2018}\natexlab{}.
\newblock \showarticletitle{Diesel: {DSL} for linear algebra and neural net computations on {GPUs}}. In \bibinfo{booktitle}{\emph{Proceedings of the 2nd {ACM} {SIGPLAN} {International} {Workshop} on {Machine} {Learning} and {Programming} {Languages}}} \emph{(\bibinfo{series}{{MAPL} 2018})}. \bibinfo{publisher}{Association for Computing Machinery}, \bibinfo{address}{New York, NY, USA}, \bibinfo{pages}{42--51}.
\newblock
\showISBNx{978-1-4503-5834-7}
\urldef\tempurl%
\url{https://doi.org/10.1145/3211346.3211354}
\showDOI{\tempurl}


\bibitem[Eykholt et~al\mbox{.}(2018)]%
        {eykholt_robust_2018}
\bibfield{author}{\bibinfo{person}{Kevin Eykholt}, \bibinfo{person}{Ivan Evtimov}, \bibinfo{person}{Earlence Fernandes}, \bibinfo{person}{Bo Li}, \bibinfo{person}{Amir Rahmati}, \bibinfo{person}{Chaowei Xiao}, \bibinfo{person}{Atul Prakash}, \bibinfo{person}{Tadayoshi Kohno}, {and} \bibinfo{person}{Dawn Song}.} \bibinfo{year}{2018}\natexlab{}.
\newblock \bibinfo{title}{Robust Physical-World Attacks on Deep Learning Models}.
\newblock
\newblock
\showeprint[arxiv]{1707.08945}~[cs.CR]


\bibitem[García~Díaz et~al\mbox{.}(2015)]%
        {garcia_diaz_towards_2015}
\bibfield{author}{\bibinfo{person}{Vicente García~Díaz}, \bibinfo{person}{Jordán Espada}, \bibinfo{person}{B. Pelayo García-Bustelo}, {and} \bibinfo{person}{Juan Cueva~Lovelle}.} \bibinfo{year}{2015}\natexlab{}.
\newblock \showarticletitle{Towards a {Standard}-based {Domain}-specific {Platform} to {Solve} {Machine} {Learning}-based {Problems}}.
\newblock \bibinfo{journal}{\emph{International Journal of Artificial Intelligence and Interactive Multimedia}}  \bibinfo{volume}{3} (\bibinfo{date}{Jan.} \bibinfo{year}{2015}), \bibinfo{pages}{6--12}.
\newblock
\urldef\tempurl%
\url{https://doi.org/10.9781/ijimai.2015.351}
\showDOI{\tempurl}


\bibitem[Gilmer et~al\mbox{.}(2017)]%
        {gilmer_neural_2017}
\bibfield{author}{\bibinfo{person}{Justin Gilmer}, \bibinfo{person}{Samuel~S. Schoenholz}, \bibinfo{person}{Patrick~F. Riley}, \bibinfo{person}{Oriol Vinyals}, {and} \bibinfo{person}{George~E. Dahl}.} \bibinfo{year}{2017}\natexlab{}.
\newblock \showarticletitle{Neural message passing for {Quantum} chemistry}. In \bibinfo{booktitle}{\emph{Proceedings of the 34th {International} {Conference} on {Machine} {Learning} - {Volume} 70}} \emph{(\bibinfo{series}{{ICML}'17})}. \bibinfo{publisher}{JMLR.org}, \bibinfo{address}{Sydney, NSW, Australia}, \bibinfo{pages}{1263--1272}.
\newblock


\bibitem[Grattarola and Alippi(2020)]%
        {grattarola_graph_2021}
\bibfield{author}{\bibinfo{person}{Daniele Grattarola} {and} \bibinfo{person}{Cesare Alippi}.} \bibinfo{year}{2020}\natexlab{}.
\newblock \showarticletitle{Graph Neural Networks in TensorFlow and Keras with Spektral}.
\newblock \bibinfo{journal}{\emph{IEEE Comput. Intell. Mag.}}  \bibinfo{volume}{16} (\bibinfo{year}{2020}), \bibinfo{pages}{99--106}.
\newblock
\urldef\tempurl%
\url{https://doi.org/10.1109/mci.2020.3039072}
\showDOI{\tempurl}


\bibitem[Grattarola et~al\mbox{.}(2022)]%
        {grattarola_understanding_2022}
\bibfield{author}{\bibinfo{person}{Daniele Grattarola}, \bibinfo{person}{Daniele Zambon}, \bibinfo{person}{Filippo~Maria Bianchi}, {and} \bibinfo{person}{Cesare Alippi}.} \bibinfo{year}{2022}\natexlab{}.
\newblock \showarticletitle{Understanding {Pooling} in {Graph} {Neural} {Networks}}.
\newblock \bibinfo{journal}{\emph{IEEE Transactions on Neural Networks and Learning Systems}} \bibinfo{volume}{35}, \bibinfo{number}{2} (\bibinfo{date}{July} \bibinfo{year}{2022}), \bibinfo{pages}{2708--2718}.
\newblock
\showISSN{2162-2388}
\urldef\tempurl%
\url{https://doi.org/10.1109/TNNLS.2022.3190922}
\showDOI{\tempurl}
\newblock
\shownote{Conference Name: IEEE Transactions on Neural Networks and Learning Systems}.


\bibitem[Han et~al\mbox{.}(2023)]%
        {sicong_interpreting_2023}
\bibfield{author}{\bibinfo{person}{Sicong Han}, \bibinfo{person}{Chenhao Lin}, \bibinfo{person}{Chao Shen}, \bibinfo{person}{Qian Wang}, {and} \bibinfo{person}{Xiaohong Guan}.} \bibinfo{year}{2023}\natexlab{}.
\newblock \showarticletitle{Interpreting Adversarial Examples in Deep Learning: A Review}.
\newblock \bibinfo{journal}{\emph{ACM Comput. Surv.}} \bibinfo{volume}{55}, \bibinfo{number}{14s}, Article \bibinfo{articleno}{328} (\bibinfo{date}{jul} \bibinfo{year}{2023}), \bibinfo{numpages}{38}~pages.
\newblock
\showISSN{0360-0300}
\urldef\tempurl%
\url{https://doi.org/10.1145/3594869}
\showDOI{\tempurl}


\bibitem[Huang et~al\mbox{.}(2020)]%
        {huang_survey_2020}
\bibfield{author}{\bibinfo{person}{Xiaowei Huang}, \bibinfo{person}{Daniel Kroening}, \bibinfo{person}{Wenjie Ruan}, \bibinfo{person}{James Sharp}, \bibinfo{person}{Youcheng Sun}, \bibinfo{person}{Emese Thamo}, \bibinfo{person}{Min Wu}, {and} \bibinfo{person}{Xinping Yi}.} \bibinfo{year}{2020}\natexlab{}.
\newblock \bibinfo{title}{A {Survey} of {Safety} and {Trustworthiness} of {Deep} {Neural} {Networks}: {Verification}, {Testing}, {Adversarial} {Attack} and {Defence}, and {Interpretability}}.
\newblock
\newblock
\urldef\tempurl%
\url{http://arxiv.org/abs/1812.08342}
\showURL{%
\tempurl}
\newblock
\shownote{arXiv:1812.08342 [cs]}.


\bibitem[Jahić et~al\mbox{.}(2023)]%
        {jahic_semkis-dsl_2023}
\bibfield{author}{\bibinfo{person}{Benjamin Jahić}, \bibinfo{person}{Nicolas Guelfi}, {and} \bibinfo{person}{Benoît Ries}.} \bibinfo{year}{2023}\natexlab{}.
\newblock \showarticletitle{{SEMKIS}-{DSL}: {A} {Domain}-{Specific} {Language} to {Support} {Requirements} {Engineering} of {Datasets} and {Neural} {Network} {Recognition}}.
\newblock \bibinfo{journal}{\emph{Information}} \bibinfo{volume}{14}, \bibinfo{number}{4} (\bibinfo{date}{April} \bibinfo{year}{2023}), \bibinfo{pages}{213}.
\newblock
\showISSN{2078-2489}
\urldef\tempurl%
\url{https://doi.org/10.3390/info14040213}
\showDOI{\tempurl}
\newblock
\shownote{Number: 4 Publisher: Multidisciplinary Digital Publishing Institute}.


\bibitem[Kearns(1998)]%
        {kearns_efficient_1998}
\bibfield{author}{\bibinfo{person}{Michael Kearns}.} \bibinfo{year}{1998}\natexlab{}.
\newblock \showarticletitle{Efficient noise-tolerant learning from statistical queries}.
\newblock \bibinfo{journal}{\emph{J. ACM}} \bibinfo{volume}{45}, \bibinfo{number}{6} (\bibinfo{date}{Nov.} \bibinfo{year}{1998}), \bibinfo{pages}{983--1006}.
\newblock
\showISSN{0004-5411}
\urldef\tempurl%
\url{https://doi.org/10.1145/293347.293351}
\showDOI{\tempurl}


\bibitem[Kipf and Welling(2017)]%
        {kipf_semi-supervised_2017}
\bibfield{author}{\bibinfo{person}{Thomas~N. Kipf} {and} \bibinfo{person}{Max Welling}.} \bibinfo{year}{2017}\natexlab{}.
\newblock \bibinfo{title}{Semi-Supervised Classification with Graph Convolutional Networks}.
\newblock
\newblock
\showeprint[arxiv]{1609.02907}~[cs.LG]


\bibitem[Lamb et~al\mbox{.}(2020)]%
        {lamb_graph_2020}
\bibfield{author}{\bibinfo{person}{Luís~C. Lamb}, \bibinfo{person}{Artur~d’Avila Garcez}, \bibinfo{person}{Marco Gori}, \bibinfo{person}{Marcelo~O.R. Prates}, \bibinfo{person}{Pedro~H.C. Avelar}, {and} \bibinfo{person}{Moshe~Y. Vardi}.} \bibinfo{year}{2020}\natexlab{}.
\newblock \showarticletitle{Graph Neural Networks Meet Neural-Symbolic Computing: A Survey and Perspective}. In \bibinfo{booktitle}{\emph{Proceedings of the Twenty-Ninth International Joint Conference on Artificial Intelligence, {IJCAI-20}}}, \bibfield{editor}{\bibinfo{person}{Christian Bessiere}} (Ed.). \bibinfo{publisher}{International Joint Conferences on Artificial Intelligence Organization}, \bibinfo{address}{Yokohama, Yokohama, Japan}, \bibinfo{pages}{4877--4884}.
\newblock
\urldef\tempurl%
\url{https://doi.org/10.24963/ijcai.2020/679}
\showDOI{\tempurl}
\newblock
\shownote{Survey track}.


\bibitem[Marcus(2018a)]%
        {marcus_deep_2019}
\bibfield{author}{\bibinfo{person}{Gary Marcus}.} \bibinfo{year}{2018}\natexlab{a}.
\newblock \bibinfo{title}{Deep Learning: {A} Critical Appraisal}.
\newblock
\newblock
\showeprint[arXiv]{1801.00631}
\urldef\tempurl%
\url{http://arxiv.org/abs/1801.00631}
\showURL{%
\tempurl}


\bibitem[Marcus(2018b)]%
        {marcus_innateness_2018}
\bibfield{author}{\bibinfo{person}{Gary Marcus}.} \bibinfo{year}{2018}\natexlab{b}.
\newblock \bibinfo{title}{Innateness, AlphaZero, and Artificial Intelligence}.
\newblock
\newblock
\showeprint[arxiv]{1801.05667}~[cs.AI]


\bibitem[Marcus(2020)]%
        {marcus_next_2020}
\bibfield{author}{\bibinfo{person}{Gary Marcus}.} \bibinfo{year}{2020}\natexlab{}.
\newblock \bibinfo{title}{The {Next} {Decade} in {AI}: {Four} {Steps} {Towards} {Robust} {Artificial} {Intelligence}}.
\newblock
\newblock
\urldef\tempurl%
\url{https://arxiv.org/abs/2002.06177v3}
\showURL{%
\tempurl}
\newblock
\shownote{arXiv: 2002.06177}.


\bibitem[Morris et~al\mbox{.}(2021)]%
        {morris_power_2021}
\bibfield{author}{\bibinfo{person}{Christopher Morris}, \bibinfo{person}{Matthias Fey}, {and} \bibinfo{person}{Nils Kriege}.} \bibinfo{year}{2021}\natexlab{}.
\newblock \showarticletitle{The {Power} of the {Weisfeiler}-{Leman} {Algorithm} for {Machine} {Learning} with {Graphs}}. In \bibinfo{booktitle}{\emph{Proceedings of the {Thirtieth} {International} {Joint} {Conference} on {Artificial} {Intelligence}}}. \bibinfo{publisher}{ijcai.org}, \bibinfo{address}{Montreal, Canada}, \bibinfo{pages}{4543--4550}.
\newblock
\urldef\tempurl%
\url{https://doi.org/10.24963/ijcai.2021/618}
\showDOI{\tempurl}
\newblock
\shownote{Conference Name: Thirtieth International Joint Conference on Artificial Intelligence \{IJCAI-21\} ISBN: 9780999241196 Place: Montreal, Canada Publisher: International Joint Conferences on Artificial Intelligence Organization}.


\bibitem[Podobas et~al\mbox{.}(2021)]%
        {podobas_streambrain_2021}
\bibfield{author}{\bibinfo{person}{Artur Podobas}, \bibinfo{person}{Martin Svedin}, \bibinfo{person}{Steven W.~D. Chien}, \bibinfo{person}{Ivy~B. Peng}, \bibinfo{person}{Naresh~Balaji Ravichandran}, \bibinfo{person}{Pawel Herman}, \bibinfo{person}{Anders Lansner}, {and} \bibinfo{person}{Stefano Markidis}.} \bibinfo{year}{2021}\natexlab{}.
\newblock \showarticletitle{StreamBrain: An HPC Framework for Brain-like Neural Networks on CPUs, GPUs and FPGAs}. In \bibinfo{booktitle}{\emph{Proceedings of the 11th International Symposium on Highly Efficient Accelerators and Reconfigurable Technologies}} (Online, Germany) \emph{(\bibinfo{series}{HEART '21})}. \bibinfo{publisher}{Association for Computing Machinery}, \bibinfo{address}{New York, NY, USA}, Article \bibinfo{articleno}{8}, \bibinfo{numpages}{6}~pages.
\newblock
\showISBNx{9781450385497}
\urldef\tempurl%
\url{https://doi.org/10.1145/3468044.3468052}
\showDOI{\tempurl}


\bibitem[Portugal et~al\mbox{.}(2016)]%
        {portugal_preliminary_2016}
\bibfield{author}{\bibinfo{person}{Ivens Portugal}, \bibinfo{person}{Paulo Alencar}, {and} \bibinfo{person}{Donald Cowan}.} \bibinfo{year}{2016}\natexlab{}.
\newblock \showarticletitle{A {Preliminary} {Survey} on {Domain}-{Specific} {Languages} for {Machine} {Learning} in {Big} {Data}}. In \bibinfo{booktitle}{\emph{2016 {IEEE} {International} {Conference} on {Software} {Science}, {Technology} and {Engineering} ({SWSTE})}}. \bibinfo{publisher}{IEEE}, \bibinfo{address}{Beer Sheva, Israel}, \bibinfo{pages}{108--110}.
\newblock
\urldef\tempurl%
\url{https://doi.org/10.1109/SWSTE.2016.23}
\showDOI{\tempurl}


\bibitem[Sarker(2021)]%
        {sarker_deep_2021}
\bibfield{author}{\bibinfo{person}{Iqbal~H. Sarker}.} \bibinfo{year}{2021}\natexlab{}.
\newblock \showarticletitle{Deep Learning: A Comprehensive Overview on Techniques, Taxonomy, Applications and Research Directions}.
\newblock \bibinfo{journal}{\emph{SN Computer Science}} \bibinfo{volume}{2}, \bibinfo{number}{6} (\bibinfo{date}{18 Aug} \bibinfo{year}{2021}), \bibinfo{pages}{420}.
\newblock
\showISSN{2661-8907}
\urldef\tempurl%
\url{https://doi.org/10.1007/s42979-021-00815-1}
\showDOI{\tempurl}


\bibitem[Scarselli et~al\mbox{.}(2009)]%
        {scarselli_graph_2009}
\bibfield{author}{\bibinfo{person}{Franco Scarselli}, \bibinfo{person}{Marco Gori}, \bibinfo{person}{Ah~Chung Tsoi}, \bibinfo{person}{Markus Hagenbuchner}, {and} \bibinfo{person}{Gabriele Monfardini}.} \bibinfo{year}{2009}\natexlab{}.
\newblock \showarticletitle{The {Graph} {Neural} {Network} {Model}}.
\newblock \bibinfo{journal}{\emph{IEEE Transactions on Neural Networks}} \bibinfo{volume}{20}, \bibinfo{number}{1} (\bibinfo{date}{Jan.} \bibinfo{year}{2009}), \bibinfo{pages}{61--80}.
\newblock
\showISSN{1941-0093}
\urldef\tempurl%
\url{https://doi.org/10.1109/TNN.2008.2005605}
\showDOI{\tempurl}
\newblock
\shownote{Conference Name: IEEE Transactions on Neural Networks}.


\bibitem[Sujeeth et~al\mbox{.}(2011)]%
        {sujeeth_optiml_2011}
\bibfield{author}{\bibinfo{person}{Arvind~K. Sujeeth}, \bibinfo{person}{HyoukJoong Lee}, \bibinfo{person}{Kevin~J. Brown}, \bibinfo{person}{Hassan Chafi}, \bibinfo{person}{Michael Wu}, \bibinfo{person}{Anand~R. Atreya}, \bibinfo{person}{Kunle Olukotun}, \bibinfo{person}{Tiark Rompf}, {and} \bibinfo{person}{Martin Odersky}.} \bibinfo{year}{2011}\natexlab{}.
\newblock \showarticletitle{{OptiML}: an implicitly parallel domain-specific language for machine learning}. In \bibinfo{booktitle}{\emph{Proceedings of the 28th {International} {Conference} on {International} {Conference} on {Machine} {Learning}}} \emph{(\bibinfo{series}{{ICML}'11})}. \bibinfo{publisher}{Omnipress}, \bibinfo{address}{Madison, WI, USA}, \bibinfo{pages}{609--616}.
\newblock
\showISBNx{978-1-4503-0619-5}


\bibitem[Susskind et~al\mbox{.}(2021)]%
        {susskind_neuro-symbolic_2021}
\bibfield{author}{\bibinfo{person}{Zachary Susskind}, \bibinfo{person}{Bryce Arden}, \bibinfo{person}{Lizy~K. John}, \bibinfo{person}{Patrick Stockton}, {and} \bibinfo{person}{Eugene~B. John}.} \bibinfo{year}{2021}\natexlab{}.
\newblock \bibinfo{title}{Neuro-Symbolic {AI:} An Emerging Class of {AI} Workloads and their Characterization}.
\newblock
\newblock
\showeprint[arXiv]{2109.06133}
\urldef\tempurl%
\url{https://arxiv.org/abs/2109.06133}
\showURL{%
\tempurl}


\bibitem[Sälzer and Lange(2022)]%
        {salzer_fundamental_2022}
\bibfield{author}{\bibinfo{person}{Marco Sälzer} {and} \bibinfo{person}{Martin Lange}.} \bibinfo{year}{2022}\natexlab{}.
\newblock \bibinfo{title}{Fundamental {Limits} in {Formal} {Verification} of {Message}-{Passing} {Neural} {Networks}}.
\newblock
\newblock
\urldef\tempurl%
\url{https://openreview.net/forum?id=WlbG820mRH-}
\showURL{%
\tempurl}


\bibitem[Veličković et~al\mbox{.}(2018)]%
        {velickovic_graph_2018}
\bibfield{author}{\bibinfo{person}{Petar Veličković}, \bibinfo{person}{Guillem Cucurull}, \bibinfo{person}{Arantxa Casanova}, \bibinfo{person}{Adriana Romero}, \bibinfo{person}{Pietro Liò}, {and} \bibinfo{person}{Yoshua Bengio}.} \bibinfo{year}{2018}\natexlab{}.
\newblock \bibinfo{title}{Graph Attention Networks}.
\newblock
\newblock
\showeprint[arxiv]{1710.10903}~[stat.ML]


\bibitem[Weisfeiler and Leman(1968)]%
        {weisfeiler_reduction_1968}
\bibfield{author}{\bibinfo{person}{Boris~Yulievich Weisfeiler} {and} \bibinfo{person}{Andrey~Aleksandrovich Leman}.} \bibinfo{year}{1968}\natexlab{}.
\newblock \showarticletitle{The {Reduction} of a {Graph} to {Canonical} {Form} and the {Algebra} which {Appears} therein}.
\newblock \bibinfo{journal}{\emph{Nauchno-Technicheskaya Informatsia}}  \bibinfo{volume}{9} (\bibinfo{year}{1968}), \bibinfo{pages}{12 -- 16}.
\newblock


\bibitem[Xu et~al\mbox{.}(2019)]%
        {xu_how_2019}
\bibfield{author}{\bibinfo{person}{Keyulu Xu}, \bibinfo{person}{Weihua Hu}, \bibinfo{person}{Jure Leskovec}, {and} \bibinfo{person}{Stefanie Jegelka}.} \bibinfo{year}{2019}\natexlab{}.
\newblock \bibinfo{title}{How Powerful are Graph Neural Networks?}
\newblock
\newblock
\showeprint[arxiv]{1810.00826}~[cs.LG]


\bibitem[Zhao et~al\mbox{.}(2017)]%
        {zhao_deepdsl_2017}
\bibfield{author}{\bibinfo{person}{Tian Zhao}, \bibinfo{person}{Xiaobing Huang}, {and} \bibinfo{person}{Yu Cao}.} \bibinfo{year}{2017}\natexlab{}.
\newblock \bibinfo{title}{DeepDSL: A Compilation-based Domain-Specific Language for Deep Learning}.
\newblock
\newblock
\showeprint[arxiv]{1701.02284}~[cs.PL]


\bibitem[Zhou et~al\mbox{.}(2020)]%
        {zhou_graph_2020}
\bibfield{author}{\bibinfo{person}{Jie Zhou}, \bibinfo{person}{Ganqu Cui}, \bibinfo{person}{Shengding Hu}, \bibinfo{person}{Zhengyan Zhang}, \bibinfo{person}{Cheng Yang}, \bibinfo{person}{Zhiyuan Liu}, \bibinfo{person}{Lifeng Wang}, \bibinfo{person}{Changcheng Li}, {and} \bibinfo{person}{Maosong Sun}.} \bibinfo{year}{2020}\natexlab{}.
\newblock \showarticletitle{Graph neural networks: {A} review of methods and applications}.
\newblock \bibinfo{journal}{\emph{AI Open}}  \bibinfo{volume}{1} (\bibinfo{date}{Jan.} \bibinfo{year}{2020}), \bibinfo{pages}{57--81}.
\newblock
\showISSN{2666-6510}
\urldef\tempurl%
\url{https://doi.org/10.1016/j.aiopen.2021.01.001}
\showDOI{\tempurl}


\end{thebibliography}

%%
%% If your work has an appendix, this is the place to put it.

\end{document}